\DeclareMathOperator*{\argmin}{arg\,min} 
\DeclareMathOperator{\erf}{erf}
\DeclareMathOperator{\erfc}{erfc}
\newcommand{\Var}[1]{\mathrm{Var}\left(#1\right)} 
\newcommand{\abs}[1]{\left| #1 \right|}
\newcommand{\norm}[1]{\left\lVert #1 \right\rVert}
\newcommand{\trace}[1]{\text{tr} \left( #1 \right)}
\newcommand{\erff}[1]{\erf \left( #1 \right)}
\newcommand{\erfcf}[1]{\erfc \left( #1 \right)}
\newcommand{\expf}[1]{\exp \left( #1 \right)}
\newtheorem{theorem}{Theorem}
\newtheorem{lemma}[theorem]{Lemma}
\newtheorem{proposition}[theorem]{Proposition}
\theoremstyle{definition}
\newtheorem{definition}{Definition}
\newtheorem*{remark}{Remark}
\newcommand{\ssnote}{\textcolor{black}}
\title{\LARGE \bf
Learning-Based Adaptive Control for Stochastic Linear Systems with Input Constraints
}
\author{Seth Siriya, Jingge Zhu, Dragan Ne\v{s}i\'{c}, and Ye Pu 
\thanks{The work of Seth Siriya was supported by the Australian Government Research Training Program (RTP) Scholarship. The work of Dragan Ne\v{s}i\'{c} was supported by the Australian Research Council through the Discovery Project under Grant
DP210102600. The work of Ye Pu was supported by the Australian Research Council under Project DE220101527.}
\thanks{The authors are with the Department of Electrical and Electronic Engineering, University of Melbourne, Parkville, VIC 3010, Australia (e-mail: {\tt\small \href{mailto:ssiriya@student.unimelb.edu.au}{ ssiriya@student.unimelb.edu.au}, \{\href{mailto:jingge.zhu@unimelb.edu.au}{jingge.zhu}, \href{mailto:dnesic@unimelb.edu.au}{dnesic}, \href{mailto:ye.pu@unimelb.edu.au}{ye.pu}\}@unimelb.edu.au}).}
}
\begin{document}

\maketitle
\thispagestyle{empty}
\pagestyle{empty}

\begin{abstract}
We propose a certainty-equivalence scheme for adaptive control of scalar linear systems subject to additive, i.i.d. Gaussian disturbances and bounded control input constraints, without requiring prior knowledge of the bounds of the system parameters, nor the control direction. Assuming that the system is at-worst marginally stable, mean square boundedness of the closed-loop system states is proven. Lastly, numerical examples are presented to illustrate our results. 
\end{abstract}

\section{Introduction}
Adaptive control is useful for stabilizing dynamical systems with known model structure but unknown parameters. However, a major problem that arises when deploying controllers is actuator saturation, which, when unaccounted for, can result in failure to achieve stability. Moreover, in some systems, large, stochastic disturbances may occasionally perturb the system, which needs to be considered in controller design. This motivates the need to develop controllers which can stabilize systems with unknown parameters, whilst simultaneously handling both control input constraints and additive, unbounded, stochastic disturbances.

\ssnote{In recent years, there has been renewed interest in discrete-time (DT) stochastic adaptive control. One reason is the recent successes in online model-based reinforcement learning --- especially for the online linear quadratic regulation (LQR) task, where the goal is to apply controls on an unknown linear stochastic system to minimize regret with respect to the optimal LQR controller in a single trajectory (e.g. see \cite{lale2022reinforcement, kargin2022thompson, simchowitz2020naive}). However, input constraints are not considered in these works. Recent results in \cite{li2021safe} address state and input constraints, but assume bounded disturbances, and require an \textit{a priori} known controller guaranteeing stability and constraint satisfaction. DT stochastic extremum seeking (ES) results have shown promise, and recently been applied beyond steady-state input-output maps to stabilize open-loop unstable systems (see \cite{ radenkovic2016stochastic, radenkovic2018extremum2}), but they also do not consider input constraints. Looking back to classic results in stochastic adaptive control such as \cite{goodwin1981discrete, meyn1987new, guo1996self}, many challenges have been addressed, but stochastic stability results considering bounded control constraints with unbounded disturbances for marginally stable plants are missing, despite almost all real systems having actuator constraints.}

Beyond DT stochastic adaptive control, various works consider input-constrained linear systems. \ssnote{Although seemingly simple, the stability analysis of controllers for input-constrained, marginally stable, linear plants, with unbounded disturbances, is non-trivial, due to their nonlinear, stochastic, closed-loop dynamics. Results reporting mean square boundedness with arbitrary positive input constraints for known systems were not available until after 2012 in \cite{chatterjee2012mean} and \cite{mishra2017output}.}
The adaptive control of unknown, DT output-feedback linear systems subject to bounded control constraints and bounded disturbances is considered in \cite{annaswamy1995discrete}, \cite{feng2001robust}, \cite{chaoui2001adaptive}, and \cite{zhang2001adaptive}. These works derive deterministic guarantees on at least the boundedness of the output under various conditions, but require bounded disturbances.

Although adjacent settings have been considered, to the best of our knowledge, no works address the problem of adaptive control for DT linear systems subject to control input constraints and unbounded disturbances with proven stability guarantees. We move towards filling this gap by addressing this task specifically for \textit{scalar}, at-worst marginally stable linear systems, with additive i.i.d. zero-mean Gaussian disturbances. Our main contributions are twofold:

Firstly, we propose a certainty-equivalence (CE) adaptive control scheme comprised of an ordinary least squares-based plant parameter estimator component, in connection with an excited, saturated deadbeat controller based on the estimated parameters. Our controller is capable of satisfying any positive upper bound constraint on the magnitude of the control input. Moreover, it does not assume prior knowledge of bounds for the system parameters, nor does it require knowledge of the control direction --- that is, the sign of the control parameter, which is a common assumption in adaptive control.

Secondly, we establish the mean square boundedness of the closed-loop system when applying our proposed our control scheme to the system of interest. Despite the restricted problem setting, it is still non-trivial since saturated controls render the system nonlinear, and in general CE control does not stabilize nonlinear systems \cite{krstic1995nonlinear}. We overcome this difficulty by showing that our control scheme satisfies sufficient excitation conditions required to establish an upper bound on the probability that the parameter estimate lies outside a small ball around the true parameter, making use of results in non-asymptotic learning from \cite{simchowitz2018learning}. We subequently show that this upper bound decays sufficiently fast, allowing us to prove via a novel analysis that mean square boundedness holds. Typically, persistence of excitation is assumed in the control literature to establish parameter convergence, whereas we explicitly demonstrate satisfaction of excitation conditions, which is non-trivial even in the scalar case due to the nonlinear, stochastic nature of our system. Moreover, establishing mean square boundedness is difficult due to the unbounded nature of the disturbances and saturated controls, requiring specialized results from \cite{pemantle1999moment} in our analysis.

\paragraph*{Notation} Let $\mathbb{N}$ denote the set of natural numbers, and $\mathbb{N}_0 := \mathbb{N}\cup \{0\}$. Let $\mathbb{R}$ denote the real numbers, and $\mathbb{R}_{\geq 0} := [0,\infty)$. For $x \in \mathbb{R}$, we define $x^+:=\max(0,x)$. It has the properties 1) $x = x^+ - (-x)^+$ and 2) $x^+(-x)^+=0$ for all $x \in \mathbb{R}$. Let $\mathcal{S}^{d-1}$ denote the unit sphere in $\mathbb{R}^d$. For $r>0$, we define the saturation function $\sigma_r:\mathbb{R}\rightarrow\mathbb{R}$ by $\sigma_r(x) := x$ if $\abs{x} \leq r$, and $\sigma_r(x) := r x / \abs{x}$ if  $\abs{x} > r$. For a square matrix $A \in \mathbb{R}^{d \times d}$, let $\lambda_{\text{min}}(A)$ and $\lambda_{\text{max}}(A)$ denote the minimum and maximum eigenvalue of $A$ respectively. For symmetric matrices $A,B \in \mathbb{R}^{d \times d}$, we denote $ A \prec B$ ($\preceq$) if $A-B$ is negative definite (semi-definite). Let $\erff{\cdot}$ denote the error function, and $\erfcf{\cdot}$ denote the complementary error function. Consider a probability space $(\Omega,\mathcal{F}, P)$, and a random variable $X:\Omega \rightarrow \mathbb{R}$, sub-sigma-algebra $\mathcal{G}\subseteq \mathcal{F}$, and events $A,B \in \mathcal{F}$ defined on this space. Let $\mathbb{E}[\cdot]$ denote the expectation operator. We say $X | \mathcal{G}$ is $\Sigma^2$-sub-Gaussian if  $\mathbb{E}[e^{\lambda X} | \mathcal{G}] \leq e^{\Sigma^2 \lambda^2 /2}$ for all $\lambda \in \mathbb{R}$. For an event $A \in \mathcal{F}$, we define the indicator function $\bm{1}_A:\Omega \rightarrow \{0,1\}$ as $\bm{1}_A := 1$ on the event $A$, and $\bm{1}_A := 0$ on the event $A^C$. If $X$ takes values in $\mathbb{R}_{\geq 0}$, then $X \bm{1}_{A \cup B} = \max (X \bm{1}_{A}, X \bm{1}_B)$ holds. \ssnote{$\dagger$ denotes the Moore-Penrose inverse.}

\section{Problem Setup}
Consider the stochastic scalar linear system:
\begin{equation}\label{eqn:scalar-system}
    X_{t+1} = aX_t + bU_t + W_t, \ t \in \mathbb{N}_0, \ X_0 = x_0
\end{equation}
where the random sequences $(X_t)_{t\in \mathbb{N}_0}$, $(U_t)_{t\in \mathbb{N}_0}$ and $(W_t)_{t \in \mathbb{N}_0}$ are the states, controls and disturbances taking values in $\mathbb{R}$, $x_0 \in \mathbb{R}$ is the initial state, and $\theta_*=(a,b) \in \mathbb{R}^2$ are the system parameters. Throughout the paper, all random variables are defined on a probability space $(\Omega,\mathcal{F},P)$. Moreover, denote the $i$-th moment of the disturbance as $S_i := \mathbb{E}[\abs{W_t}^i]$ for $i \in \mathbb{N}$. We make the following assumptions on the system in \eqref{eqn:scalar-system}.

\begin{description}
    \item[A1.] The disturbance sequence $(W_t)_{t \in \mathbb{N}_0}$ is sampled $W_t \stackrel{\text{i.i.d.}}{\sim} \mathcal{N}(0,\Sigma_W^2)$ where $\Sigma_W^2 > 0$ is the variance;
    \item[A2.] The system parameters $(a,b)$ satisfy $\abs{a} \leq 1$ and $b \neq 0$.
\end{description}

\begin{remark}
    A1 is selected since Gaussian random variables practically model many types of disturbances due to their unbounded support, which can represent rare events that cause arbitrarily large disturbances in real systems. A2 ensures the existence of control policies with bounded control constraints that render the system mean square bounded, as proven in \cite{chatterjee2012mean}. Heuristically speaking, this is because A2 ensures global null-controllability in the deterministic setting, which is intuitively important because A1 can cause arbitrarily large jumps in the state.
\end{remark}

Our goal is to formulate an adaptive control policy $(\pi_t)_{t \in \mathbb{N}_0}$ such that $\pi_t$ is a mapping from current and past state and control input data $(X_0,\hdots,X_t,U_0,\hdots,U_{t-1})$ and a randomizaton term $V_t$ to $\mathbb{R}$ for $t \in \mathbb{N}_0$, where $(V_t)_{t\in\mathbb{N}_0}$ taking values in $\mathbb{R}$ is an i.i.d. random sequence. We allow for stochastic policies (i.e. dependence on $V_t$) to excite the system and facilitate parameter convergence. Moreover, we require as part of our design that $\pi_t$ does not depend on the system parameters $(a,b)$, and that the following requirements are satisfied on the closed-loop system with $U_t = \pi_t(X_0,\hdots,X_t,U_0,\hdots,U_{t-1},V_t)$ for $t \in \mathbb{N}_0$:
\begin{description}
    \item[G1.] The magnitude of the control input remains bounded by a desired constraint level: $|U_t| \leq U_{\text{max}}$ for all $t \in \mathbb{N}_{0}$, where $U_{\text{max}} > 0$ is a user specified constraint level;
    \item[G2.] For the stochastic process $(X_t)_{t\in \mathbb{N}_0}$, mean-square boundedness is achieved: $\exists e > 0 : \sup_{t \in \mathbb{N}_0} \mathbb{E}\left[ X_t^2 \right] \leq e$.
\end{description}
In practice, $U_{\text{max}}$ is chosen based on the maximum control input the actuator can provide.

\section{Method and Main Result}
The control strategy we employ to achieve G1 and G2 is summarized in Algorithm \ref{alg:ideal-controller-v2}.

\begin{algorithm}[H]
\caption{AICMSS (Adaptive Input-Constrained Mean Square Stabilization)}
\begin{algorithmic}[1]
    \STATE \textbf{Inputs:} $U_{\text{max}} > 0$, $C < U_{\text{max}}$, \ssnote{$\hat{a}_{\text{init}} \in \mathbb{R}$, $\hat{b}_{\text{init}} \in \mathbb{R} \backslash \{ 0 \}$}
    \STATE $D \leftarrow U_{\text{max}} - C $
    \STATE Measure $X_0$
    \FOR{$t = 0,1,\hdots$}
        \STATE Sample $V_t \stackrel{\text{i.i.d.}}{\sim} \text{Uniform}([-C,C])$
        \STATE Compute control $U_t$ following \eqref{eqn:control-policy-v2}
        \STATE Apply $U_t$ to \eqref{eqn:scalar-system}
        \STATE Measure $X_{t+1}$ from \eqref{eqn:scalar-system}
        \STATE \ssnote{\textbf{if} $t \geq 1$ \textbf{then} Compute $(\hat{a}_{t},\hat{b}_{t})$ following \eqref{eqn:parameter-estimate}}
    \ENDFOR
\end{algorithmic}\label{alg:ideal-controller-v2}
\end{algorithm}

We now describe our strategy in greater detail. The sequence of control inputs $(U_t)_{t \in \mathbb{N}_0}$ is given by the policy:
\begin{align}
    U_t &:=  \sigma_D\left(\ssnote{G_t} X_t\right) + V_t, \label{eqn:control-policy-v2}
\end{align}
for $t \in \mathbb{N}_0$, where \ssnote{$G_t := -\hat{a}_{\text{init}}/\hat{b}_{\text{init}}$ for $t \leq 1$ and $G_t := -\hat{a}_{t-1}/\hat{b}_{t-1}$ for $t \geq 2$ is the \textit{gain factor}}, and $V_t \stackrel{\text{i.i.d.}}{\sim}\text{Uniform}([-C,C])$ is an additive \textit{excitation term} \ssnote{independent of $W_t$}. Moreover, $\hat{\theta}_t = (\hat{a}_t,\hat{b}_t) \in \mathbb{R}^2$ is the parameter estimate at time $t \in \mathbb{N}$ obtained via least squares estimation:
\begin{align}
    \hat{\theta}_t &\ssnote{= (\bm{Z}_t^{\top} \bm{Z}_t)^{\dagger}\bm{Z}_t^{\top} \bm{X}_{t+1}} \\
    &\in \arg \min_{\theta \in \mathbb{R}^{2}} \sum_{s=1}^t \norm{\ssnote{X_{s+1}} - \theta^{\top} Z_s}_2^2, \label{eqn:parameter-estimate}
\end{align}
where $\ssnote{(Z_t)_{t \in \mathbb{N}_0}} \subseteq \mathbb{R}^2$ is the state-input data sequence:
\begin{equation}
    Z_t := \ssnote{(X_{t},U_{t}), \quad t \in \mathbb{N}_0}, \label{eqn:state-input-data}
\end{equation}
\ssnote{and $\bm{Z}_t=[Z_1,\hdots,Z_t]^{\top}$, $\bm{X}_{t+1} = [X_2,\hdots,X_{t+1}]^{\top} $}. The initial parameter estimate $\hat{\theta}_{\text{init}}=(\hat{a}_{\text{init}},\hat{b}_\text{init})$ is freely chosen by the designer in $\mathbb{R}\times\mathbb{R}\backslash \{0 \}$. Here, $C$ is a user-specified excitation constant satisfying $0 < C < U_{\text{max}}$ which determines the size of the excitation term, and $D = U_{\text{max}}-C$ determines the certainty-equivalence component of the control policy.

Under this control strategy, the states of the closed-loop system evolve as,
\begin{equation} 
    X_{t+1} = aX_t + b \bigg( \sigma_D\bigg(\ssnote{G_t} X_t\bigg) + V_t \bigg) + W_t. \label{eqn:closed-loop-system-v2}
\end{equation}

The intuition behind our control strategy is the following; we estimate the system parameters $(a,b)$ from past data in \eqref{eqn:parameter-estimate}, and use this estimate for certainty-equivalent control in \eqref{eqn:control-policy-v2}. The presence of the \ssnote{injected excitation} term $V_t$ is to ensure the data $Z_t$ is exciting so that convergence of parameter estimates $(\hat{a}_t,\hat{b}_t)$ occurs, which is a key component of our analysis later in Theorem \ref{thm:msb}. \ssnote{This is in contrast to $W_t$, which is an external system disturbance}.

The control input \eqref{eqn:control-policy-v2} always satisfies (G1) since for all $t \in \mathbb{N}_0$, $\abs{\sigma_D\left(\ssnote{G_t}X_t\right)}$ is bounded by $D$, $\abs{V_t}$ is bounded by $C$, and $U_{\text{max}} = C+D$. On the other hand, Theorem \ref{thm:msb} states that the closed-loop system under our control law following Algorithm \ref{alg:ideal-controller-v2} satisfies mean-square boundedness (G2). We now provide Theorem \ref{thm:msb}.
\begin{theorem} \label{thm:msb}
Supposing A1-\ssnote{A2} hold \ssnote{and $x_0 \in \mathbb{R}$}, the states $(X_t)_{t \in \mathbb{N}_0}$ corresponding to the closed-loop system \eqref{eqn:closed-loop-system-v2} under our control strategy in Algorithm  \ref{alg:ideal-controller-v2} satisfies 
\begin{equation}
    \exists e > 0: \sup_{t \in \mathbb{N}_0}\mathbb{E}\left[ X_t^2 \right] < e.
\end{equation}
\end{theorem}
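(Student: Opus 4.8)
The plan is to combine a non-asymptotic estimation-error bound for the least-squares estimator with a stochastic Lyapunov argument that is robust to the estimate being only \emph{eventually} accurate. I would work with the filtration $\mathcal{F}_t := \sigma(X_0,\dots,X_t,V_0,\dots,V_{t-1},W_0,\dots,W_{t-1})$, under which $G_t$ and $\sigma_D(G_tX_t)$ are $\mathcal{F}_t$-measurable while $V_t,W_t$ are independent of $\mathcal{F}_t$; write $G_* := -a/b$, fix $\beta\in(0,1)$, and define the ``good event'' $\mathcal{G}_t := \{\abs{a+bG_t}\le\beta\}=\{\abs{b}\,\abs{G_t-G_*}\le\beta\}\in\mathcal{F}_t$. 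On $\mathcal{G}_t$ the certainty-equivalent gain nearly cancels the state dynamics; on $\mathcal{G}_t^C$ I would only use the crude a~priori bound $\abs{X_{t+1}}\le\abs{X_t}+\abs{b}U_{\text{max}}+\abs{W_t}$ from A2 and $\abs{U_t}\le U_{\text{max}}$, which already yields $\mathbb{E}[e^{2\alpha\abs{X_t}}]\le C_0 q^t$ for every $\alpha>0$ (with $q=e^{2\alpha\abs{b}U_{\text{max}}}\mathbb{E}[e^{2\alpha\abs{W_0}}]<\infty$), so $\mathbb{E}[X_t^2]$ is finite at each fixed $t$ and only large $t$ matters.

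\textbf{Estimation via excitation.} First I would show the injected excitation makes the covariate process $Z_t=(X_t,U_t)$ persistently exciting \emph{uniformly in} $\theta_*$: since $V_t\sim\text{Uniform}([-C,C])$ enters $U_t$ as a fresh draw and $W_t$ enters $X_{t+1}$ with a bounded-density Gaussian marginal, one obtains a block-martingale small-ball bound $P(\abs{\iprod{v}{Z_{t+1}}}\ge\nu_*\mid\mathcal{F}_t)\ge p_*$ for all unit $v\in\mathbb{R}^2$ and all $t$, with $\nu_*,p_*>0$ depending only on $C,U_{\text{max}},\Sigma_W$ --- treating separately the directions with a non-negligible $U$-component (use the spread of $v_2 V_t$) and the nearly state-aligned directions (use the spread of $W_t$ in $X_{t+1}$, the additive saturated term being a bounded, harmless perturbation once the $U_{\text{max}}/\Sigma_W$-scaled constants are taken small). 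Together with the polynomial-in-$t$ a~priori bound on $\sum_{s\le t}Z_sZ_s^\top$ from the crude growth estimate, the non-asymptotic least-squares guarantees of \cite{simchowitz2018learning} then give, for any fixed $\epsilon>0$, $P(\norm{\hat\theta_t-\theta_*}>\epsilon)\le\rho_t$ with $\rho_t$ decaying exponentially; since $b\ne0$, $\epsilon$ can be chosen small enough that $\{\norm{\hat\theta_{t-1}-\theta_*}\le\epsilon\}\subseteq\mathcal{G}_t$, so $P(\mathcal{G}_t^C)$ decays exponentially.

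\textbf{Drift on the good event.} Next I would establish a geometric drift for $\Phi_t:=e^{\alpha\abs{X_t}}$ (suitably small $\alpha>0$) on $\mathcal{G}_t$. If $\abs{a}<1$ this is automatic for \emph{any} admissible control, via $\mathbb{E}[\Phi_{t+1}\mid\mathcal{F}_t]\le\kappa_1\Phi_t^{\abs{a}}$ and Young's inequality --- the plant is self-stabilizing. If $\abs{a}=1$, then on $\mathcal{G}_t\cap\{\abs{X_t}\text{ large}\}$ (using $\beta<1=\abs{a}$) the control is saturated with the correct sign and $X_{t+1}=\mu+\xi_t$ with $\mu$ $\mathcal{F}_t$-measurable, $\abs{\mu}=\abs{X_t}-\abs{b}D$, and $\xi_t:=bV_t+W_t$ symmetric and independent of $\mathcal{F}_t$; the refined estimate $\mathbb{E}\,e^{\alpha\abs{\mu+\xi_t}}\le e^{\alpha\abs{\mu}}\big(\mathbb{E}\,e^{\alpha\xi_t}+o_{\abs{\mu}\to\infty}(1)\big)=e^{\alpha\abs{\mu}}\big(1+O(\alpha^2)+o(1)\big)$, which crucially exploits symmetry (the cruder $\mathbb{E}\,e^{\alpha\abs{\mu+\xi_t}}\le e^{\alpha\abs{\mu}}\mathbb{E}\,e^{\alpha\abs{\xi_t}}$ would break down for small $\abs{b}$, precisely the delicate regime), gives $\mathbb{E}[\Phi_{t+1}\mid\mathcal{F}_t]\le(1-\tfrac12\alpha\abs{b}D)\Phi_t+B$ on $\mathcal{G}_t$, while $\mathbb{E}[\Phi_{t+1}\mid\mathcal{F}_t]\le\kappa\Phi_t$ on $\mathcal{G}_t^C$. (Equivalently one may take $\abs{X_t}$ as Lyapunov function: the analogous bound $\mathbb{E}\abs{\mu+\xi_t}\le\abs{\mu}+\Var{\xi_t}/\abs{\mu}$ yields negative drift $\le-\tfrac12\abs{b}D$ off a compact set on $\mathcal{G}_t$ with increment moments bounded to all orders --- exactly the hypotheses of the moment results of \cite{pemantle1999moment}.)

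\textbf{Assembly, and the hard part.} Taking expectations gives $\mathbb{E}[\Phi_{t+1}]\le\gamma\mathbb{E}[\Phi_t]+B+\kappa\,\mathbb{E}[\Phi_t\bm{1}_{\mathcal{G}_t^C}]$ with $\gamma<1$, and by Cauchy--Schwarz $\mathbb{E}[\Phi_t\bm{1}_{\mathcal{G}_t^C}]\le\sqrt{\mathbb{E}[e^{2\alpha\abs{X_t}}]}\,\sqrt{P(\mathcal{G}_t^C)}\le\sqrt{C_0 q^t\rho_{t-1}}=:r_t$; choosing $\alpha$ small enough that the exponential decay rate of $\rho_t$ exceeds $\ln q$ makes $(r_t)$ summable, so $\mathbb{E}[\Phi_{t+1}]\le\gamma\mathbb{E}[\Phi_t]+B+r_t$ iterates to $\sup_t\mathbb{E}[e^{\alpha\abs{X_t}}]<\infty$, whence $\sup_t\mathbb{E}[X_t^2]\le\tfrac{2}{\alpha^2}\sup_t\mathbb{E}[e^{\alpha\abs{X_t}}]<\infty$. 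I expect the main obstacles to be: (i) proving the small-ball constants $\nu_*,p_*$ are uniform in the unknown $\theta_*$, since $G_t$ depends on the data through a nonlinear ratio-of-affine map and the $Z_s$ are feedback-coupled; and (ii) reconciling the rare events $\mathcal{G}_t^C$ --- on which the saturated closed loop can genuinely expand $X_t^2$ --- with the drift argument, since a standard Foster--Lyapunov/Pemantle--Rosenthal conclusion requires the drift to be negative off a compact set deterministically. Overcoming (ii) is the ``decays sufficiently fast'' step: one must quantitatively trade the exponentially small $P(\mathcal{G}_t^C)$ against the at-most-exponential a~priori growth of the moments of $X_t$, which is where the sharp estimation-error tail and the moment machinery of \cite{pemantle1999moment} enter.
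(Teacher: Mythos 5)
Your proposal is sound, and its first half coincides with the paper's: excitation via $V_t$ gives a uniform-in-time small-ball (BMSB) lower bound for $Z_t=(X_t,U_t)$, which together with a crude polynomial covariance upper bound feeds the non-asymptotic least-squares theorem of Simchowitz et al.\ to produce an exponentially decaying tail $P(\norm{\hat\theta_t-\theta_*}_2>\epsilon)\le\rho_t$ at a fixed small radius; this is exactly the content of the paper's Lemmas \ref{lemma:satisfy-bmsb}, \ref{lemma:paley-zygmund-satisfy-lower-bound} and \ref{lemma:parameter-estimate-bound}, and the uniform small-ball constant (your obstacle (i)) is the same hard, CAS-assisted technical core in both arguments. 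Where you genuinely diverge is the second half. The paper (Lemma \ref{lemma:msb-ce}) couples $X_t$ to a non-adaptive reference system $X_t^*$ driven by the same noise with the true gain $-a/b$, proves mean square boundedness of $X_t^*$ via the Pemantle--Rosenthal moment result (Proposition \ref{prop:constant-drift-conditions}) applied to $(\pm a^tX_t^*)^+$, and bounds $\abs{X_t-X_t^*}$ deterministically (linearly in the stopping time $T_{d^*}$ at which the estimate enters and stays in a small ball), so that only the polynomially weighted summability $\sum_k k^2\sum_{i\ge k-1}P(\norm{\hat\theta_i-\theta_*}_2>d)<\infty$ is needed. You instead run a Foster--Lyapunov argument on $\Phi_t=e^{\alpha\abs{X_t}}$: geometric drift on the good event $\{\abs{a+bG_t}\le\beta\}$ (your symmetry-based refinement of $\mathbb{E}e^{\alpha\abs{\mu+\xi_t}}$ is the right move, and is needed precisely because the crude bound fails for small $\abs{b}D$), crude $\kappa$-growth on the bad event, and Cauchy--Schwarz of $\mathbb{E}[\Phi_t\bm{1}_{\mathcal{G}_t^C}]$ against the exponentially small $P(\mathcal{G}_t^C)$, with $\alpha$ tuned small so that the a priori MGF growth rate $\ln q$ (which vanishes as $\alpha\to0$) is beaten by the fixed estimation decay rate; this closes correctly, since the decay constant from the estimation tail does not depend on $\alpha$, and the finitely many initial times where $G_t$ is the arbitrary initial guess are absorbed by the a priori bound. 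The trade-off: your route dispenses entirely with the reference system, Lemmas \ref{lemma:stability-reference-system}, \ref{lemma:bound-reference-error-after-enter-tube}, \ref{lemma:similar-controller-behaviour-after-time-T} and Proposition \ref{prop:constant-drift-conditions}, and needs the tail bound only at one radius, but it requires exponential moments of the state (available here because $W_t$ is Gaussian and $\abs{a}\le1$ keeps the MGF growth geometric) and a genuinely exponential tail; the paper's decomposition needs only second/fourth moments and polynomially-weighted summable tails, so it degrades more gracefully to heavier-tailed settings. To turn your plan into a complete proof you would still have to supply the detailed small-ball lower bound (the analogue of the paper's Lemma \ref{lemma:paley-zygmund-satisfy-lower-bound}) and the $\delta(i)$ bookkeeping that converts Proposition \ref{prop:estim-bound} into the exponential tail, but neither step fails.
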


\section{Proof of Main Result}
We start by providing the main ideas behind the proof of Theorem \ref{thm:msb}. We then state supporting lemmas and a sketch of their proofs, before providing the formal proof of Theorem \ref{thm:msb}.

\subsection{Proof Idea for Theorem \ref{thm:msb}}
 Firstly, let us define the block martingale small-ball (BMSB) condition.
\begin{definition} \label{def:bmsb}
(Martingale Small-Ball \cite[Definition 2.1]{simchowitz2018learning}) Given a process $(Z_t)_{t \geq 1}$ taking values in $\mathbb{R}^2$, we say that it satisfies the $(k,\Gamma_{\text{sb}},p)$-block martingale small-ball (BMSB) condition for \ssnote{$k \in \mathbb{N}$}, $\Gamma_{\text{sb}} \succ 0$, and \ssnote{$p > 0$}, if, for any $\zeta \in \mathcal{S}^{1}$ and $j \geq 0$, $\frac{1}{k} \sum_{i=1}^k P (\abs{\langle \zeta, Z_{j+i} \rangle} \geq \sqrt{\zeta^{\top} \Gamma_{\text{sb}} \zeta} \mid \mathcal{F}_j) \geq p$ holds. Here, $(\mathcal{F}_t)_{t \geq 1}$ is any filtration which $(\langle \zeta, Z_{t} \rangle)_{t \geq 1}$ is adapted to.
\end{definition}
This condition is related to the `excitability' of some random sequence $(Z_t)_{t \geq 1}$ --- intuitively, that is, given past observations of the sequence $Z_1,\hdots,Z_j$, how spread out is the conditional distribution of future observations. Result 1) in Lemma \ref{lemma:satisfy-bmsb} establishes that our state-input data sequence $(Z_t)_{t \in \mathbb{N}}$ satisfies the $(1,\Gamma_{\text{sb}},p)$-BMSB for some parameters $p,\Gamma_{\text{sb}}$, which in turn implies a high probability lower bound on $\lambda_{\text{min}}(\sum_{t=1}^i Z_t Z_t^{\top})$ holds \cite{simchowitz2018learning}. Moreover, 2) in Lemma \ref{lemma:satisfy-bmsb} provides a high-probability upper bound on $\lambda_{\text{max}}(\sum_{t=1}^i Z_t Z_t^{\top})$. These bounds are important for deriving a high probability upper bound on the parameter estimation error when applying least squares estimation to a general time-series with linear responses (Theorem 2.4 in \cite{simchowitz2018learning}). We provide the specialization to the case of covariates in $\mathbb{R}^2$ and responses in $\mathbb{R}$ in Proposition \ref{prop:estim-bound}. We subsequently rely on this result to provide Lemma \ref{lemma:parameter-estimate-bound}, which gives an upper bound on the probability that the parameter estimate $\hat{\theta}_t$ lies outside a ball of size $d>0$ centered at the true parameter $\theta_*$ for sufficiently small $d$ for $t \in \mathbb{N}$.
Finally, Lemma \ref{lemma:msb-ce} says that so long as the aforementioned probability decays sufficiently fast, then our certainty-equivalent control strategy in \eqref{eqn:control-policy-v2} results in uniform boundedness of the mean squared states of the closed-loop system. Our proof of Theorem \ref{thm:msb} concludes by showing that the probability upper bound established in Lemma \ref{lemma:parameter-estimate-bound} decays sufficiently fast, satisfying the premise of Lemma \ref{lemma:msb-ce}.
\ssnote{\begin{remark}
    Larger $C$ is related to improved 'learnability' properties via a larger existent $p$ and $\Gamma_{sb}$ in Lemma \ref{lemma:satisfy-bmsb}, contributing to a larger $c_3$ in Lemma \ref{lemma:parameter-estimate-bound}, and hence faster exponential decay for the upper bound on $P( ||\hat{\theta}_i - \theta_*||_2 > d )$.
\end{remark}}
Only proof sketches that capture the main ideas are provided for the lemmas. Readers are referred to the Supplementary Materials for lengthy proofs of supporting results.

\subsection{Supporting Results}

\begin{lemma} \label{lemma:satisfy-bmsb}
Suppose A1-\ssnote{A2} hold on the closed-loop system \eqref{eqn:closed-loop-system-v2} \ssnote{and $x_0 \in \mathbb{R}$}. The following results hold on the sequence $(Z_t)_{t \in \mathbb{N}_0}$ from \eqref{eqn:state-input-data}:
\begin{enumerate}
    \item There exist $p>0$ and $\Gamma_{\text{sb}} \succ 0$ such that $(Z_t)_{t \in \mathbb{N}}$ satisfies the $(1,\Gamma_{\text{sb}},p)$-BMSB condition;
    \item $P\big( \sum_{t=1}^i Z_t Z_t^{\top} \not \preceq  \ssnote{\frac{1}{\delta} i((i ( \abs{b}U_{\text{max}} + \Sigma_W ) + |x_0|)^2} $ $ +U_{\text{max}}^2 ) I \big) \leq \delta$ holds for all $i \geq 1$ and $\delta \in (0,1)$.
\end{enumerate}
\end{lemma}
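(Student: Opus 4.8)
The plan is to prove the two parts separately. For part 1, I would establish the one-step ($k=1$) BMSB condition directly, via a case split on the unit vector $\zeta=(\zeta_1,\zeta_2)\in\mathcal{S}^1$ that, for each $\zeta$, exploits whichever of the two independent noise sources is ``aligned'' with $\zeta$: the Gaussian disturbance $W_j$, which enters $X_{j+1}$, or the uniform excitation $V_{j+1}$, which enters $U_{j+1}$. The natural filtration is $\mathcal{F}_j:=\sigma(X_0,U_0,\dots,X_j,U_j)$, to which $(\langle\zeta,Z_t\rangle)_{t\ge1}$ is adapted and under which $W_j$ and $V_{j+1}$ are each independent of $\mathcal{F}_j$ and of one another; in particular $X_{j+1}\mid\mathcal{F}_j\sim\mathcal{N}(aX_j+bU_j,\Sigma_W^2)$. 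Since $\zeta_1^2+\zeta_2^2=1$, either $\abs{\zeta_2}\ge 1/\sqrt{2}$ or $\abs{\zeta_1}\ge 1/\sqrt{2}$.

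In the first case, conditioning further on $W_j$ (hence also on $X_{j+1}$ and on the estimate-dependent gain $G_{j+1}$), the quantity $\langle\zeta,Z_{j+1}\rangle=\zeta_1X_{j+1}+\zeta_2\sigma_D(G_{j+1}X_{j+1})+\zeta_2V_{j+1}$ becomes a fixed shift of $\zeta_2V_{j+1}$, which is uniform on an interval of half-width $\abs{\zeta_2}C\ge C/\sqrt{2}$; since the sub-interval $(-t,t)$ carries conditional mass at most $t/(\abs{\zeta_2}C)$, one obtains $P(\abs{\langle\zeta,Z_{j+1}\rangle}\ge t\mid\mathcal{F}_j)\ge 1/2$ whenever $t\le C/(2\sqrt{2})$, regardless of the value of $G_{j+1}$. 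In the second case, I would use $\abs{\langle\zeta,Z_{j+1}\rangle}\ge\abs{\zeta_1}\abs{X_{j+1}}-\abs{\zeta_2}U_{\text{max}}$ together with $\abs{U_{j+1}}\le U_{\text{max}}$ and $\abs{\zeta_2}\le 1/\sqrt{2}$, so that $\abs{\langle\zeta,Z_{j+1}\rangle}\ge t$ holds whenever $\abs{X_{j+1}}\ge\sqrt{2}t+U_{\text{max}}$; since the Gaussian tail $P(\abs{\mathcal{N}(\mu,\Sigma_W^2)}\ge\rho)$ is minimized over $\mu\in\mathbb{R}$ at $\mu=0$, this gives the strictly positive, $\mathcal{F}_j$-independent lower bound $\erfc\left(\frac{\sqrt{2}t+U_{\text{max}}}{\sqrt{2}\Sigma_W}\right)$. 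Taking $t=C/(2\sqrt{2})$, $\Gamma_{\text{sb}}=t^2I\succ0$ (so $\sqrt{\zeta^\top\Gamma_{\text{sb}}\zeta}=t$ for every $\zeta\in\mathcal{S}^1$), and $p=\min\left\{1/2,\ \erfc\left(\frac{C/2+U_{\text{max}}}{\sqrt{2}\Sigma_W}\right)\right\}>0$ then yields part 1.

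For part 2, I would start from the crude operator inequality $\sum_{t=1}^iZ_tZ_t^\top\preceq\big(\sum_{t=1}^i\norm{Z_t}^2\big)I$ with $\norm{Z_t}^2=X_t^2+U_t^2\le X_t^2+U_{\text{max}}^2$. Iterating \eqref{eqn:closed-loop-system-v2} with $\abs{a}\le1$ and $\abs{U_t}\le U_{\text{max}}$ gives the pathwise bound $\abs{X_t}\le\abs{x_0}+i\abs{b}U_{\text{max}}+\sum_{s=0}^{i-1}\abs{W_s}=:\Xi_i$ for every $t\le i$, hence $\sum_{t=1}^i\norm{Z_t}^2\le i(\Xi_i^2+U_{\text{max}}^2)$. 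Cauchy--Schwarz, $\big(\sum_{s=0}^{i-1}\abs{W_s}\big)^2\le i\sum_{s=0}^{i-1}W_s^2$, yields $\E[\Xi_i^2]\le(\abs{x_0}+i\abs{b}U_{\text{max}}+i\Sigma_W)^2=(i(\abs{b}U_{\text{max}}+\Sigma_W)+\abs{x_0})^2=:M_i^2$, and Markov's inequality gives $P(\Xi_i^2>M_i^2/\delta)\le\delta$. On the complementary event, $\sum_{t=1}^iZ_tZ_t^\top\preceq i(\Xi_i^2+U_{\text{max}}^2)I\preceq\frac1\delta\,i(M_i^2+U_{\text{max}}^2)I$, which is exactly the stated bound.

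The main obstacle is part 1, and in particular the fact that the certainty-equivalence gain $G_{j+1}$ is built from the least-squares estimate $\hat\theta_j$, which itself depends on $X_{j+1}$ and hence on the disturbance $W_j$ we want to exploit, so one cannot simply ``condition the gain away''. The point that resolves this --- and which I would stress in the full proof --- is that the saturation $\sigma_D$ keeps the control contribution to $Z_{j+1}$ uniformly bounded by $U_{\text{max}}$ no matter what $G_{j+1}$ is: in the control-aligned case one conditions on $W_j$ and is left only with the fresh uniform randomness $V_{j+1}$, while in the state-aligned case one absorbs the entire control coordinate into a term of magnitude at most $U_{\text{max}}$ --- so the quality of $G_{j+1}$ never enters either bound. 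Part 2 is then routine.
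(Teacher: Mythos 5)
Your proposal is correct, and on part 1 it takes a genuinely different route from the paper. The paper first establishes a uniform conditional first-moment lower bound $\mathbb{E}[\abs{\zeta^{\top}Z_{t+1}}\mid\mathcal{F}_t]\geq\gamma$ (Lemma \ref{lemma:paley-zygmund-satisfy-lower-bound}), which requires exact folded/truncated-Gaussian conditional-mean computations, CAS-assisted derivative sign analysis, and a continuity/monotonicity argument over the angle parametrization of $\mathcal{S}^1$ (Lemmas \ref{lemma:lower-bound-A}--\ref{lemma:properties-f}), and then converts that moment bound into a small-ball probability via a Paley--Zygmund-type inequality combined with the conditional variance bound $2(\Sigma_W^2+U_{\text{max}}^2)$. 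You instead bound the conditional small-ball probability directly by splitting on which coordinate of $\zeta\in\mathcal{S}^1$ has magnitude at least $1/\sqrt{2}$: in the control-aligned case you condition additionally on $W_j$, which freezes $X_{j+1}$ and the data-dependent gain $G_{j+1}$ (both are measurable with respect to $\sigma(\mathcal{F}_j,W_j)$, since $\hat{\theta}_j$ uses data only up to $X_{j+1}$), so $\langle\zeta,Z_{j+1}\rangle$ is a deterministic shift of $\zeta_2V_{j+1}$ and uniform anti-concentration plus the tower property give the bound; in the state-aligned case you absorb the control coordinate as an error of size at most $U_{\text{max}}$ and use Gaussian anti-concentration of $X_{j+1}\mid\mathcal{F}_j$, with the interval mass maximized at mean zero. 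This buys fully explicit constants, $\Gamma_{\text{sb}}=(C^2/8)I$ and $p=\min\{1/2,\ \erfc((C/2+U_{\text{max}})/(\sqrt{2}\Sigma_W))\}$, and dispenses entirely with the CAS computations and the Paley--Zygmund step; the paper's heavier first-moment route could in principle yield sharper (larger) admissible $\gamma$, hence sharper $p$ and $\Gamma_{\text{sb}}$, but for the existential statement of the lemma both suffice, and your filtration $\sigma(X_0,U_0,\dots,X_j,U_j)$ coincides with the paper's natural filtration of $(Z_t)_{t\in\mathbb{N}_0}$. On part 2 the arguments are essentially the same Markov computation landing on the same constant: you bound the Gram matrix pathwise by $i(\Xi_i^2+U_{\text{max}}^2)I$ and apply Markov to $\Xi_i^2$, while the paper bounds $\sum_{t=1}^i\trace{\mathbb{E}[Z_tZ_t^{\top}]}$ and applies Markov to $\lambda_{\text{max}}$ of the normalized Gram matrix.
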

\begin{proof}[Proof Sketch] 
    For 1), let \ssnote{$(\mathcal{F}_t)_{t \in \mathbb{N}_0}$} be the natural filtration of \ssnote{$(Z_t)_{t \in \mathbb{N}_0}$}. Let $\gamma > 0$ satisfy $\mathbb{E}\left[ \abs{\zeta^{\top}Z_{t+1}} \mid \mathcal{F}_t \right] \geq \gamma$ for all $\zeta \in \mathcal{S}^1$ and $t \in \mathbb{N}_0$, where the existence of satisfactory values is established in Lemma \ref{lemma:paley-zygmund-satisfy-lower-bound} with the aid of a computer algebra system (CAS).
    For all $\zeta = (\zeta_1,\zeta_2) \in \mathcal{S}^1$ and $t \in \mathbb{N}_0$, $\Var{\zeta^{\top}Z_{t+1} \mid \mathcal{F}_t} \leq 2(\Sigma_W^2 + U_{\text{max}}^2)$ holds. Following an improvement of the Paley-Zygmund Inequality via the Cauchy-Schwarz inequality and making use of Jensen's inequality, 
    \begin{align}
        &P\left(\abs{\zeta^{\top}Z_{t+1}} > \sqrt{\zeta^{\top} \left( \psi^2 \gamma^2 I \right) \zeta} \mid \mathcal{F}_t \right)  \\
        & \quad \geq \bigg( 1 + \frac{\Var{\zeta^{\top} Z_{t+1} \mid \mathcal{F}_t}}{(1-\psi)^2 \mathbb{E}\left[ \abs{\zeta^{\top}Z_{t+1}} \mid \mathcal{F}_t \right]^2} \bigg)^{-1}
    \end{align}
    holds for all $\zeta = (\zeta_1,\zeta_2) \in \mathcal{S}^1$ and $\psi \in (0,1)$. Since $\Big( 1 + \frac{\Var{\zeta^{\top} Z_{t+1} \mid \mathcal{F}_t}}{(1-\psi)^2 \mathbb{E}\left[ \abs{\zeta^{\top}Z_{t+1}} \mid \mathcal{F}_t \right]^2} \Big)^{-1} \geq \Big( 1 + \frac{2(\Sigma_W^2 + U_{\text{max}}^2)}{(1-\psi)^2 \gamma^2} \Big)^{-1}$ holds, result 1) follows by fixing $\psi \in (0,1)$, and setting $\Gamma_{\text{sb}} = \psi^2 \gamma^2 I$ and $p = \Big( 1 + \frac{2(\Sigma_W^2 + U_{\text{max}}^2)}{(1-\psi)^2 \gamma^2} \Big)^{-1}$.

    For 2), suppose $i \in \mathbb{N}$. Using A1-\ssnote{A2}, the summed trace of the expected covariates can be bounded as $\sum_{t=1}^i \trace{\mathbb{E}\left[ Z_t Z_t^{\top} \right]} \leq \ssnote{i((i ( \abs{b}U_{\text{max}} + \Sigma_W ) + |x_0|)^2+U_{\text{max}}^2 )}$. Next, supposing $\delta \in (0,1)$, Markov's inequality is used to derive the upper bound \ssnote{$P\big( \sum_{t=1}^TZ_t Z_t^{\top} \not \preceq \frac{1}{\delta}i((i ( \abs{b}U_{\text{max}} + \Sigma_W ) + |x_0|)^2+U_{\text{max}}^2 )I  \big) \leq \delta (i((i ( \abs{b}U_{\text{max}} + \Sigma_W ) + |x_0|)^2+U_{\text{max}}^2 )) ^{-1} \sum_{t=1}^i \trace{\mathbb{E}\left[ Z_t Z_t^{\top} \right]}$}. The conclusion follows after combining these results.
\end{proof}

\begin{proposition} \label{prop:estim-bound}
\cite[Theorem 2.4]{simchowitz2018learning}  Fix $\delta \in (0,1)$, $i \in \mathbb{N}$ and $0 \prec \Gamma_{\text{sb}} \preceq \overline{\Gamma}$. Suppose $(Z_t,Y_t)_{t =1}^i \in ( \mathbb{R}^2 \times \mathbb{R} )^i$ is a random sequence such that (a) $Y_t = \theta_*^{\top} Z_t + \eta_t$ for $t \leq i$, where $\eta_t \mid \mathcal{F}_{t-1}$ is mean-zero and $\Sigma^2$-sub-Gaussian with $\mathcal{F}_t$ denoting the sigma-algebra generated by $\eta_0, \hdots, \eta_t, Z_1, \hdots, Z_t$, (b) $Z_1,\hdots,Z_i$ satisfies the $(k, \Gamma_{\text{sb}}, p)$-BMSB condition, and (c) $P(\sum_{t=1}^i Z_t Z_t^{\top} \not \preceq i \overline{\Gamma}) \leq \delta$ holds. Then if
\begin{equation}
    i \geq \frac{10 k}{p^2}\Big(\log\Big(\frac{1}{\delta}\Big)+4\log(10/p)+\log \det (\overline{\Gamma}\Gamma_{\text{sb}}^{-1})\Big), \label{eqn:estim-bound-i-condition}
\end{equation}
we have
\begin{align}
    &P\Bigg(\norm{\hat{\theta}_i-\theta_*}_{2} > \frac{90 \Sigma}{p} \\
    & \times \sqrt{\frac{ 1 + 2 \log \frac{10}{p} + \log \det \overline{\Gamma} \Gamma_{\text{sb}}^{-1} + \log \big(\frac{1}{\delta}\big) }{i \lambda_{\text{min}}(\Gamma_{\text{sb}}) }}\Bigg) \leq 3 \delta, \label{eqn:estim-bound-conclusion}
\end{align}
where $\hat{\theta}_i \ssnote{= (\bm{Z}_i^{\top} \bm{Z}_i)^{\dagger}\bm{Z}_i^{\top} \bm{Y}_i \in} \arg \min_{\theta \in \mathbb{R}^{2}} \sum_{t=1}^i ||Y_t - \theta^{\top} Z_t||_2^2$, \ssnote{and $\bm{Z}_i=[Z_1,\hdots,Z_i]^{\top}$, $\bm{Y}_i = [Y_1,\hdots,Y_i]^{\top} $}.
\end{proposition}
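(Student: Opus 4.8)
The plan is to obtain Proposition~\ref{prop:estim-bound} as the specialization of \cite[Theorem~2.4]{simchowitz2018learning} to covariate dimension two and scalar responses, so the work lies in matching objects and tracking dimension-dependent constants rather than in running a new argument. First I would fix the correspondence: the matrix regression parameter of the general statement becomes our vector $\theta_* \in \mathbb{R}^2$ (viewed as a $2 \times 1$ matrix), the response dimension is set to $d_Y = 1$, the covariate dimension is $d_Z = 2$, and the conditionally sub-Gaussian response noise is our scalar $\eta_t$. I would then record two reductions special to $d_Y = 1$: the operator norm of $\hat{\theta}_i - \theta_*$ equals the Euclidean norm $\norm{\hat{\theta}_i - \theta_*}_2$ appearing in \eqref{eqn:estim-bound-conclusion}, and coordinatewise $\Sigma^2$-sub-Gaussianity of the noise coincides with plain scalar $\Sigma^2$-sub-Gaussianity as required by hypothesis~(a).

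Next I would check that hypotheses (a)--(c) are exactly the instances of the hypotheses of \cite[Theorem~2.4]{simchowitz2018learning}: (a) is their linear-response noise model $Y_t = \theta_*^{\top} Z_t + \eta_t$ together with the conditional mean-zero and sub-Gaussian property relative to the filtration generated by $\eta_0,\dots,\eta_t,Z_1,\dots,Z_t$; (b) is the $(k,\Gamma_{\text{sb}},p)$-BMSB condition of Definition~\ref{def:bmsb}; and (c) is the high-probability upper bound $P(\sum_{t=1}^i Z_t Z_t^{\top} \not\preceq i\overline{\Gamma}) \leq \delta$ on the empirical second-moment matrix. I would then verify that the sample-complexity requirement \eqref{eqn:estim-bound-i-condition} and the error bound \eqref{eqn:estim-bound-conclusion} are obtained by substituting $d_Y = 1$ into the corresponding displays of \cite{simchowitz2018learning}: in particular the ``$1$'' in the numerator under the square root in \eqref{eqn:estim-bound-conclusion} is the contribution of $d_Y$, the $\log\det(\overline{\Gamma}\Gamma_{\text{sb}}^{-1})$ and $\log(10/p)$ terms are unchanged, and the numerical constants ($90$, $10$, $4$, $2$, $3$) are carried over verbatim.

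Finally I would note that the closed-form estimator $\hat{\theta}_i = (\bm{Z}_i^{\top}\bm{Z}_i)^{\dagger}\bm{Z}_i^{\top}\bm{Y}_i$ is a minimizer of $\sum_{t=1}^i \norm{Y_t - \theta^{\top} Z_t}_2^2$ --- the Moore--Penrose pseudoinverse selecting the minimum-norm solution of the normal equations --- so it is an instance of the ordinary least squares estimator analyzed in \cite{simchowitz2018learning}; on the high-probability event where the BMSB-driven lower bound on $\lambda_{\text{min}}(\sum_{t=1}^i Z_t Z_t^{\top})$ holds, $\bm{Z}_i^{\top}\bm{Z}_i$ is invertible and the pseudoinverse agrees with the inverse, so the particular choice of minimizer is immaterial to the stated bound.

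I expect the main obstacle to be bookkeeping rather than conceptual: one must confirm that every $d_Y$-dependent quantity in the general theorem --- in the sample-complexity condition, in the logarithmic and log-determinant terms inside the square root, and in the multiplicative constants --- specializes correctly at $d_Y = 1$, and that the filtration and sub-Gaussian conventions of \cite{simchowitz2018learning} align with hypothesis~(a) as stated here. A secondary point is the singular/degenerate case of $\bm{Z}_i^{\top}\bm{Z}_i$, which motivates the pseudoinverse in the statement; this is controlled on the same high-probability event used in the general proof, so it does not weaken the conclusion.
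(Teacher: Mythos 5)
The paper gives no proof of Proposition~\ref{prop:estim-bound} at all---it is imported verbatim from \cite[Theorem 2.4]{simchowitz2018learning} and merely specialized to covariates in $\mathbb{R}^2$ and scalar responses---and your proposal does exactly this specialization (matching hypotheses (a)--(c), identifying the operator norm with $\norm{\hat{\theta}_i-\theta_*}_2$ when $d_Y=1$, and instantiating the dimension-dependent terms), so it is correct and takes essentially the same route as the paper. The only caveat is one you already flag as bookkeeping: the coefficients $4$ and $2$ on the $\log(10/p)$ terms arise from substituting the covariate dimension $d_Z=2$ into the general statement, not from $d_Y$, so they are not ``carried over verbatim'' but are produced by the specialization.
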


\begin{lemma} \label{lemma:parameter-estimate-bound}
    Consider $\Gamma_{\text{sb}} \succ 0$, $p > 0 $ and $q > 0$, $(Z_t)_{t \in \mathbb{N}}$ from \eqref{eqn:state-input-data}, and $(\hat{\theta}_t)_{t \in \mathbb{N}}$ from \eqref{eqn:parameter-estimate}. Suppose A1 holds on the closed-loop system \eqref{eqn:closed-loop-system-v2}. If the following are true:
    \begin{enumerate}
        \item $(Z_t)_{t \in \mathbb{N}}$ satisfies the $(1,\Gamma_{\text{sb}},p)$-BMSB condition;
        \item $P\big( \sum_{t=1}^i Z_t Z_t^{\top} \not \preceq \frac{1}{\delta} i^3 q I \big) \leq \delta$ holds for all $i \geq 1$ and $\delta \in (0,1)$;
    \end{enumerate}
    then for all $d \in \Big(0,\frac{ 90 \Sigma_W }{\sqrt{10 \lambda_{\text{min}} \left( \Gamma_{\text{sb}} \right)}}\Big)$ and $i \geq M(d,p,q,\Gamma_{\text{sb}})$,
    \begin{align}
        &P\left( \norm{\hat{\theta}_i - \theta_*}_2 > d \right) \leq i^{\frac{4}{3}} e^{- c_3(d,p,\Gamma_{\text{sb}}) i } c_4(p,q,\Gamma_{\text{sb}}).
    \end{align}
    Here, $c_1$, $c_2$, $c_3$, $c_4$ are defined as
    \begin{align}
        &c_1(q,\Gamma_{\text{sb}}) := q + \lambda_{\text{max}}(\Gamma_{\text{sb}}), \ c_2(p):=1 + 2 \log \left( 10 / p \right), \\
        &c_3(d,p,\Gamma_{\text{sb}}) := \frac{\lambda_{\text{min}} \left( \Gamma_{\text{sb}} \right) d^2 p^2}{3 ( 90 \Sigma_W )^2},  \\
        &c_4(p,q,\Gamma_{\text{sb}}):=3c_1(q,\Gamma_{\text{sb}})^{\frac{2}{3}} e^{\frac{1}{3}\left( c_2(p) + \log\left( \det \left( \Gamma_{\text{sb}}^{-1} \right) \right) \right)}
    \end{align}
    and $M(d,p,q,\Gamma_{\text{sb}})$, $M'(p,q,\Gamma_{\text{sb}})$ are defined as
    \begin{equation} \label{eqn:M-definition}
        \begin{aligned}
        &M(d,p,q,\Gamma_{\text{sb}}) := \max \bigg( \bigg \lceil \bigg( \frac{p^2}{10} - \frac{\lambda_{\text{min}} \left( \Gamma_{\text{sb}} \right) d^2 p^2}{ ( 90 \Sigma_W )^2} \bigg)^{-1} \\
        & \times \quad \left(4 \log \left( 10 / p \right) - c_2(p) \right) \bigg \rceil, M'(d,p,q,\Gamma_{\text{sb}})\bigg), \\
        &M'(d,p,q,\Gamma_{\text{sb}}) := \min \Big \{ m \in \mathbb{N} \mid \big( \forall i \geq m \big) \\
        & \quad \Big[ \frac{1}{3}i^{\frac{4}{3}} e^{- c_3(d,p,\Gamma_{\text{sb}}) i } c_4(p,q,\Gamma_{\text{sb}}) < 1 \Big] \Big\}.
        \end{aligned}
    \end{equation}
    for $d, p, q > 0$ and $\Gamma_{\text{sb}} \succ 0$.
\end{lemma}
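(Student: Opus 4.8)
The plan is to invoke Proposition~\ref{prop:estim-bound} once for every $i \geq M(d,p,q,\Gamma_{\text{sb}})$, with a confidence level $\delta=\delta_i$ that depends on $i$ and a dominating matrix $\overline{\Gamma}$, both reverse-engineered so that the high-probability error bound produced by the proposition equals exactly $d$ and its failure probability equals exactly $i^{4/3}e^{-c_3 i}c_4$.

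First I would put the closed loop into the regression form of Proposition~\ref{prop:estim-bound}. With $Y_t:=X_{t+1}$, \eqref{eqn:closed-loop-system-v2} reads $Y_t=\theta_*^{\top}Z_t+W_t$, so the noise is $\eta_t=W_t$; by A1 and the independence of $W_t$ from $(x_0,V_0,\hdots,V_{t-1},W_0,\hdots,W_{t-1})$, hence from $Z_1,\hdots,Z_{t-1}$, the conditional law of $\eta_t$ given the past is mean-zero and $\Sigma_W^2$-sub-Gaussian, verifying hypothesis (a) with $\Sigma=\Sigma_W$. Hypothesis (b) is exactly assumption~1 of the lemma. For hypothesis (c), fix $i\in\mathbb{N}$ and $\delta\in(0,1)$ and take $\overline{\Gamma}:=\tfrac{i^2 c_1}{\delta}I$; since $c_1=q+\lambda_{\text{max}}(\Gamma_{\text{sb}})>\lambda_{\text{max}}(\Gamma_{\text{sb}})$ and $i^2\geq 1>\delta$ we get $\Gamma_{\text{sb}}\preceq\overline{\Gamma}$, and since $c_1\geq q$ we get $i\overline{\Gamma}=\tfrac{i^3 c_1}{\delta}I\succeq\tfrac{1}{\delta}i^3 q I$, so assumption~2 of the lemma gives $P(\sum_{t=1}^i Z_tZ_t^{\top}\not\preceq i\overline{\Gamma})\leq P(\sum_{t=1}^i Z_tZ_t^{\top}\not\preceq\tfrac{1}{\delta}i^3 q I)\leq\delta$.

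Next I would set $\delta=\delta_i:=\tfrac13 i^{4/3}e^{-c_3 i}c_4$; since $c_3>0$ the set in the definition of $M'$ is nonempty, and $\delta_i\in(0,1)$ for every $i\geq M\geq M'$, so the proposition applies once its sample-size requirement \eqref{eqn:estim-bound-i-condition} is met. To check \eqref{eqn:estim-bound-i-condition} I would expand $\log\det(\overline{\Gamma}\Gamma_{\text{sb}}^{-1})=4\log i+2\log c_1+2\log(1/\delta_i)+\log\det(\Gamma_{\text{sb}}^{-1})$ and substitute the explicit value of $\log(1/\delta_i)$ (using $\log c_4=\log 3+\tfrac23\log c_1+\tfrac13(c_2(p)+\log\det(\Gamma_{\text{sb}}^{-1}))$); the $\log i$, $\log c_1$ and $\log\det(\Gamma_{\text{sb}}^{-1})$ terms cancel and \eqref{eqn:estim-bound-i-condition} collapses to $i(1-\tfrac{10\lambda_{\text{min}}(\Gamma_{\text{sb}})d^2}{(90\Sigma_W)^2})\geq\tfrac{10}{p^2}(4\log(10/p)-c_2(p))$. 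Because $d<\tfrac{90\Sigma_W}{\sqrt{10\lambda_{\text{min}}(\Gamma_{\text{sb}})}}$ the coefficient of $i$ is strictly positive, so this is precisely $i\geq(\tfrac{p^2}{10}-\tfrac{\lambda_{\text{min}}(\Gamma_{\text{sb}})d^2 p^2}{(90\Sigma_W)^2})^{-1}(4\log(10/p)-c_2(p))$, i.e.\ $i$ is at least the first term in the $\max$ defining $M$, which holds for $i\geq M$. The same cancellation yields the identity $c_2(p)+\log\det(\overline{\Gamma}\Gamma_{\text{sb}}^{-1})+\log(1/\delta_i)=3c_3 i$, so the bound in \eqref{eqn:estim-bound-conclusion} equals $\tfrac{90\Sigma_W}{p}\sqrt{\tfrac{3c_3}{\lambda_{\text{min}}(\Gamma_{\text{sb}})}}=d$. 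Proposition~\ref{prop:estim-bound} then gives $P(\norm{\hat{\theta}_i-\theta_*}_2>d)\leq 3\delta_i=i^{4/3}e^{-c_3 i}c_4$, which is the claim.

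I expect the main obstacle to be this logarithmic bookkeeping: one must track the $\log i$ and $\log(1/\delta_i)$ contributions entering through $\log\det(\overline{\Gamma}\Gamma_{\text{sb}}^{-1})$ so that the sample-size condition reduces exactly to the first term of $M$ and the confidence bound reduces exactly to $d$ --- this is where the admissible range of $d$, the exponents $\tfrac43$ and $\tfrac23$, and the precise forms of $c_3$, $c_4$, $M$ are forced, and getting every constant to line up takes care. A secondary, routine point is matching the filtrations in parts (a)--(b) of Proposition~\ref{prop:estim-bound} with the natural filtration of $(Z_t)$ for the nonlinear closed loop, which goes through because $W_t$ is Gaussian and independent of the injected excitation and of everything generated through time $t-1$.
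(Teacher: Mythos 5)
Your proposal is correct and follows essentially the same route as the paper's proof: the same choice $\delta = \tfrac13 i^{4/3}e^{-c_3 i}c_4$ and $\overline{\Gamma} = \tfrac{1}{\delta}i^2 c_1 I$, the same verification of hypotheses (a)--(c) of Proposition~\ref{prop:estim-bound} (including $\delta\in(0,1)$ via $M'$ and the sample-size condition via the first term of $M$), and the same logarithmic cancellation showing the confidence radius reduces to $d$. The only cosmetic difference is that you observe the radius equals $d$ exactly, whereas the paper only records the inequality $d \geq (90\Sigma_W/p)\sqrt{\cdot}$, which is immaterial.
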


\begin{proof}[Proof Sketch]
    Suppose $d \in (0,\frac{ 90 \Sigma_W }{\sqrt{10\lambda_{\text{min}} \left( \Gamma_{\text{sb}} \right)}})$, and $i \geq M(d,p,q,\Gamma_{\text{sb}})$. Let 
    \begin{align}
        \delta &= \frac{1}{3}i^{\frac{4}{3}} e^{- c_3(d,p,\Gamma_{\text{sb}}) i } c_4(p,q,\Gamma_{\text{sb}}), \\
        \overline{\Gamma}&=(1/\delta)i^2 c_1(q,\Gamma_{\text{sb}}) I .
    \end{align}
    The proof proceeds by showing that $(Z_t,\ssnote{X_{t+1}})_{t\in\mathbb{N}}$ from \eqref{eqn:state-input-data} satisfies the premise of Proposition \ref{prop:estim-bound}. From our selection of $d, i, \delta$ and $\overline{\Gamma}$, $\delta \in (0,1)$ and $\Gamma_{\text{sb}} \preceq \overline{\Gamma}$ both hold. Next, we prove conditions (a)-(c) in Proposition \ref{prop:estim-bound}. In particular, (a) \ssnote{$X_{t+1} = \theta_*^{\top} Z_t + W_t$} for $t \in \mathbb{N}$ holds with $\ssnote{W_t} \mid \mathcal{F}_{t-1}$ mean-zero and $\Sigma_W^2$-sub-Gaussian due to A1 when $\mathcal{F}_t$ is the sigma-algebra generated by $\ssnote{W_0},\hdots,W_t,Z_1,\hdots,Z_t$ for $t \in \ssnote{\mathbb{N}_0}$, (b) $(Z_1,\hdots,Z_i)$ satisfies the $(1,\Gamma_{\text{sb}},p)$-BMSB condition by 1) in our premise, and (c) $P\big( \sum_{t=1}^i Z_t Z_t^{\top} \not \preceq \frac{1}{\delta} i^3 q I \big) \leq \delta$ is satisfied by 2) in our premise. Moreover, \eqref{eqn:estim-bound-i-condition} is satisfied since $i \geq M(d,p,q,\Gamma_{\text{sb}})$ implies $i \geq (10/p^2) ( \log(\frac{1}{\delta})+(4 \log \left( 10 / p \right)) + \log( \det ( \overline{\Gamma}\Gamma_{\text{sb}}^{-1})))$, thereby allowing us to establish from \eqref{eqn:estim-bound-conclusion} that
    \begin{align}
        &P\Bigg(\norm{\hat{\theta}_i - \theta_*}_2 > ( 90 \Sigma_W / p ) \\
        &\times \sqrt{\frac{c_2(p) + \log ( \det ( \overline{\Gamma} \Gamma_{\text{sb}}^{-1} ) ) + \log ( \frac{1}{\delta})}{i \lambda_{\text{min}} \left( \Gamma_{\text{sb}} \right)}} \Bigg) \leq 3 \delta \\
        &= i^{\frac{4}{3}} e^{- c_3(d,p,\Gamma_{\text{sb}}) i } c_4(p,q,\Gamma_{\text{sb}}).
    \end{align}
    The conclusion follows by noting that $d \geq ( 90 \Sigma_W / p ) \sqrt{\frac{c_2(p) + \log ( \det ( \overline{\Gamma} \Gamma_{\text{sb}}^{-1} ) ) + \log ( \frac{1}{\delta})}{i \lambda_{\text{min}} \left( \Gamma_{\text{sb}} \right)}} $ holds.
\end{proof}

\begin{lemma} \label{lemma:msb-ce}
    Consider $(\hat{\theta}_t)_{t \in \mathbb{N}}$ from \eqref{eqn:parameter-estimate} and $(X_t)_{t \in \mathbb{N}_0}$ from \eqref{eqn:closed-loop-system-v2}. Suppose A1-\ssnote{A2} hold on the closed-loop system \eqref{eqn:closed-loop-system-v2} \ssnote{and $x_0 \in \mathbb{R}$}. If there exists $m > 0$ such that $\sum_{k=1}^{\infty}k^2 \sum_{i \geq k-1} P\left(  \norm{\hat{\theta}_i - \theta_*}_2 >d \right) < \infty$ for all $d \in (0,m)$, then
    \begin{equation}
        \exists e > 0: \sup_{t \in \mathbb{N}_0}\mathbb{E}\left[ X_t^2 \right] < e.
    \end{equation}
\end{lemma}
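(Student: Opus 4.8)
The plan is to compare the estimated-gain closed loop to the ideal deadbeat controller $G=-a/b$, isolate the rare times at which the parameter estimate is far from $\theta_*$, and then feed a negative-drift statement into the moment bound of \cite{pemantle1999moment}. Fix $d\in(0,m)$ small (how small is pinned down below), and for $t\ge 2$ call step $t$ \emph{good} if $\norm{\hat\theta_{t-1}-\theta_*}_2\le d$ and \emph{bad} otherwise; steps $t\in\{0,1\}$ are treated as bad since the gain there uses the arbitrary initialization $\hat\theta_{\text{init}}$. On \emph{every} step the certainty-equivalence term is saturated at level $D$ and $\abs{V_t}\le C$, so by A2, $\abs{X_{t+1}}\le\abs{X_t}+\abs{b}U_{\text{max}}+\abs{W_t}$; in particular $(\abs{X_{t+1}}-\abs{X_t})^+\le\abs{b}U_{\text{max}}+\abs{W_t}$ always, and $\mathbb E[X_{t+1}^2\mid\mathcal F_t]\le(\abs{X_t}+\abs{b}U_{\text{max}}+S_1)^2+(\Sigma_W^2-S_1^2)$ on a bad step, using A1.

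The core estimate is the good-step drift. On a good step I would split on whether $\abs{G_tX_t}>D$. In the linear region $X_{t+1}=(a-b\hat a_{t-1}/\hat b_{t-1})X_t+bV_t+W_t$, and $\abs{a-b\hat a_{t-1}/\hat b_{t-1}}=\abs{a\hat b_{t-1}-b\hat a_{t-1}}/\abs{\hat b_{t-1}}\le(1+\abs{b})d/(\abs{b}-d)=:\beta(d)$, which is $<1$ once $d$ is small; since $\mathbb E[bV_t+W_t\mid\mathcal F_t]=0$ the cross term vanishes and $\mathbb E[X_{t+1}^2\mid\mathcal F_t]\le\beta(d)^2X_t^2+b^2C^2/3+\Sigma_W^2$. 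In the saturated region the control equals $-\abs{b}D\,\mathrm{sgn}(\hat a_{t-1})\mathrm{sgn}(X_t)$ (using $\mathrm{sgn}(\hat b_{t-1})=\mathrm{sgn}(b)$ for $d<\abs{b}$); when $\abs{a}>d$ this has sign $\mathrm{sgn}(\hat a_{t-1})=\mathrm{sgn}(a)$, so it opposes $aX_t$ and, by A2, the deterministic part of $X_{t+1}$ has magnitude at most $\abs{X_t}-\abs{b}D$ for $\abs{X_t}$ large, giving $\mathbb E[X_{t+1}^2\mid\mathcal F_t]\le X_t^2-2\abs{b}D\abs{X_t}+b^2D^2+b^2C^2/3+\Sigma_W^2$; the residual case $\abs{a}\le d$ forces $\abs{X_t}$ large, and $\abs{aX_t}\le d\abs{X_t}$ is then dominated, again producing a strict decrease of $X_t^2$. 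Combining the regions, there are $R,\varepsilon>0$ such that on a good step $\mathbb E[X_{t+1}^2\mid\mathcal F_t]\le X_t^2-\varepsilon$ whenever $\abs{X_t}>R$, while $\mathbb E[X_{t+1}^2\mid\mathcal F_t]$ is bounded by a constant whenever $\abs{X_t}\le R$.

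I would then assemble these into the hypotheses of the moment criterion of \cite{pemantle1999moment} applied to a nonnegative potential built from $X_t$ (e.g.\ $\abs{X_t}$, aiming for the second moment). The good-step bound supplies negative drift outside a bounded set; the bound $(\abs{X_{t+1}}-\abs{X_t})^+\le\abs{b}U_{\text{max}}+\abs{W_t}$ together with the Gaussian tails of $W_t$ supplies the required moment control on the (upward) increments, uniformly over good and bad steps; and the positive drift possible on a bad step is a bounded perturbation active only on $\{\text{step }t\text{ bad}\}$, whose aggregate effect is controlled precisely by the hypothesis $\sum_{k\ge1}k^2\sum_{i\ge k-1}P(\norm{\hat\theta_i-\theta_*}_2>d)<\infty$ (which, after reindexing, says $P(\norm{\hat\theta_i-\theta_*}_2>d)$ decays faster than $i^{-4}$, i.e.\ fast enough to sum the damage from bad events over all onset times). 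This yields $\sup_t\mathbb E[X_t^2]<\infty$. I expect the last step to be the main obstacle: reconciling the rare positive-drift bad steps with the additive negative drift on good steps, and casting the computation exactly in the perturbed-negative-drift form required by \cite{pemantle1999moment}, including verifying its increment-moment condition for the chosen potential. By contrast the good-step drift computation, though somewhat case-heavy (linear vs.\ saturated region, $\abs{a}>d$ vs.\ $\abs{a}\le d$, and $\abs{a}=1$ where no linear contraction is available so one must lean on the $-2\abs{b}D\abs{X_t}$ term), is routine.
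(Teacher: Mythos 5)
Your good-step drift computations (the contraction $\abs{a-b\hat a_{t-1}/\hat b_{t-1}}\le(1+\abs{b})d/(\abs{b}-d)$ in the unsaturated region, and the sign argument $\mathrm{sgn}(\hat b_{t-1})=\mathrm{sgn}(b)$, $\mathrm{sgn}(\hat a_{t-1})=\mathrm{sgn}(a)$ for small $d$, which mirrors Lemma \ref{lemma:similar-controller-behaviour-after-time-T}) are essentially sound, but the decisive step of your plan is missing. You need a version of Proposition \ref{prop:constant-drift-conditions} that tolerates ``bad'' steps at which the negative-drift hypothesis fails; the cited result \cite{pemantle1999moment} contains no such perturbed form --- condition \eqref{eqn:msb-condition-1} must hold at \emph{every} time on $\{\xi_t>J\}$, and on a bad step with $\abs{a}=1$ the drift can be positive. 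You yourself flag ``casting the computation in the perturbed-negative-drift form'' as the main obstacle, but that reconciliation \emph{is} the content of the lemma: during a bad phase lasting until time $k$ the state can drift linearly in $k$, contributing of order $k^2$ to the second moment, which is exactly why the hypothesis carries the weight $k^2$; some concrete mechanism must convert that weighted summability into a uniform bound. A restart-at-the-commitment-time argument is not immediate either, since $T_d$ in \eqref{eqn:definition-Td} is not a stopping time and $X_{T_d}$ need not satisfy the requirement $\xi_0\le J$ of Proposition \ref{prop:constant-drift-conditions}. The paper resolves this by a mechanism your proposal has no counterpart of: it never applies the drift result to the adaptive loop at all, but to the reference system \eqref{eqn:reference-system} run with the true gain $-a/b$ and the same noise (Lemma \ref{lemma:stability-reference-system}, via the one-sided processes $(\pm a^tX_t^*)^+$, where the drift condition holds at every step), and then bounds the deviation pathwise, $\abs{X_t-X_t^*}\le\max\big(2\abs{b}D(k+1),\,2(\tfrac{\abs{b}+d^*}{1-d^*}+3\abs{b})D\big)$ on $\{T_{d^*}=k\}$ (Lemma \ref{lemma:bound-reference-error-after-enter-tube}); the $k^2$-weighted hypothesis is then used only to sum $\mathbb{E}[(X_t-X_t^*)^2]$ over the values of $T_{d^*}$.

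There are also secondary problems with your choice of potential. With $\xi_t=X_t^2$, the increment condition \eqref{eqn:msb-condition-2} fails ($\abs{X_{t+1}^2-X_t^2}$ grows linearly in $\abs{X_t}$), and the bad-step perturbation is not ``bounded'' as you claim. With $\xi_t=\abs{X_t}$ the increments are fine, but the claimed negative drift on good saturated steps is not automatic: one only gets $\mathbb{E}[\abs{X_{t+1}}-\abs{X_t}\mid\mathcal{F}_t]\le-\abs{b}D+\mathbb{E}\abs{bV_t+W_t}$, and since $\mathbb{E}\abs{bV_t+W_t}\ge\Sigma_W\sqrt{2/\pi}-\abs{b}C$ the right-hand side can be positive when $\Sigma_W$ is large relative to $\abs{b}U_{\text{max}}$; you would need either the one-sided transformation $a^tX_t$ with $(\cdot)^+$ used in the paper (where the zero-mean noise drops out of the drift exactly) or an additional tail estimate showing the sign-flip correction vanishes for $\abs{X_t}$ large. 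These points are fixable, but together with the absent bad-step mechanism the proof as proposed does not go through.
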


\begin{proof}[Proof Sketch]
    The proof proceeds by considering Case 1) $a \in (-1,1)$, and Case 2) $a \in \{-1,1\}$.

    \textit{Case 1:} Suppose $a \in (-1,1)$. From Lemma \ref{lemma:stability-a-not-1}, the process $(X_t)_{t \in \mathbb{N}_{0}}$ satisfies $\mathbb{E}\left[ X_t^2 \right] \leq \ssnote{x_0^2} + \frac{\beta(\lambda)}{1 - \lambda}$ for all $\lambda \in (a^2,1)$, with $\beta(\lambda)$ defined in Lemma \ref{lemma:stability-a-not-1}. The conclusion follows by choosing $\lambda \in (a^2,1)$, and setting $e = \ssnote{x_0^2} + \frac{\beta(\lambda)}{1-\lambda}$.
    
    \textit{Case 2:} Suppose $a \in \{-1,1\}$. Let $(X_t^*)_{t \in \mathbb{N}_0}$ taking values in $\mathbb{R}$ be the states of a \textit{reference system} satisfying
    \begin{equation}
        X_{t+1}^*=aX_t^*+b(\sigma_D(-(a/b)X_t^*)+V_t)+W_t, \label{eqn:reference-system}
    \end{equation}
    where $X_0^* = x_0$. The upper bound $\mathbb{E}\left[ X_t^2 \right] \leq 2(\mathbb{E}\left[ (X_t^*)^2 \right] + \mathbb{E}\left[ (X_t-X_t^*)^2 \right])$ holds for $t \in \mathbb{N}_0$. 
    
    The importance of the reference system is twofold.
    Firstly, its control strategy is not adaptive, allowing us to establish mean square boundedness of $(X_t^*)_{t\in \mathbb{N}_0}$ by using previously existing analysis techniques for when the system parameters are known, akin to \cite{chatterjee2012mean}. This is proven in Lemma \ref{lemma:stability-reference-system} by deriving the upper bound $\mathbb{E}\left[(X_t^*)^2\right] \leq \mathbb{E}\left[ ((a^t X_t^*)^+)^2 \right] + \mathbb{E}\left[ ((-a^t X_t^*)^+)^2 \right]$, as well as the fact that the the auxiliary sequences $(a^t X_t^*)_{t \in \mathbb{N}_0}$ and $(-a^t X_t^*)_{t \in \mathbb{N}_0}$ \ssnote{(where $a^t$ denotes `$a$ raised to $t$')} satisfy the conditions for Proposition \ref{prop:constant-drift-conditions} --- a result from \cite{pemantle1999moment} which provides moment conditions to uniformly bound a sequence with negative drift --- allowing us to uniformly bound $\mathbb{E}\left[((a^t X_t^*)^+)^2\right]$ and $\mathbb{E}\left[((-a^t X_t^*)^+)^2\right]$ from above.
        
    Secondly, the uniform boundedness of $\mathbb{E}\left[ (X_t-X_t^*)^2 \right]$ is derived by analyzing $(X_t-X_t^*)^2$. Define the  time that $\hat{\theta}_t$ enters and remains in a ball of size $d>0$ around $
    \theta_*$ as
    \begin{equation}
        T_{d} := \inf \left\{ t \in \mathbb{N} \mid  \norm{\hat{\theta}_i - \theta_*}_2 \leq d \text{ for all } i \geq t \right\}. \label{eqn:definition-Td}
    \end{equation}
    Moreover, define $d^*:=\min(m/2,1/2,\abs{b}/2)$. From Lemma \ref{lemma:bound-reference-error-after-enter-tube}, we establish that
    $
        (X_t - X_t^*)^2 \leq \max \Big( 4b^2D^2 \ssnote{(k+1)^2}, 4 \Big( \frac{\abs{b}+d^*}{1-d^*} + 3 \abs{b} \Big)^2D^2 \Big)
    $
    on the event $\{T_{d^*} = k \}$, which by the monotonicity of conditional expectation implies $\mathbb{E}\left[ (X_t - X_t^*)^2 \mid  T_{d^*}=k \right] \leq \max \big( 4b^2D^2 \ssnote{(k+1)^2}, 4\big( \frac{\abs{b}+d^*}{1-d^*} + 3 \abs{b} \big)^2D^2 \big)$ for $k \in \mathbb{N}$. Using this upper bound, the law of total expectation, and the fact that $P(T_{d^*} \geq k) \leq \sum_{i \geq k-1} P\Big(\norm{\hat{\theta}_i - \theta_*}_2 > d^*\Big)$ for $k \in \mathbb{N}$, the upper bound
    \begin{align}
        &\mathbb{E}\left[ (X_t-X_t^*)^2 \right] \leq \sum_{k = 1}^{\infty} \max \Big( 4b^2D^2 \ssnote{(k + 1)^2}, 4\Big( \frac{\abs{b}+d^*}{1-d^*}  \\
        & \quad + 3 \abs{b} \Big)^2D^2 \Big) \sum_{i \geq k-1} P\Big(\norm{\hat{\theta}_i - \theta_*}_2 > d^*\Big) \label{eqn:msb-proof-sketch-ineq-1} 
    \end{align}
    holds for $t \in \mathbb{N}_0$. The assumption that  $\sum_{k=1}^{\infty}k^2 \sum_{i \geq k-1} P\left( \norm{\hat{\theta}_i - \theta_*}_2 >d  \right) < \infty$ for all $d \in (0,m)$ from the premise implies that \eqref{eqn:msb-proof-sketch-ineq-1} is finite. Uniform boundedness of $\mathbb{E}\left[ X_t^2 \right]$ then follows.
\end{proof}

\subsection{Proof of Theorem \ref{thm:msb}}

\begin{proof}
    Let $p>0$ and $\Gamma_{\text{sb}} \succ 0$ be such that $(Z_t)_{t \in \mathbb{N}}$ satisfies the $(1,\Gamma_{\text{sb}},p)$-BMSB condition in Definition \ref{def:bmsb}, where the existence of satisfactory $p$ and $\Gamma_{\text{sb}}$ is established in 1) in Lemma \ref{lemma:satisfy-bmsb}. Moreover, let \ssnote{$q = (\abs{b}U_{\text{max}}+\Sigma_W + |x_0|)^2 + U_{\text{max}}^2$}. Since \ssnote{$(i ( \abs{b}U_{\text{max}} + \Sigma_W ) + |x_0|)^2+U_{\text{max}}^2  \leq i^2 q$} for $i \in \mathbb{N}$, making use of 2) in Lemma \ref{lemma:satisfy-bmsb} it follows that $P\big( \sum_{t=1}^TZ_t Z_t^{\top} \not \preceq \frac{1}{\delta}i^3 q I \big) \leq \delta$ holds for all $i \in \mathbb{N}$ and $\delta \in (0,1)$. With this, we have established that the premise of Lemma \ref{lemma:parameter-estimate-bound} is satisfied, and so we find
    \begin{align}
        &P\left( \norm{\hat{\theta}_i - \theta_*}_2 > d  \right) \leq i^{\frac{4}{3}} e^{- c_3(d,p,\Gamma_{\text{sb}}) i } c_4(p,q,\Gamma_{\text{sb}}) \label{eqn:msb-ineq-4}
    \end{align}
    holds for all $d \in \Big(0,\frac{ 90 \Sigma_W }{\sqrt{10 \lambda_{\text{min}} \left( \Gamma_{\text{sb}} \right)}}\Big)$ and $i \geq M(d,p,q,\Gamma_{\text{sb}})$.

    Now, suppose $d \in \Big(0,\frac{ 90 \Sigma_W }{\sqrt{10 \lambda_{\text{min}} \left( \Gamma_{\text{sb}} \right)}}\Big)$. For ease of readability, let us refer to the functions $c_1,c_2,c_3,c_4$ and $M$ without their arguments, but with an implicit understanding of their dependence on $d,p,q,\Gamma_{\text{sb}}$. The following holds:
    \begin{align}
        &\sum_{k=1}^{\infty}k^2 \sum_{i \geq k-1} P\left(  \norm{\hat{\theta}_i - \theta_*}_2 >d  \right) \\
        & \leq \sum_{k=1}^{\infty}k^2 \sum_{i=k-1}^{M-1} 1 +c_4 \sum_{k=1}^{\infty}k^2 \sum_{i=\max(k-1,M)}^{\infty} i^{\frac{4}{3}} e^{- c_3 i } \label{eqn:msb-ineq-3} \\
        &< \infty \label{eqn:msb-ineq-1} 
    \end{align}
    where \eqref{eqn:msb-ineq-3} follows from \eqref{eqn:msb-ineq-4}, and \eqref{eqn:msb-ineq-1} follows from $\sum_{k=1}^{\infty}k^2 \sum_{i=k-1}^{M-1} 1 =  \sum_{k=1}^{M} k^2 \left({M}-k + 1 \right) < \infty$, as well as
    \begin{align}
         &\sum_{k=1}^{\infty}k^2 \sum_{i=\max(k-1,M)}^{\infty} i^2 e^{- c_3 i } \\
         &= \sum_{k=1}^{M}k^2 \sum_{i=M}^{\infty} i^2 e^{- c_3 i } + \sum_{k=M+1}^{\infty}k^2 \sum_{i=k-1}^{\infty} i^2 e^{- c_3 i } \\
         & < \infty. \label{eqn:msb-ineq-2}
    \end{align}
    Here, \eqref{eqn:msb-ineq-2} follows since $\sum_{k=1}^{M}k^2 \sum_{i=M}^{\infty} i^2 \expf{- c_3 i } = \sum_{k=1}^M k^2 \frac{e^{{c_3}-{c_3} M} \left(-2 e^{c_3} M^2+e^{2 {c_3}} M^2+2 e^{c_3} M+e^{c_3}+M^2-2 M+1\right)}{\left(e^{c_3}-1\right)^3}  < \infty$,
    and
    $
        \sum_{k=M+1}^{\infty}k^2 \sum_{i=k-1}^{\infty} i^2 \expf{- c_3 i } =\frac{e^{-{c_3} (M-2)}}{(e^{c_3}-1)^6} ( (M-2)^2 M^2 \\
         \quad + e^{4 {c_3}} (M+1)^2 M^2 -e^{3 {c_3}} (M+1)^2 (4 M^2-6 M-5) \\
         \quad + e^{2 {c_3}} (6 M^4-6 M^3-25 M^2+8 M+26) \\
         \quad +e^{c_3} (-4 M^4+10 M^3+7 M^2-24 M+9)) < \infty.
    $
    
    From \eqref{eqn:msb-ineq-1}, we have shown that $\sum_{k=1}^{\infty}k^2 \sum_{i \geq k-1} P\left( \norm{\hat{\theta}_i - \theta_*}_2 >d  \right) < \infty$ for all $d \in \Big(0,\frac{ 90 \Sigma_W }{\sqrt{10 \lambda_{\text{min}} \left( \Gamma_{\text{sb}} \right)}}\Big)$. The premise of Lemma \ref{lemma:msb-ce} is thus satisfied with $m=\frac{ 90 \Sigma_W }{\sqrt{10 \lambda_{\text{min}} \left( \Gamma_{\text{sb}} \right)}}$. The conclusion follows.
\end{proof}

\section{Numerical Examples}
To demonstrate the effectiveness of the control strategy in Algorithm 1, we tested it with $U_{\text{max}} = 1$, $C = 0.1$ and $\hat{\theta}_{\text{init}} = (-1,-5)$ on three different systems with $x_0=0$:
\begin{itemize}
    \item System 1: $a = 0.7$, $b = -1$, $\Sigma_W = 1$;
    \item System 2: $a = -1$, $b = 2$, $\Sigma_W = 2$;
    \item System 3: $a = 1$, $b = 0.5$, $\Sigma_W = 1.5$.
\end{itemize}
Fig. \ref{fig:sys-1-sim} shows the empirical ensemble average of $X_t^2$ for Systems 1, 2 and 3 respectively over 1000 runs. Our control strategy seemingly attains mean square boundedness for all three systems, matching the guarantee provided in Theorem \ref{thm:msb}. We simulate System 3 with no controls for comparison, which does not achieve mean square boundedness. \ssnote{Convergence of $\hat{a}_t$ and $\hat{b}_t$ over time are shown in Fig. \ref{fig:sys-1to3-ahat} and \ref{fig:sys-1to3-bhat}.}

\begin{figure}
    \centering
    \begin{subfigure}[b]{0.49\textwidth}
        \centering
        \includegraphics[width=0.6\textwidth]{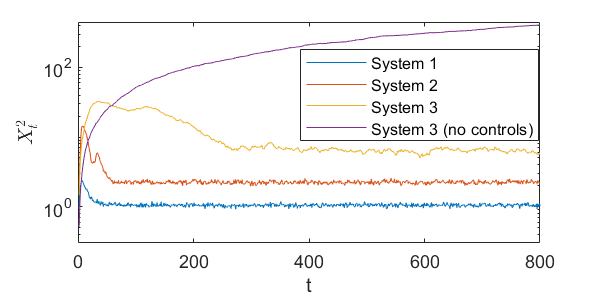}
        \caption{Log scale plot of $X_t^2$}
        \label{fig:sys-1-sim}
    \end{subfigure}
    \begin{subfigure}[b]{0.24\textwidth}
        \centering
        \includegraphics[width=1\textwidth]{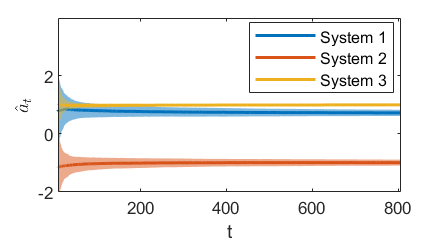}
        \caption{\ssnote{Mean/standard devation of $\hat{a}_t$}}
        \label{fig:sys-1to3-ahat}
    \end{subfigure}%
    \hfill
    \begin{subfigure}[b]{0.24\textwidth}
        \centering
        \includegraphics[width=1\textwidth]{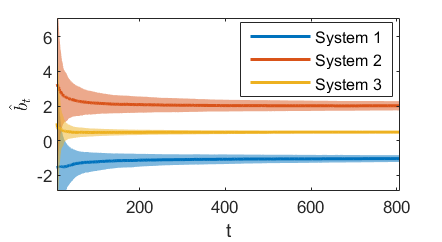}
        \caption{\ssnote{Mean/standard devation of $\hat{b}_t$}}
        \label{fig:sys-1to3-bhat}
    \end{subfigure}
    \caption{\ssnote{Ensemble average plots for Systems 1, 2, 3, and 3 (no controls) over $1000$ runs.}}
        \label{fig:sys-1to3-thetahat}
\end{figure}

\section{Conclusion}
We proposed a perturbed CE control scheme for adaptive control of stochastic, scalar, at-worst marginally stable linear systems subject to additive, i.i.d. Gaussian disturbances, with positive upper bound constraints on the control magnitude. Mean square boundedness of the closed-loop system is established, and demonstrated by numerical examples.

\ssnote{It is possible to consider non-Gaussian stochastic processes in A1, and establish mean square boundedness. The most critical requirements are ensuring $W_{t} \mid \mathcal{F}_{t-1}$ is mean-zero and $\Sigma^2$ sub-Gaussian, and proving Lemma \ref{lemma:paley-zygmund-satisfy-lower-bound}. The latter requires careful inspection of the particular disturbance distribution, the excitation term, and the nonlinear saturation.}

\ssnote{Our approach has a strong potential to be extended to higher dimensions. The core of our method is combining model-based control in Line 6 of Algorithm \ref{alg:ideal-controller-v2} with least squares parameter estimation in Line 9. Stability analysis follows by satisfying Lemma \ref{lemma:satisfy-bmsb} to establish fast convergence of upper bounds on $P(||\hat{\theta}_i-\theta_* ||_2 > d)$ in Lemma \ref{lemma:parameter-estimate-bound}, and proving that fast convergence implies mean square boundedness in Lemma \ref{lemma:msb-ce}. This intuition generalizes to higher dimensions, but to make the jump analytically, some technical challenges remain to be solved. In particular, the careful analysis of 1) in Lemma \ref{lemma:satisfy-bmsb} needs to be scaled up from the 1D case, and an equivalent result to Lemma \ref{lemma:msb-ce} is required, since Lemma \ref{lemma:bound-reference-error-after-enter-tube} for bounding $||X_t-X_t^*||_2$ does not immediately hold in $n$ dimensions.
}




\bibliography{references.bib}
\bibliographystyle{ieeetr.bst}

\appendix
\label{Appendix}

\begin{lemma} \label{lemma:paley-zygmund-satisfy-lower-bound}
Suppose A1-\ssnote{A2} hold on the closed-loop system \eqref{eqn:closed-loop-system-v2} \ssnote{and $x_0 \in \mathbb{R}$}. Let $\ssnote{(\mathcal{F}_t)_{t \in \mathbb{N}_0}}$ be the natural filtration of \ssnote{$(Z_t)_{t \in \mathbb{N}_0}$} from \eqref{eqn:state-input-data}. There exists $\gamma > 0$ such that for all $\zeta \in \mathcal{S}^1$ and $t \geq 0$, 
\begin{align}
    \mathbb{E}\left[ \abs{\zeta^{\top}Z_{t+1}} \mid \mathcal{F}_t \right] \geq \gamma
\end{align}
\end{lemma}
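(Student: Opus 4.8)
The plan is to condition on $\mathcal{F}_t$, decompose $Z_{t+1}$ into an $\mathcal{F}_t$-measurable part plus fresh randomness, and then lower-bound the conditional expectation by splitting on which coordinate of $\zeta$ is large.

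First I would write $X_{t+1} = c_t + W_t$ with $c_t := a X_t + b\sigma_D(G_t X_t) + b V_t$, and check that $c_t$ is $\mathcal{F}_t$-measurable: $X_t$ is the first coordinate of $Z_t$, the gain $G_t$ is a fixed function of $\hat\theta_{t-1}$ (hence of $Z_1,\dots,Z_{t-1}$ and $X_t$, all recorded in $\mathcal{F}_t$), and $V_t = U_t - \sigma_D(G_t X_t)$ is therefore $\mathcal{F}_t$-measurable too. Moreover $\mathcal{F}_t \subseteq \sigma(x_0, W_0,\dots,W_{t-1}, V_0,\dots,V_t)$ by an easy induction, so by A1 and the independence of the excitation sequence, $W_t \sim \mathcal{N}(0,\Sigma_W^2)$ is independent of $\mathcal{F}_t$, while $V_{t+1} \sim \mathrm{Uniform}([-C,C])$ is independent of $\mathcal{G}_t := \sigma(\mathcal{F}_t, W_t)$. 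Writing $R_{t+1} := \sigma_D(G_{t+1} X_{t+1}) = U_{t+1} - V_{t+1}$, which is $\mathcal{G}_t$-measurable with $|R_{t+1}| \leq D$, we obtain $\zeta^\top Z_{t+1} = \zeta_1(c_t + W_t) + \zeta_2 R_{t+1} + \zeta_2 V_{t+1}$. It then suffices to bound $\E[\,|\zeta^\top Z_{t+1}| \mid \mathcal{F}_t\,]$ below by a positive constant depending on neither $\zeta$ nor $t$, and since $\zeta \in \mathcal{S}^1$ forces $|\zeta_1| \geq 1/\sqrt{2}$ or $|\zeta_2| \geq 1/\sqrt{2}$, I would argue in two cases.

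If $|\zeta_2| \geq 1/\sqrt{2}$, I would additionally condition on $W_t$ and integrate over $V_{t+1}$ only: for every $y \in \R$, $\int_{-C}^{C} |y + \zeta_2 v|\, \frac{dv}{2C} \geq \frac{|\zeta_2| C}{2}$, by the elementary fact that the mean of $|y + u|$ over the symmetric interval $u \in [-\beta,\beta]$ with $\beta = |\zeta_2| C$ equals $|y|$ when $|y| \geq \beta$ and $(y^2+\beta^2)/(2\beta) \geq \beta/2$ otherwise. Applying this with $y = \zeta_1(c_t + W_t) + \zeta_2 R_{t+1}$ (which is $\mathcal{G}_t$-measurable, and $V_{t+1} \perp \mathcal{G}_t$) and then averaging over $W_t$ gives $\E[\,|\zeta^\top Z_{t+1}| \mid \mathcal{F}_t\,] \geq |\zeta_2| C/2 \geq C/(2\sqrt{2})$. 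If instead $|\zeta_1| \geq 1/\sqrt{2}$, I would use $|U_{t+1}| \leq U_{\text{max}}$ (guaranteed by G1) together with $|\zeta_2| \leq 1$ and the reverse triangle inequality to get $|\zeta^\top Z_{t+1}| \geq (|\zeta_1|\,|c_t + W_t| - U_{\text{max}})^+$. Since $x \mapsto (|\zeta_1|\,|x| - U_{\text{max}})^+$ is even and convex (the positive part of a convex function) and $W_t$ is symmetric, the function $g(c) := \E[(|\zeta_1|\,|c + W_t| - U_{\text{max}})^+]$ satisfies $g(c) = g(-c)$ and $g(c) + g(-c) \geq 2 g(0)$ (midpoint convexity), hence attains its minimum at $c = 0$; and $g(0)$ is nondecreasing in $|\zeta_1|$. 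As $c_t$ is $\mathcal{F}_t$-measurable with $W_t \perp \mathcal{F}_t$, we conclude $\E[\,|\zeta^\top Z_{t+1}| \mid \mathcal{F}_t\,] \geq g(c_t) \geq g(0) \geq \E[(|W_0|/\sqrt{2} - U_{\text{max}})^+] > 0$, the strict positivity coming from the unbounded support of a $\mathcal{N}(0,\Sigma_W^2)$ variable with $\Sigma_W^2 > 0$ (a CAS yields this number in closed form via $\erfc$ and the Gaussian density, but only positivity is used). Taking $\gamma := \min\bigl(C/(2\sqrt{2}),\, \E[(|W_0|/\sqrt{2} - U_{\text{max}})^+]\bigr) > 0$ finishes the proof.

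The main obstacle is the measurability bookkeeping in the first step — most subtly, the certainty-equivalence gain $G_{t+1}$ depends on $W_t$ through $X_{t+1}$, so it cannot be treated as a frozen constant after conditioning on $\mathcal{F}_t$. I would sidestep this entirely by never using any property of $G_{t+1}$ beyond the uniform saturation bound $|\sigma_D(G_{t+1}\,\cdot)| \leq D$ and its consequence $|U_{t+1}| \leq U_{\text{max}}$; this also renders the argument immune to the probability-zero event $\{\hat b_{t} = 0\}$ on which $G_{t+1}$ is undefined. Once that is handled, both cases reduce to one-dimensional estimates for a Gaussian law and a uniform law.
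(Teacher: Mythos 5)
Your proposal is correct, but it follows a genuinely different route from the paper's. The paper also plays two complementary bounds against each other: averaging out $V_{t+1}$ gives a bound through the saturated state term ($A_{t+1}(\zeta)$, evaluated in Lemma \ref{lemma:lower-bound-A} by replacing $\sigma_D(G_{t+1}X_{t+1})$ with the adversarial minimizer $Q_{t+1}\in[-D,D]$ and computing truncated-Gaussian expectations in closed form), while the excitation term gives $\abs{\zeta_2}C/2$ (Lemma \ref{lemma:lower-bound-B}); it then parameterizes $\zeta=(\cos\phi,\sin\phi)$ and uses continuity and CAS-verified monotonicity of the two bounds on each quarter-circle (Lemma \ref{lemma:properties-f}) to conclude the maximum is uniformly positive, so the paper's $\gamma$ is existential rather than explicit. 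You instead split on $\abs{\zeta_1}\geq 1/\sqrt{2}$ versus $\abs{\zeta_2}\geq 1/\sqrt{2}$: in the excitation-dominant case you condition on the enlarged $\sigma$-algebra $\mathcal{G}_t=\sigma(\mathcal{F}_t,W_t)$ and integrate over $V_{t+1}$ alone (where the paper uses a tower-plus-Jensen step to form $B_{t+1}$), recovering the same constant $\abs{\zeta_2}C/2$; in the state-dominant case you discard all structure of the control beyond $\abs{U_{t+1}}\leq U_{\text{max}}$ and use evenness and convexity of $c\mapsto \mathbb{E}[(\abs{\zeta_1}\abs{c+W_t}-U_{\text{max}})^+]$ together with the symmetry and unbounded support of the Gaussian to get the explicit floor $\mathbb{E}[(\abs{W_0}/\sqrt{2}-U_{\text{max}})^+]>0$. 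The measurability and independence bookkeeping you do ($c_t=aX_t+bU_t$ is $\mathcal{F}_t$-measurable, $W_t$ independent of $\mathcal{F}_t$, $V_{t+1}$ independent of $\mathcal{G}_t$, $R_{t+1}$ $\mathcal{G}_t$-measurable) is exactly the independence structure the paper itself relies on, so no new assumptions are introduced, and ignoring the fine structure of $G_{t+1}$ indeed sidesteps the $\hat{b}_t=0$ null-set issue. What your argument buys is elementarity and an explicit constant $\gamma=\min\bigl(C/(2\sqrt{2}),\,\mathbb{E}[(\abs{W_0}/\sqrt{2}-U_{\text{max}})^+]\bigr)$, with no CAS computations, no truncated-Gaussian formulas, and no circle-compactness argument; what the paper's sharper pointwise bound $\max(f(\zeta_1,\zeta_2),\abs{\zeta_2}C/2)$ buys is potentially larger admissible $\gamma$, hence larger $\Gamma_{\text{sb}}$ and $p$ in Lemma \ref{lemma:satisfy-bmsb} and a faster decay constant in Lemma \ref{lemma:parameter-estimate-bound} --- but since the lemma only asserts existence of some $\gamma>0$, your coarser bound fully suffices.
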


\begin{lemma} \label{lemma:stability-a-not-1}
    Consider the states $(X_t)_{t \in \mathbb{N}_0}$ from the closed-loop system \eqref{eqn:closed-loop-system-v2}. Suppose A1-\ssnote{A2} hold\ssnote{, $x_0 \in \mathbb{R}$}, and $a \in (-1,1)$. For all $\lambda \in (a^2,1)$ and $t \in \mathbb{N}_0$, we have
    \begin{equation}
        \mathbb{E}\left[ X_t^2 \right]  \leq \ssnote{x_0^2} + \frac{\beta(\lambda)}{1 - \lambda}.
    \end{equation}
    where
    \begin{align}
    \beta(\lambda) &:= a^2 E(\lambda)^2 + 2 \abs{a} E(\lambda) D_1 + D_2, \\
    E(\lambda) &:= \frac{\abs{a}D_1 + \sqrt{a^2D_1^2 + (\lambda - a^2)D_2}}{\lambda - a^2}, \\
    D_1 &:= \abs{b}U_{\text{max}} + S_1 , \\
    D_2 &:= b^2 U_{\text{max}}^2 + 2 \abs{b} U_{\text{max}} S_1 + S_2.
\end{align}
\end{lemma}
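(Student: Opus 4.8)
\emph{Plan.} The approach is to collapse the closed-loop recursion into a one-dimensional drift inequality for $m_t := \mathbb{E}[X_t^2]$ and then iterate it. Write $X_{t+1} = aX_t + N_t$ with $N_t := b\bigl(\sigma_D(G_tX_t)+V_t\bigr)+W_t$. By construction of the control law, $\abs{\sigma_D(G_tX_t)+V_t}\le D+C=U_{\text{max}}$, so pointwise $\abs{N_t}\le\abs{b}U_{\text{max}}+\abs{W_t}$ and hence also $N_t^2\le(\abs{b}U_{\text{max}}+\abs{W_t})^2$. Let $\mathcal{G}_t$ be a sub-$\sigma$-algebra with respect to which $X_t$ and $G_t$ are measurable but $V_t,W_t$ remain mean-zero and independent of $\mathcal{G}_t$ (for instance the one generated by $X_0,\dots,X_t,U_0,\dots,U_{t-1}$, which contains the data defining $\hat{\theta}_{t-1}$ and hence $G_t$). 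Taking conditional expectations and using $\mathbb{E}[\abs{W_t}]=S_1$, $\mathbb{E}[W_t^2]=S_2$ (finite by A1) gives $\mathbb{E}[\abs{N_t}\mid\mathcal{G}_t]\le\abs{b}U_{\text{max}}+S_1=D_1$ and $\mathbb{E}[N_t^2]\le b^2U_{\text{max}}^2+2\abs{b}U_{\text{max}}S_1+S_2=D_2$.

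Expanding, $m_{t+1}=a^2m_t+2a\,\mathbb{E}[X_tN_t]+\mathbb{E}[N_t^2]$. Since $\mathbb{E}[X_tN_t]=\mathbb{E}\bigl[X_t\,\mathbb{E}[N_t\mid\mathcal{G}_t]\bigr]$, I would bound $\abs{\mathbb{E}[X_tN_t]}\le\mathbb{E}\bigl[\abs{X_t}\,\mathbb{E}[\abs{N_t}\mid\mathcal{G}_t]\bigr]\le D_1\,\mathbb{E}[\abs{X_t}]\le D_1\sqrt{m_t}$ (the last step by Jensen), together with $\mathbb{E}[N_t^2]\le D_2$, to reach the scalar inequality
\[
    m_{t+1}\le a^2m_t+2\abs{a}D_1\sqrt{m_t}+D_2 .
\]

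The conceptual crux is then to upgrade this to $m_{t+1}\le\lambda m_t+\beta(\lambda)$ for each fixed $\lambda\in(a^2,1)$. The key algebraic observation is that $E(\lambda)$ is exactly the positive root of $(\lambda-a^2)E^2=2\abs{a}D_1E+D_2$, so that $\beta(\lambda)=a^2E(\lambda)^2+2\abs{a}D_1E(\lambda)+D_2=\lambda E(\lambda)^2\ge 0$. Writing $s:=\sqrt{m_t}$ and $g(s):=a^2s^2+2\abs{a}D_1s+D_2$, which is nondecreasing on $[0,\infty)$, I would split into two cases: if $s\le E(\lambda)$ then $m_{t+1}\le g(s)\le g(E(\lambda))=\beta(\lambda)\le\lambda m_t+\beta(\lambda)$; if $s>E(\lambda)$ then, substituting $2\abs{a}D_1=\bigl((\lambda-a^2)E(\lambda)^2-D_2\bigr)/E(\lambda)$ and using $s/E(\lambda)>1$, one gets $2\abs{a}D_1s=(\lambda-a^2)E(\lambda)s-D_2 s/E(\lambda)<(\lambda-a^2)s^2-D_2$, whence $g(s)<\lambda s^2=\lambda m_t\le\lambda m_t+\beta(\lambda)$. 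Unrolling $m_{t+1}\le\lambda m_t+\beta(\lambda)$ from $m_0=x_0^2$ then yields $m_t\le\lambda^t x_0^2+\beta(\lambda)\sum_{j=0}^{t-1}\lambda^j\le x_0^2+\beta(\lambda)/(1-\lambda)$, which is the claim (the interval $(a^2,1)$ being nonempty since $a\in(-1,1)$). I expect the main obstacle to be the comparatively mundane bookkeeping around the cross term $\mathbb{E}[X_tN_t]$ — namely choosing a filtration for which the gain $G_t$, which depends on the least-squares estimate and hence on earlier controls and excitations, is measurable while $V_t$ and $W_t$ remain independent and mean-zero; once that is set up, the rest is the elementary quadratic case analysis above, and everything else (finiteness of $S_1,S_2$, monotonicity of $g$) is immediate.
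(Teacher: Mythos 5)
Your proposal is correct and follows essentially the same route as the paper: both reduce the closed-loop dynamics, via the pointwise bound $\abs{bU_t+W_t}\leq \abs{b}U_{\text{max}}+\abs{W_t}$ and independence of $W_t$ from the past, to the drift inequality $\mathbb{E}[X_{t+1}^2]\leq \lambda\,\mathbb{E}[X_t^2]+\beta(\lambda)$ with the same $E(\lambda)$ (the positive root of $(\lambda-a^2)E^2=2\abs{a}D_1E+D_2$), and then iterate the geometric recursion to get $x_0^2+\beta(\lambda)/(1-\lambda)$. The only cosmetic difference is that the paper performs the case split on the event $\{\abs{X_t}\leq E(\lambda)\}$ before taking expectations, whereas you first pass to the scalar recursion $m_{t+1}\leq a^2m_t+2\abs{a}D_1\sqrt{m_t}+D_2$ via Jensen and split on $\sqrt{m_t}$ versus $E(\lambda)$ --- both are valid and yield the identical bound.
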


\begin{lemma} \label{lemma:stability-reference-system}
Consider the states $(X_t^*)_{t \in \mathbb{N}_0}$ from the reference system \eqref{eqn:reference-system}. Suppose A1-\ssnote{A2} hold\ssnote{, $x_0 \in \mathbb{R}$,} and $a \in \{-1,1\}$. There exists $e>0$ such that for all $t \in \mathbb{N}_0$, $\mathbb{E}\left[ \left(X_t^*\right)^2 \right] \leq e$.
\end{lemma}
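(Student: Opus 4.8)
The plan is to reduce the marginally stable case $a\in\{-1,1\}$ to a scalar random walk with a bounded, sign‑restoring drift and then invoke the moment bound of Proposition~\ref{prop:constant-drift-conditions}. Since $a^2=1$ the homogeneous part $aX_t^*$ cannot be made to decay, so the right device is the change of variables $Y_t:=a^tX_t^*$ (the auxiliary sequences already named in the sketch of Lemma~\ref{lemma:msb-ce}). Using $a^2=1$, the oddness of $\sigma_D$, and $\sigma_D(cx)=c\,\sigma_D(x)$ for $c\in\{-1,1\}$, a direct substitution into \eqref{eqn:reference-system} collapses the feedback term and yields
\begin{equation}
    Y_{t+1}=Y_t-\sigma_{\abs{b}D}(Y_t)+N_t,\qquad N_t:=a^{t+1}(bV_t+W_t),\qquad Y_0=x_0,
\end{equation}
where $\abs{b}D>0$ (as $b\neq0$ and $C<U_{\text{max}}$) and, since $bV_t+W_t$ is symmetric and the $(V_t,W_t)$ are i.i.d., the sequence $(N_t)_{t\in\mathbb{N}_0}$ is i.i.d., mean zero, independent of $\mathcal{F}_t:=\sigma(X_0^*,\dots,X_t^*)$, and has finite moments of all orders by A1. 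Because $\abs{a^t}=1$, the identities $x=x^+-(-x)^+$ and $x^+(-x)^+=0$ give $(X_t^*)^2=Y_t^2=(Y_t^+)^2+((-Y_t)^+)^2$, so it suffices to bound $\sup_t\mathbb{E}[(Y_t^+)^2]$ and $\sup_t\mathbb{E}[((-Y_t)^+)^2]$ uniformly; by the sign symmetry of the recursion it is enough to treat $(Y_t)_{t\in\mathbb{N}_0}$.

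Next I would verify that $(Y_t)_{t\in\mathbb{N}_0}$ satisfies the hypotheses of Proposition~\ref{prop:constant-drift-conditions}. The increment bound is immediate: $\abs{Y_{t+1}^+-Y_t^+}\le\abs{Y_{t+1}-Y_t}\le\abs{b}D+\abs{N_t}\le\abs{b}U_{\text{max}}+\abs{W_t}$, so every conditional increment moment is finite and bounded uniformly in $t$ (whatever order Proposition~\ref{prop:constant-drift-conditions} needs for an $L^2$ conclusion exists under A1). For the drift, on the event $\{Y_t>\abs{b}D\}$ one has $Y_{t+1}=Y_t-\abs{b}D+N_t$, and using $z^+=z+(-z)^+$ together with $\mathbb{E}[N_t\mid\mathcal{F}_t]=0$,
\begin{equation}
    \mathbb{E}\big[Y_{t+1}^+-Y_t^+\mid\mathcal{F}_t\big]=-\abs{b}D+\mathbb{E}\big[(\abs{b}D-Y_t-N_t)^+\mid\mathcal{F}_t\big].
\end{equation}
Since $Y_t>\abs{b}D$, the random variable $(\abs{b}D-Y_t-N_t)^+$ is dominated by $\abs{N_t}$ and tends to $0$ a.s.\ as $Y_t\to\infty$; because the $N_t$ are identically distributed, dominated convergence produces a threshold $J>0$, uniform in $t$, with $\mathbb{E}[(\abs{b}D-Y_t-N_t)^+\mid\mathcal{F}_t]\le\abs{b}D/2$ on $\{Y_t^+>J\}$, hence $\mathbb{E}[Y_{t+1}^+-Y_t^+\mid\mathcal{F}_t]\le-\abs{b}D/2$ there. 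Proposition~\ref{prop:constant-drift-conditions} then gives $\sup_t\mathbb{E}[(Y_t^+)^2]<\infty$; the identical argument applied to $(-Y_t)_{t\in\mathbb{N}_0}$ bounds $\sup_t\mathbb{E}[((-Y_t)^+)^2]$, and adding the two bounds (with $\mathbb{E}[(X_0^*)^2]=x_0^2$) proves the claim.

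The hard part will be the drift estimate: a priori the saturated feedback supplies only a fixed restoring push of magnitude $\abs{b}D$ while the Gaussian disturbance is unbounded, so the argument hinges on showing that the overshoot correction $\mathbb{E}[(\abs{b}D-Y_t-N_t)^+]$ becomes negligible once $Y_t$ is large and, crucially, that this happens uniformly in $t$ — this is precisely the mechanism by which a constant‑magnitude control stabilizes a marginally stable plant driven by unbounded noise, in the spirit of \cite{chatterjee2012mean}. A secondary bookkeeping point is matching the exact increment‑moment order in the statement of Proposition~\ref{prop:constant-drift-conditions} to the desired $L^2$ conclusion, which is harmless under A1. Everything else — the substitution $Y_t:=a^tX_t^*$, the odd/scaling identities for $\sigma_D$ that reduce $b\,\sigma_D(\cdot/b)$ to $\sigma_{\abs{b}D}(\cdot)$, and the split of $Y_t^2$ into squared positive parts — is routine.
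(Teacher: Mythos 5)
Your proposal is correct and follows the same overall route as the paper: the substitution $Y_t:=a^tX_t^*$, the collapse of the feedback to $\sigma_{\abs{b}D}(Y_t)$, the split $\mathbb{E}[(X_t^*)^2]=\mathbb{E}[((Y_t)^+)^2]+\mathbb{E}[((-Y_t)^+)^2]$, and an appeal to Proposition~\ref{prop:constant-drift-conditions} for each of $(Y_t)$ and $(-Y_t)$, with the fourth-moment increment bound handled exactly as you indicate (the paper uses $M=8(b^4U_{\text{max}}^4+S_4)$). The one place you diverge is the step you flag as ``the hard part,'' and there you make the argument harder than it needs to be: you apply the proposition to the positive part $\xi_t=Y_t^+$, which forces the overshoot correction $\mathbb{E}[(\abs{b}D-Y_t-N_t)^+\mid\mathcal{F}_t]$ and a dominated-convergence threshold argument to get drift $-\abs{b}D/2$. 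The paper instead applies Proposition~\ref{prop:constant-drift-conditions} to the signed sequence $\xi_t=Y_t$ itself: the hypothesis \eqref{eqn:msb-condition-1} only needs to hold on $\{\xi_t>J\}$, and there $\sigma_{\abs{b}D}(Y_t)=\abs{b}D$ while $\mathbb{E}[a^{t+1}(bV_t+W_t)\mid\mathcal{F}_t]=0$, so the drift is exactly $-\abs{b}D$ with $J=\abs{b}D+|x_0|$ (which also takes care of $\xi_0\le J$, a condition you should state explicitly) --- no overshoot analysis, no appeal to symmetry of the noise, and the conclusion \eqref{eqn:msb-conclusion} still delivers the bound on $\mathbb{E}[(\xi_t^+)^2]$ that you want. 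Your version is valid, just less economical; do pin down the moment order (the proposition asks for conditional fourth moments) rather than leaving it as ``whatever order is needed.''
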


\begin{proposition} \label{prop:constant-drift-conditions} \cite[Theorem 1]{pemantle1999moment}
Let $(\xi_t)_{t\in \mathbb{N}_0}$ be a sequence of scalar random variables 
and let $(\mathcal{F}_t)_{t \in \mathbb{N}_0}$ be any filtration to which $(\xi_t)_{t \in \mathbb{N}_0}$ is adapted. Suppose that there exist constants $\gamma > 0$ and $J, M < \infty$, such that $\xi_0 \leq J$, and for all $t$:
\begin{equation}
    \mathbb{E}\left[ \xi_{t+1} - \xi_t | \mathcal{F}_t \right] \leq -\gamma \text{ on the event } \{\xi_t > J \}, \label{eqn:msb-condition-1}
\end{equation}
and
\begin{equation}
    \mathbb{E}\left[ |\xi_{t+1} - \xi_t|^4 \Big| \xi_0, \hdots, \xi_t \right] \leq M. \label{eqn:msb-condition-2}
\end{equation}
Then there exists a constant $c > 0$ such that
\begin{equation}
    \sup_{t \in \mathbb{N}_0} \mathbb{E}\left[ (\xi_t^+)^2 \right] \leq c. \label{eqn:msb-conclusion}
\end{equation}
\end{proposition}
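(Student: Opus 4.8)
The plan is to reduce the statement to controlling the excursions of $(\xi_t)$ above the threshold $J$, since on $\{\xi_t \le J\}$ we trivially have $(\xi_t^+)^2 \le J^2$. Replacing $(\mathcal{F}_t)$ by the natural filtration of $(\xi_t)$ if necessary (via the tower property, so that both \eqref{eqn:msb-condition-1} and \eqref{eqn:msb-condition-2} hold relative to $\sigma(\xi_0,\dots,\xi_t)$), fix $t$ and let $\sigma := \max\{s \le t : \xi_s \le J\}$, which is well defined because $\xi_0 \le J$. On $\{\sigma = s\}$ with $s < t$ we have $\xi_{s+1},\dots,\xi_t > J$, so \eqref{eqn:msb-condition-1} applies at each step from $s+1$ to $t-1$. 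Writing $\xi_t - \xi_s = \sum_{k=s}^{t-1}(\xi_{k+1}-\xi_k)$ and performing the Doob decomposition of each increment into its $\mathcal{F}_k$-conditional mean plus a martingale difference, the telescoped conditional means are at most $M^{1/4} - \gamma(t-1-s)$ on $\{\sigma=s\}$ (the $-\gamma$ contributions from the steps $k=s+1,\dots,t-1$, plus one uncontrolled step $k=s$ whose conditional mean has absolute value $\le M^{1/4}$ by Jensen applied to \eqref{eqn:msb-condition-2}), while the martingale differences sum to a martingale $(\mathcal{M}_m)_{m\ge s}$ with $\mathcal{M}_s = 0$ whose increments are bounded in $L^4$, again by \eqref{eqn:msb-condition-2}. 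Hence on $\{\sigma = s\}$,
\[
\xi_t \;\le\; J + M^{1/4} - \gamma(t-1-s) + \mathcal{M}_{t}.
\]

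The remaining work is to bound $\mathbb{E}\big[(\xi_t^+)^2\big] \le J^2 + \sum_{s=0}^{t-1}\mathbb{E}\big[(\xi_t^+)^2\,\mathbf{1}_{\sigma=s}\big]$ uniformly in $t$, and there are two ingredients, both driven by \eqref{eqn:msb-condition-2}. First, a Burkholder--Davis--Gundy (or Doob) estimate gives $\|\mathcal{M}_{t}\|_{L^4} \le C\,(t-s)^{1/2}$ for a martingale run over $t-s$ steps with $L^4$-bounded increments. Second, the event $\{\sigma = s\}$ forces $(\mathcal{M}_m)$ to outrun the linear drift along the whole stretch $s+1,\dots,t$ (at the endpoint, $\mathcal{M}_{t} > \gamma(t-1-s) - J - M^{1/4}$, and more is true at every intermediate time), and combining this path constraint with the $L^4$ increment bound shows that $P(\sigma = s)$ decays fast enough in $t-s$ that, after a H\"older split pairing $\|\mathcal{M}_{t}^2\|$-type factors against $P(\sigma=s)$-type factors, $\sum_{s} \mathbb{E}\big[(\xi_t^+)^2\mathbf{1}_{\sigma=s}\big]$ is finite with a bound independent of $t$. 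This yields $\sup_t \mathbb{E}[(\xi_t^+)^2] \le c$, which is exactly \eqref{eqn:msb-conclusion}.

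It is worth noting why the more obvious Foster--Lyapunov route does not suffice on its own. With $V(x) = (x^+)^2$, \eqref{eqn:msb-condition-1}--\eqref{eqn:msb-condition-2} give, on $\{\xi_t > J'\}$ for $J'$ large, $\mathbb{E}[V(\xi_{t+1}) \mid \mathcal{F}_t] \le V(\xi_t) - \gamma\xi_t + \sqrt{M} \le V(\xi_t) - \tfrac{\gamma}{2}\sqrt{V(\xi_t)}$, and on $\{\xi_t \le J'\}$ a crude bound gives $\mathbb{E}[V(\xi_{t+1}) \mid \mathcal{F}_t] \le b < \infty$. Because the negative drift only produces a decrease of order $\sqrt{V}$ rather than of order $V$, there is no geometric one-step contraction, so this inequality does not close into a uniform-in-$t$ moment bound; the excursion decomposition above is exactly the device that aggregates many such steps, using that a high excursion must, by negative drift, be correspondingly long and is therefore rare.

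The main obstacle is precisely this aggregation together with the matching of exponents: the hypothesis controls the \emph{fourth} moment of the increments and we want the \emph{second} moment of $\xi_t^+$, and the excursion computation closes only because the growth of $\|\mathcal{M}\|_{L^4}$ in the excursion length is beaten by the decay of $P(\sigma = s)$ in that length --- an accounting that is tight and appears to require using the full-path survival constraint (not merely the endpoint) to obtain adequate decay, which is where the fourth-moment assumption is genuinely used. A secondary technical point is that $(\xi_t)$ is only assumed adapted rather than Markov, so every estimate must be phrased via conditional expectations and martingale inequalities rather than via hitting-time or Markov-chain arguments, which is why the increment-by-increment decomposition relative to the filtration, rather than a renewal argument over i.i.d.\ cycles, is the correct vehicle.
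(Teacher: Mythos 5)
You should note at the outset that the paper offers no proof of Proposition~\ref{prop:constant-drift-conditions}: it is imported verbatim from \cite[Theorem 1]{pemantle1999moment}, so your attempt has to stand on its own. Your skeleton --- last-exit decomposition at $\sigma=\max\{s\le t:\xi_s\le J\}$, Doob decomposition of the increments along the excursion, a Rosenthal/Burkholder bound $\|\mathcal{M}_t\|_{L^4}\lesssim (t-s)^{1/2}$, and the correct observation that the endpoint constraint alone is not enough --- is the right family of ideas, but the proof stops exactly where the theorem's content lies, and the mechanism you propose for closing it is quantitatively insufficient. You never establish any decay rate for $P(\sigma=s)$; you only assert that the full-path constraint makes it ``fast enough.'' Moreover, with only conditional fourth moments of the increments, the H\"older pairing of $\mathcal{M}_t^2$ against $\mathbf{1}_{\{\sigma=s\}}$ can do no better than Cauchy--Schwarz, i.e. $\mathbb{E}[\mathcal{M}_t^2\mathbf{1}_{\{\sigma=s\}}]\le\|\mathcal{M}_t\|_{L^4}^2\,P(\sigma=s)^{1/2}\lesssim k\,P(\sigma=s)^{1/2}$ with $k=t-s$. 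But the strongest decay of $P(\sigma=s)$ compatible with \eqref{eqn:msb-condition-2} is essentially $k^{-4}$ (up to logarithmic factors): increment distributions with tails just inside the fourth-moment constraint, such as $P(|\xi_{t+1}-\xi_t|>x)\sim x^{-4}(\log x)^{-2}$, let the excursion survive the linear barrier for $k$ steps by a single jump of size about $\gamma k$, an event of probability of order $k^{-4}$ up to logs. Your per-excursion terms are then of order $k\cdot k^{-2}=k^{-1}$ (again up to logs), and the resulting majorant $\sum_{k\le t}k^{-1}$ grows like $\log t$, so it does not yield a bound independent of $t$. The gap is therefore not just an omitted computation: the closing step as described would fail at the boundary of exponents.

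The missing idea is to retain, rather than discard via H\"older, the correlation between $\{\sigma=s\}$ and the size of $\mathcal{M}_t$: on that event the endpoint is itself forced above roughly $\gamma(k-1)-J-M^{1/4}$, so $\mathbb{E}[\mathcal{M}_t^2\mathbf{1}_{\{\sigma=s\}}]$ has to be estimated through the joint tail $P(\mathcal{M}_t>\lambda,\ \sigma=s)$ (integrated in $\lambda$), which in turn requires a quantitative lemma bounding the probability that a martingale with conditionally bounded fourth moments of its increments stays above a linearly growing barrier for $k$ steps \emph{and} ends at a prescribed high level. Establishing and exploiting such a joint estimate is precisely where the fourth-moment hypothesis is spent in the cited reference, and it is what improves the per-excursion contribution from order $k^{-1}$ to a summable rate. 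A further point needing care in that step: $\{\sigma=s\}$ is not $\mathcal{F}_s$-measurable (it constrains the future), so all of these bounds must be routed through fixed-$s$ martingale maximal-type inequalities rather than by conditioning on $\{\sigma=s\}$. Until you supply this joint-tail estimate and redo the summation with it, the uniform-in-$t$ conclusion \eqref{eqn:msb-conclusion} does not follow from your sketch.
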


\begin{lemma} \label{lemma:bound-reference-error-after-enter-tube}
Consider the states $(X_t)_{t \in \mathbb{N}_0}$ from the closed-loop system \eqref{eqn:closed-loop-system-v2}, and $(X_t^*)_{t \in \mathbb{N}_0}$ from the reference system \eqref{eqn:reference-system}. Suppose A1-\ssnote{A2} hold, \ssnote{$x_0 \in \mathbb{R}$, }and $a \in \{-1,1\}$, for both the closed-loop system and reference system. For all $d \in (0,\min(1,\abs{b}))$ , $k \in \mathbb{N}$ and $t \in \mathbb{N}_0$, on the event $\{ T_{d} = k \}$, $\abs{X_t - X_t^*} \leq \max \left( \ssnote{(k + 1)} \abs{b} 2D, 2 \left(\frac{\abs{b}+d}{1-d} + 3\abs{b} \right) D  \right)$ (with $T_d$ defined in \eqref{eqn:definition-Td}).
\end{lemma}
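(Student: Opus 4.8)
I would work directly with the error process $e_t := X_t - X_t^*$. Subtracting \eqref{eqn:reference-system} from \eqref{eqn:closed-loop-system-v2}, the excitation $bV_t$ and the disturbance $W_t$ cancel, so $e_t$ obeys the deterministic-increment recursion $e_{t+1} = a e_t + b\sigma_D(G_t X_t) - b\sigma_D(-(a/b)X_t^*)$ with $e_0 = 0$. Because $\abs{a}=1$ (recall $a\in\{-1,1\}$) and each saturated term has magnitude at most $D$, this already yields the crude pathwise estimate $\abs{e_{t+1}}\le \abs{e_t}+2\abs{b}D$, hence $\abs{e_t}\le 2\abs{b}D\,t$ for all $t\in\mathbb{N}_0$; in particular $\abs{e_t}\le 2\abs{b}D(k+1)$ for every $t\le k+1$, which lies below the first term of the claimed maximum. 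It therefore remains to show that on $\{T_d=k\}$ the error stays below a constant once $t\ge k+1$.

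On $\{T_d=k\}$ and for $t\ge k+1$ we have $G_t=-\hat{a}_{t-1}/\hat{b}_{t-1}$ with $\abs{\hat{a}_{t-1}-a}\le d$ and $\abs{\hat{b}_{t-1}-b}\le d$. The first thing I would exploit is that $d<\min(1,\abs{b})=\min(\abs{a},\abs{b})$: this forces $\hat{a}_{t-1},\hat{b}_{t-1}$ to have the same signs as $a,b$, so $G_t$ has the sign of $-a/b$, and it gives $\abs{\hat{a}_{t-1}}\ge 1-d$ and $\abs{\hat{b}_{t-1}}\le\abs{b}+d$. Using $a\in\{-1,1\}$ one then checks the exact identities $b\sigma_D(-(a/b)x)=-a\,\sigma_{\abs{b}D}(x)$ and $b\sigma_D(G_t x)=-a\,\beta_t\,\sigma_{\rho_t}(x)$ for all $x\in\mathbb{R}$, where $\beta_t:=\abs{b}\abs{\hat{a}_{t-1}}/\abs{\hat{b}_{t-1}}>0$ and $\rho_t:=D/\abs{G_t}$ satisfy $\beta_t\rho_t=\abs{b}D$ and $\rho_t\le D(\abs{b}+d)/(1-d)=:R$. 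Substituting these and $X_t^*=X_t-e_t$ puts the recursion, for $t\ge k+1$, into the clean form $e_{t+1}=a\big(e_t-\beta_t\sigma_{\rho_t}(X_t)+\sigma_{\abs{b}D}(X_t-e_t)\big)$.

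The crux is then to show $\abs{e_{t+1}}\le\abs{e_t}$ whenever $\abs{e_t}>R+\abs{b}D$; that is, the error is non-expansive once it is large. I would prove this by a short case split keyed to the breakpoints of the two saturation maps --- whether $\abs{X_t}\le\rho_t$ (the estimated controller is or is not saturated at $X_t$) and whether $\abs{X_t^*}\le\abs{b}D$ --- together with the signs of $X_t$ relative to $X_t^*=X_t-e_t$. The facts used repeatedly are: $\beta_t\abs{\sigma_{\rho_t}(X_t)}\le\beta_t\rho_t=\abs{b}D$ always; $\beta_t\sigma_{\rho_t}(X_t)=\abs{b}D\,\operatorname{sign}(X_t)$ once $\abs{X_t}>\rho_t$; and, since $\abs{e_t}>R+\abs{b}D$ exceeds $\abs{X_t}$ whenever $\abs{X_t}\le R$, in that regime $X_t-e_t$ has sign $-\operatorname{sign}(e_t)$ and magnitude larger than $\abs{b}D$, so $\sigma_{\abs{b}D}(X_t-e_t)=-\abs{b}D\,\operatorname{sign}(e_t)$ removes $\abs{b}D$ of the magnitude of $e_t$ rather than adding to it. Running through the handful of cases shows the correction terms never net-increase $\abs{e_t}$. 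When instead $\abs{e_t}\le R+\abs{b}D$, the crude estimate gives $\abs{e_{t+1}}\le\abs{e_t}+2\abs{b}D\le R+3\abs{b}D$, which is at most $2\big(\tfrac{\abs{b}+d}{1-d}+3\abs{b}\big)D$. Combining, an induction on $t\ge k+1$ with base case $\abs{e_{k+1}}\le 2\abs{b}D(k+1)$ propagates the invariant $\abs{e_t}\le\max\big(2\abs{b}D(k+1),\,2(\tfrac{\abs{b}+d}{1-d}+3\abs{b})D\big)$; with the first paragraph's bound for $t\le k+1$ this is exactly the claim.

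The step I expect to be the main obstacle is this last case analysis. It is elementary, but since $x\mapsto\sigma_D(G_t x)$ and $x\mapsto\sigma_D(-(a/b)x)$ are piecewise linear with distinct breakpoints $\rho_t$ and $\abs{b}D$, one must carefully enumerate which branch each argument is on and control the induced sign patterns; the delicate configuration is when $X_t$ lies deep inside the estimated dead zone while $X_t^*$ is large, where the crude estimate only yields growth by $2\abs{b}D$ and the cancellation coming from $\operatorname{sign}(X_t-e_t)=-\operatorname{sign}(e_t)$ must be made explicit.
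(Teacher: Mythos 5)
Your argument is correct, and its skeleton matches the paper's: the crude bound $\abs{e_t}\le 2\abs{b}Dt$ for $t\le k+1$, exploitation of the estimate accuracy on $\{T_d=k\}$ for $t\ge k+1$ (your sign/magnitude observations are exactly the paper's Lemma \ref{lemma:similar-controller-behaviour-after-time-T} together with the bound $\rho_t\le R$), a one-step estimate of the form $\abs{e_{t+1}}\le\max(\abs{e_t},\mathrm{const})$, and iteration. Where you genuinely differ is in how that one-step estimate is organized. The paper passes to $Y_t=a^tX_t$, $Y_t^*=a^tX_t^*$ to absorb the sign flip and splits the sample space into four events according to whether $\abs{Y_t}$ and $\abs{Y_t^*}$ exceed the thresholds $HD$ and $(H+2\abs{b})D$, with $H=(\abs{b}+d)/(1-d)$ (so $HD$ is your $R$), proving non-expansion on the first three events and the constant bound $2(H+3\abs{b})D$ on the fourth. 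You instead use $\abs{a}=1$ to take absolute values directly, rewrite both feedback terms through the identities $b\sigma_D(-(a/b)x)=-a\sigma_{\abs{b}D}(x)$ and $b\sigma_D(G_tx)=-a\beta_t\sigma_{\rho_t}(x)$ (both check out given the sign consistency of $\hat a_{t-1},\hat b_{t-1}$), and key the dichotomy to the error itself: non-expansion whenever $\abs{e_t}>R+\abs{b}D$, crude growth to $R+3\abs{b}D$ otherwise. I verified your non-expansion claim closes: with $u=\beta_t\sigma_{\rho_t}(X_t)$, $v=\sigma_{\abs{b}D}(X_t-e_t)$ and, say, $e_t>R+\abs{b}D$, the three configurations ($X_t$ beyond $\rho_t$ on the side of $e_t$, beyond $\rho_t$ on the opposite side, or inside the dead zone) all give $0\le e_t-u+v\le e_t$, using $\beta_t\rho_t=\abs{b}D$, $\rho_t\le R$ and $R\ge\abs{b}D$; the negative case is symmetric. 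What the paper's route buys is a case split that directly tracks which of the two controllers saturates; what yours buys is a single monotonicity statement about $\abs{e_t}$, no auxiliary $Y$-processes, and an interim constant $R+3\abs{b}D$ that is in fact slightly tighter than the paper's $2(H+3\abs{b})D$ before both are relaxed to the bound stated in the lemma.
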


\section*{Supplementary Materials}
A dependency graph for the theoretical results in this work is illustrated in Figure \ref{fig:results-dependency-graph}.

\begin{figure}[thpb]
  \centering
  \includegraphics[width=0.49\textwidth]{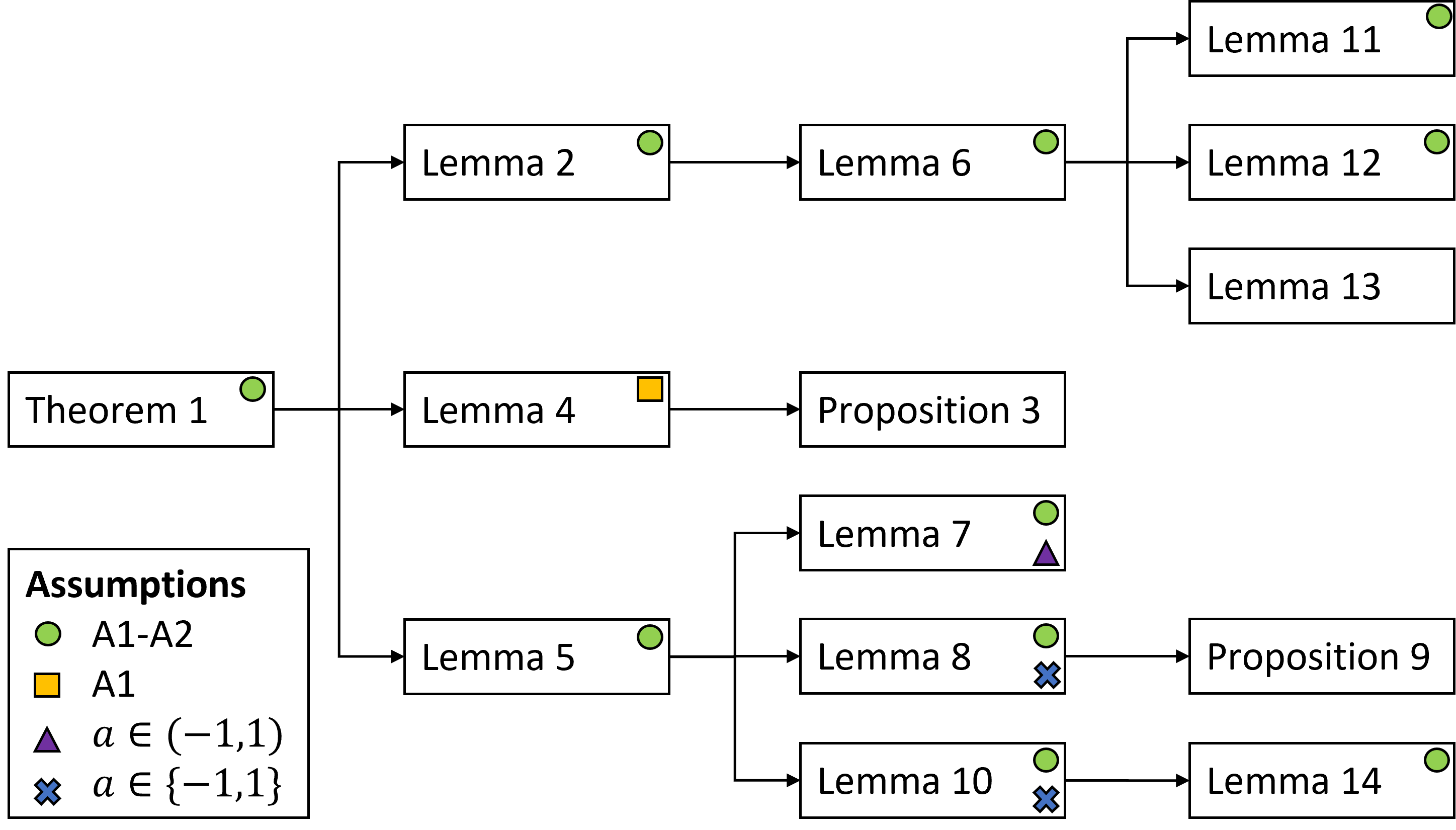}
  \caption{Dependency graph for theoretical results in this work.}
  \label{fig:results-dependency-graph}
\end{figure}

\subsection{Analysis for Lemma \ref{lemma:satisfy-bmsb}}
We start by providing the proof of Lemmas  \ref{lemma:satisfy-bmsb} and \ref{lemma:paley-zygmund-satisfy-lower-bound}. Following this, we provide Lemmas \ref{lemma:lower-bound-A}, \ref{lemma:lower-bound-B} and \ref{lemma:properties-f} and their proofs, which are supporting results used to prove Lemma \ref{lemma:paley-zygmund-satisfy-lower-bound}.
\begin{proof}[Proof of Lemma \ref{lemma:satisfy-bmsb}]
    We start by proving 1). Let $(\mathcal{F}_t)_{t \in \mathbb{N}_0}$ be the natural filtration of $(Z_t)_{t \in \mathbb{N}_0}$. Let $\gamma > 0$ satisfy $\mathbb{E}\left[ \abs{\zeta^{\top}Z_{t+1}} \mid \mathcal{F}_t \right] \geq \gamma$ for all $\zeta \in \mathcal{S}^1$ and $t \in \mathbb{N}_0$, where the existence of satisfactory values is established in Lemma \ref{lemma:paley-zygmund-satisfy-lower-bound}. Now, suppose $\zeta = (\zeta_1,\zeta_2) \in \mathcal{S}^1$, and $t \in \mathbb{N}_0$. We have
    \begin{align}
    &\Var{\zeta^{\top}Z_{t+1} \mid \mathcal{F}_t}\\ 
    &= \mathbb{E}[ (( \zeta_1 X_{t+1} + \zeta_2 U_{t+1}) - \mathbb{E}[ \zeta_1 X_{t+1} + \zeta_2 U_{t+1} \mid \mathcal{F}_t ])^2 \mid \mathcal{F}_t ] \\
    & \leq 2(\Var{X_{t+1} \mid \mathcal{F}_t} + \Var{U_{t+1} \mid \mathcal{F}_t} ) \label{eqn:paley-zygmund-satisfy-upper-bound-ineq-2}
    \end{align}
    where \eqref{eqn:paley-zygmund-satisfy-upper-bound-ineq-2} follows from $(a+b)^2 \leq 2(a^2 + b^2)$ for $a,b \in \mathbb{R}$, linearity of expectation, and $\zeta_1^2,\zeta_2^2 \leq 1$. Focusing on $\Var{X_{t+1} \mid \mathcal{F}_t}$, we have the equality
    \begin{align}
        \Var{X_{t+1} \mid \mathcal{F}_t} &= \Var{\theta_*^{\top} Z_t + W_t \mid \mathcal{F}_t} \\
        &= \Var{\theta_*^{\top} Z_t + W_t \mid Z_t} \\
        &= \Var{W_t \mid Z_t} \label{eqn:paley-zygmund-satisfy-upper-bound-ineq-3} \\
        &= \Sigma_W^2 \label{eqn:paley-zygmund-satisfy-upper-bound-ineq-4}.
    \end{align}
    where \eqref{eqn:paley-zygmund-satisfy-upper-bound-ineq-3} holds since $Z_t$ is $\mathcal{F}_t$-measurable.
    Since $U_{t+1}$ only takes on values in $[-U_{\text{max}}, U_{\text{max}}]$, we have via Popoviciu's inequality,
    \begin{align}
        \Var{U_{t+1} \mid \mathcal{F}_t} \leq \frac{1}{4}\left( U_{\text{max}} - (-U_{\text{max}}) \right)^2 = U_{\text{max}}^2. \label{eqn:paley-zygmund-satisfy-upper-bound-ineq-5}
    \end{align}
    Combining \eqref{eqn:paley-zygmund-satisfy-upper-bound-ineq-2}, \eqref{eqn:paley-zygmund-satisfy-upper-bound-ineq-4}, and \eqref{eqn:paley-zygmund-satisfy-upper-bound-ineq-5}, we derive the following upper bound:
    \begin{equation}
        \Var{\zeta^{\top}Z_{t+1} \mid \mathcal{F}_t} \leq 2(\Sigma_W^2 + U_{\text{max}}^2). \label{eqn:paley-zygmund-satisfy-upper-bound-ineq}
    \end{equation}
    Next, note that the following inequality holds for all $\psi \in (0,1)$:
\begin{align}
    &P\left(\abs{\zeta^{\top}Z_{t+1}} > \sqrt{\zeta^{\top} \left( \psi^2 \gamma^2 I \right) \zeta} \mid \mathcal{F}_t \right) \\
    &= P\left( \abs{\zeta^\top Z_{t+1}} > \psi \gamma \mid \mathcal{F}_t \right) \label{eqn:paley-zygmund-sufficient-conditions-ineq-5} \\
    &\geq P(\abs{\zeta^{\top}Z_{t+1}} > \psi \mathbb{E}\left[ \abs{\zeta^{\top}Z_{t+1}} \mid \mathcal{F}_t \right] \mid \mathcal{F}_t ) \label{eqn:paley-zygmund-sufficient-conditions-ineq-1} \\
    &\geq \left( 1 + \frac{\Var{\abs{\zeta^{\top} Z_{t+1}} \mid \mathcal{F}_t}}{(1-\psi)^2 \mathbb{E}\left[ \abs{\zeta^{\top}Z_{t+1}} \mid \mathcal{F}_t \right]^2} \right)^{-1} \label{eqn:paley-zygmund-sufficient-conditions-ineq-2} \\
    &\geq \left( 1 + \frac{\Var{\zeta^{\top} Z_{t+1} \mid \mathcal{F}_t}}{(1-\psi)^2 \mathbb{E}\left[ \abs{\zeta^{\top}Z_{t+1}} \mid \mathcal{F}_t \right]^2} \right)^{-1} \label{eqn:paley-zygmund-sufficient-conditions-ineq-3} \\
    &\geq \left( 1 + \frac{2(\Sigma_W^2 + U_{\text{max}}^2)}{(1-\psi)^2 \gamma^2} \right)^{-1} \label{eqn:paley-zygmund-sufficient-conditions-ineq-4}
\end{align}
where \eqref{eqn:paley-zygmund-sufficient-conditions-ineq-5} holds since $\zeta^\top \zeta = 1$, \eqref{eqn:paley-zygmund-sufficient-conditions-ineq-1} follows from $\mathbb{E}\left[ \abs{\zeta^{\top}Z_{t+1}} \mid \mathcal{F}_t \right] \geq \gamma$, \eqref{eqn:paley-zygmund-sufficient-conditions-ineq-2} follows from an improvement of the Paley-Zygmund inequality via the Cauchy-Schwarz inequality, \eqref{eqn:paley-zygmund-sufficient-conditions-ineq-3} holds since for any random variable $X$ taking values in $\mathbb{R}$, $\Var{\abs{X}} \leq \Var{X}$ is true making use of Jensen's inequality. Finally, \eqref{eqn:paley-zygmund-sufficient-conditions-ineq-4} follows from $\mathbb{E}\left[ \abs{\zeta^{\top}Z_{t+1}} \mid \mathcal{F}_t \right] \geq \gamma$ and \eqref{eqn:paley-zygmund-satisfy-upper-bound-ineq}.
Fixing $\psi \in (0,1)$, and setting $\Gamma_{\text{sb}} = \psi^2 \gamma^2 I$ and $p = (1 + 2(\Sigma_W^2 + U_{\text{max}}^2) / ((1-\psi)^2 \gamma^2))^{-1}$, result 1) then follows.

We now prove 2). We start this by establishing that $\sum_{t=1}^i \mathbb{E}\left[ U_{t}^2 \right] \leq i U_{\text{max}}^2$. Using the fact that for all $t \in \mathbb{N}$, $U_{t}^2 \leq U_{\text{max}}^2$, we have,
\begin{align}
    \sum_{t=1}^i \mathbb{E}\left[ U_{t}^2 \right] \leq \sum_{t=1}^i U_{\text{max}}^2 = i U_{\text{max}}^2. \label{eqn:cov-upper-bound-ineq-9}
\end{align}
Next, we prove that $\sum_{t=1}^i \mathbb{E}\left[ X_{t}^2 \right] \leq i(i ( \abs{b}U_{\text{max}} + \Sigma_W ) + |x_0|)^2$. For all $t \in \mathbb{N}_0$, we have,
\begin{align}
    &\mathbb{E}[ X_{t+1}^2 ] \\
    &= \mathbb{E}[ ( a X_t + b U_t + W_t )^2 ] \\
    &= a^2 \mathbb{E}[X_t^2] + 2 a b \mathbb{E}[X_t U_t] + b^2 \mathbb{E}[U_t^2] + \Sigma_W^2 \\
    & \leq \mathbb{E}[X_t^2] + 2 \abs{b} U_{\text{max}} \sqrt{\mathbb{E}[ X_t^2 ]} + b^2 U_{\text{max}}^2 \label{eqn:cov-upper-bound-ineq-2} \\
    &\quad + \Sigma_W^2 \\
    &= ( \sqrt{\mathbb{E}[ X_t^2 ]} + \abs{b} U_{\text{max}} )^2 + \Sigma_W^2. \label{eqn:cov-upper-bound-ineq-3}
\end{align}
where \eqref{eqn:cov-upper-bound-ineq-2} follows from A2 and the Cauchy-Schwarz inequality, and \eqref{eqn:cov-upper-bound-ineq-3} follows via quadratic factorization. Taking the square root of both sides, we then have,
\begin{align}
    \sqrt{\mathbb{E}\left[ X_{t+1}^2 \right]} &\leq \sqrt{\left( \sqrt{\mathbb{E}\left[ X_t^2 \right]} + \abs{b} U_{\text{max}} \right)^2 + \Sigma_W^2} \\
    & \leq \sqrt{\mathbb{E}\left[ X_t^2 \right]} + \abs{b} U_{\text{max}} + \Sigma_W. \label{eqn:cov-upper-bound-ineq-4}
\end{align}
By iteratively applying \eqref{eqn:cov-upper-bound-ineq-4} and noting that $\mathbb{E}\left[ X_0^2 \right] = x_0^2$, we have for $t \geq 1$,
\begin{align}
    \sqrt{\mathbb{E}\left[ X_t^2 \right]} &\leq \sum_{s=0}^{t-1}\left( \abs{b} U_{\text{max}} + \Sigma_W \right) + |x_0| \\
    &= t ( \abs{b} U_{\text{max}} + \Sigma_W ) + |x_0|.
\end{align}
Squaring both sides, we then have
\begin{align}
    \mathbb{E}\left[ X_t^2 \right] \leq (t ( \abs{b}U_{\text{max}} + \Sigma_W ) + |x_0|)^2.
\end{align}
Summing from $t=1$ to $i$, we have,
\begin{align}
    &\sum_{t=1}^i \mathbb{E}\left[ X_t^2 \right] \\
    &\leq \sum_{t=1}^i  (t \left( \abs{b}U_{\text{max}} + \Sigma_W \right) + |x_0|)^2 \\
    &\leq \sum_{t=1}^i  (i \left( \abs{b}U_{\text{max}} + \Sigma_W \right) + |x_0|)^2 \\
    &= i(i ( \abs{b}U_{\text{max}} + \Sigma_W ) + |x_0|)^2. \label{eqn:cov-upper-bound-ineq-10}
\end{align}
Next, note that the following holds:
\begin{align}
&\sum_{t=1}^i \trace{ \mathbb{E}\left[ Z_t Z_t^{\top} \right]} \\
&= \sum_{t=1}^i \mathbb{E}\left[ X_{t}^2 \right] + \sum_{t=1}^i \mathbb{E}\left[ U_{t}^2 \right] \\
&\leq i ((i ( \abs{b}U_{\text{max}} + \Sigma_W ) + |x_0|)^2+U_{\text{max}}^2 ) \label{eqn:cov-upper-bound-ineq-8}
\end{align}
where \eqref{eqn:cov-upper-bound-ineq-8} follows from \eqref{eqn:cov-upper-bound-ineq-9} and \eqref{eqn:cov-upper-bound-ineq-10}. Finally, fix $\delta \in (0,1)$. We find that
\begin{align}
    &P ( \sum_{t=1}^i Z_t Z_t^{\top} \not \preceq \frac{1}{\delta} i((i ( \abs{b}U_{\text{max}} + \Sigma_W ) + |x_0|)^2+U_{\text{max}}^2 ) I ) \allowdisplaybreaks \\
    &= P \Big( \lambda_{\text{max}}( (i((i ( \abs{b}U_{\text{max}} + \Sigma_W ) + |x_0|)^2+U_{\text{max}}^2 ))^{-1}  \allowdisplaybreaks \\
    &\quad \times \sum_{t=1}^i Z_t Z_t^{\top} ) \geq \frac{1}{\delta} \Big) \allowdisplaybreaks \label{eqn:cov-upper-bound-eq-1} \\
    & \leq \delta \mathbb{E}\Big[ \lambda_{\text{max}}( (i((i ( \abs{b}U_{\text{max}} + \Sigma_W ) + |x_0|)^2+U_{\text{max}}^2 ))^{-1}  \allowdisplaybreaks \\
    &\quad \times \sum_{t=1}^i Z_t Z_t^{\top} )  \Big] \allowdisplaybreaks \label{eqn:cov-upper-bound-ineq-5} \\
    & \leq \delta \mathbb{E}\Big[ \text{tr} \Big( (i((i ( \abs{b}U_{\text{max}} + \Sigma_W ) + |x_0|)^2+U_{\text{max}}^2 ))^{-1}  \allowdisplaybreaks \\
    & \quad \times \sum_{t=1}^i Z_t Z_t^{\top} \Big)  \Big] \label{eqn:cov-upper-bound-ineq-6} \allowdisplaybreaks \\
    &= \delta (i((i ( \abs{b}U_{\text{max}} + \Sigma_W ) + |x_0|)^2+U_{\text{max}}^2 ))^{-1} \\
    &\quad \times \sum_{t=1}^i \trace{\mathbb{E}[Z_t Z_t^{\top} ] } \allowdisplaybreaks \\
    & \leq \delta \label{eqn:cov-upper-bound-ineq-7}
\end{align}
where \eqref{eqn:cov-upper-bound-eq-1} follows from the definition of $\preceq$, \eqref{eqn:cov-upper-bound-ineq-5} follows from Markov's inequality, \eqref{eqn:cov-upper-bound-ineq-6} holds since for a matrix $M \in \mathbb{R}^{d} \times \mathbb{R}^d$, $\lambda_{\text{max}}(M) \leq \trace{M}$, and \eqref{eqn:cov-upper-bound-ineq-7} follows from \eqref{eqn:cov-upper-bound-ineq-8}.
Thus, result 2) has been established.
\end{proof}

\begin{proof}[Proof of Lemma \ref{lemma:paley-zygmund-satisfy-lower-bound}]
For all $\zeta=(\zeta_1,\zeta_2) \in \mathcal{S}^1$ and $t \geq 0$, we have,
\begin{align}
    &\mathbb{E}[ \abs{\zeta^{\top} Z_{t+1}} \mid \mathcal{F}_t ] \\
    &= \mathbb{E}[ \abs{\zeta_1 X_{t+1} + \zeta_2 U_{t+1}} \mid \mathcal{F}_t ] \\
    &= \mathbb{E}[ \abs{\zeta_1 X_{t+1} + \zeta_2 ( \sigma_D (G_{t+1} X_{t+1}) + V_{t+1})} \mid \mathcal{F}_t ] \label{eqn:paley-zygmund-satisfy-lower-bound-eq-1} \\
    &= \mathbb{E}[ \abs{\zeta_1 X_{t+1} + \zeta_2 ( \sigma_D (G_{t+1} X_{t+1}) ) + \zeta_2 V_{t+1}} \mid \mathcal{F}_t ]. \label{eqn:paley-zygmund-satisfy-lower-bound-eq-2}
\end{align}
where \eqref{eqn:paley-zygmund-satisfy-lower-bound-eq-1} follows from \eqref{eqn:control-policy-v2}. A lower bound can be derived for \eqref{eqn:paley-zygmund-satisfy-lower-bound-eq-2} as follows:
\begin{align}
    &\mathbb{E}[ \abs{\zeta_1 X_{t+1} + \zeta_2 ( \sigma_D (G_{t+1} X_{t+1}) ) + \zeta_2 V_{t+1}} \mid \mathcal{F}_t ] \\
    &= \mathbb{E}[ \mathbb{E}[|\zeta_1 X_{t+1} + \zeta_2 ( \sigma_D (G_{t+1} X_{t+1}) ) \label{eqn:paley-zygmund-satisfy-lower-bound-eq-3} \\
    & \quad + \zeta_2 V_{t+1}| \mid X_{t+1}, \mathcal{F}_t ] \mid \mathcal{F}_t ] \\
    &\geq \mathbb{E}[ | \mathbb{E}[\zeta_1 X_{t+1} + \zeta_2 ( \sigma_D (G_{t+1} X_{t+1}) ) \label{eqn:paley-zygmund-satisfy-lower-bound-ineq-1} \\
    &\quad + \zeta_2 V_{t+1} \mid X_{t+1}, \mathcal{F}_t  ] |  \mid \mathcal{F}_t ]\\
    &= \mathbb{E}[ | \mathbb{E}[\zeta_1 X_{t+1} \label{eqn:paley-zygmund-satisfy-lower-bound-eq-4} \\
    & \quad + \zeta_2 ( \sigma_D (G_{t+1} X_{t+1}) ) \mid X_{t+1}, \mathcal{F}_t  ] |  \mid \mathcal{F}_t ] \\
    &= \mathbb{E}[ \abs{ \zeta_1 X_{t+1} + \zeta_2 ( \sigma_D (G_{t+1} X_{t+1}) ) }  \mid \mathcal{F}_t ] \label{eqn:paley-zygmund-satisfy-lower-bound-eq-5} \\
    &= \mathbb{E}[ \abs{ A_{t+1}(\zeta) }  \mid \mathcal{F}_t ] \label{eqn:paley-zygmund-satisfy-lower-bound-eq-5b}
\end{align}
where \eqref{eqn:paley-zygmund-satisfy-lower-bound-eq-3} follows from the tower property, \eqref{eqn:paley-zygmund-satisfy-lower-bound-ineq-1} follows from Jensen's inequality and the monotonocity of conditional expectation, \eqref{eqn:paley-zygmund-satisfy-lower-bound-eq-4} follows from the independence of $V_{t+1}$ and $X_{t+1}, \mathcal{F}_t$, and \eqref{eqn:paley-zygmund-satisfy-lower-bound-eq-5} follows since $\zeta_1 X_{t+1}+ \zeta_2 ( \sigma_D (G_{t+1} X_{t+1}) )$ is $X_{t+1},\mathcal{F}_t$-measurable, and \eqref{eqn:paley-zygmund-satisfy-lower-bound-eq-5b} follows by defining
\begin{equation}
    A_{t+1}(\zeta):= \zeta_1 X_{t+1} + \zeta_2 \left( \sigma_D (G_{t+1} X_{t+1}) \right), \quad t \in \mathbb{N}_0. \label{eqn:paley-zygmund-satisfy-lower-bound-define-A}
\end{equation}
Similarly,
\begin{align}
    &\mathbb{E}[ |\zeta_1 X_{t+1} + \zeta_2 ( \sigma_D (G_{t+1} X_{t+1}) ) + \zeta_2 V_{t+1}| \mid \mathcal{F}_t ] \\
    &\geq \mathbb{E}[ | \mathbb{E}[\zeta_1 X_{t+1} + \zeta_2 ( \sigma_D (G_{t+1} X_{t+1}) ) \label{eqn:paley-zygmund-satisfy-lower-bound-ineq-2} \\
    & \quad + \zeta_2 V_{t+1} \mid V_{t+1}, \mathcal{F}_t  ] |  \mid \mathcal{F}_t ]\\
    &= \mathbb{E}[ | \mathbb{E}[\zeta_1 X_{t+1} + \zeta_2 ( \sigma_D (G_{t+1} X_{t+1}) ) \mid \mathcal{F}_t  ] \label{eqn:paley-zygmund-satisfy-lower-bound-eq-7} \\
    & \quad + \zeta_2 V_{t+1} |  \mid \mathcal{F}_t ] \\
    &= \mathbb{E}\left[ \abs{B_{t+1}(\zeta)} \mid \mathcal{F}_t \right] \label{eqn:paley-zygmund-satisfy-lower-bound-eq-7b}
\end{align}
where \eqref{eqn:paley-zygmund-satisfy-lower-bound-ineq-2} follows from the tower property, Jensen's inequality and the monotonocity of conditional expectation, and \eqref{eqn:paley-zygmund-satisfy-lower-bound-eq-7} follows since $V_{t+1}$ is $V_{t+1},\mathcal{F}_t$-measurable and $\zeta_1 X_{t+1}+ \zeta_2 ( \sigma_D (G_{t+1} X_{t+1}) )$ is independent of $V_{t+1}$, and \eqref{eqn:paley-zygmund-satisfy-lower-bound-eq-7b} follows by defining
\begin{equation}
    B_{t+1}(\zeta):= \mathbb{E}\left[\zeta_1 X_{t+1} + \zeta_2 \left( \sigma_D (G_{t+1} X_{t+1}) \right) \mid \mathcal{F}_t  \right] + \zeta_2 V_{t+1}. \label{eqn:paley-zygmund-satisfy-lower-bound-define-B}
\end{equation}
for $t \in \mathbb{N}_0$. The lower bounds from \eqref{eqn:paley-zygmund-satisfy-lower-bound-eq-5b} and \eqref{eqn:paley-zygmund-satisfy-lower-bound-eq-7b} are then combined to obtain $\mathbb{E}\left[ \abs{\zeta^{\top} Z_{t+1}} \mid \mathcal{F}_t \right] \geq \max\left(\mathbb{E}\left[ \abs{A_{t+1}(\zeta)} \mid \mathcal{F}_t \right],\mathbb{E}\left[ \abs{B_{t+1}(\zeta)} \mid \mathcal{F}_t \right]\right)$.

From Lemma \ref{lemma:lower-bound-A}, we have that $\mathbb{E}\left[ \abs{A_t(\zeta)} \mid \mathcal{F}_t \right] \geq f(\zeta_1,\zeta_2)$, where
\begin{align}
    f(\zeta_1,\zeta_2) := \begin{cases}
        \abs{\zeta_1}\left( \Sigma_W \sqrt{\frac{2}{\pi}} \right), \quad \zeta_1 \neq 0, \zeta_2 = 0 \\
        0, \quad \zeta_1 = 0, \zeta_2 \in \mathbb{R}\\
        \expf{-\frac{D^2 \zeta_2^2}{2 \Sigma_W^2 \zeta_1^2}}\sqrt{\frac{2}{\pi}}\Sigma_W \abs{\zeta_1} \\
        \qquad + D \zeta_2 \left( 1 + \erff{\frac{D \zeta_2}{\sqrt{2}\abs{\zeta_1}}} \right), \\
        \qquad \zeta_1 \neq 0, \zeta_2 < 0\\
        \expf{-\frac{D^2 \zeta_2^2}{2 \Sigma_W^2 \zeta_1^2}}\sqrt{\frac{2}{\pi}}\Sigma_W \abs{\zeta_1} \\
        \qquad - D \zeta_2 \left( \erfcf{\frac{D \zeta_2}{\sqrt{2}\abs{\zeta_1}}} \right), \\
        \qquad \zeta_1 \neq 0, \zeta_2 > 0
    \end{cases}. \label{eqn:definition-f}
\end{align}

From Lemma \ref{lemma:lower-bound-B}, it follows that $\mathbb{E}\left[ \abs{B_t(\zeta)} \mid \mathcal{F}_t \right] \geq \frac{\abs{\zeta_2}C}{2}$. Let us define $g(\zeta_2):= \frac{\abs{\zeta_2}C}{2}$.

Observe that $\max\left(\mathbb{E}\left[ \abs{A_t(\zeta)} \mid \mathcal{F}_t \right],\mathbb{E}\left[ \abs{B_t(\zeta)} \mid \mathcal{F}_t \right]\right) \geq \max \left( f(\zeta_1,\zeta_2), g(\zeta_2) \right)$. Now, we aim to prove that there exists $\gamma > 0$ such that for all $\zeta=(\zeta_1,\zeta_2) \in \mathcal{S}^1$ and $t \geq 0$, $\max \left( f(\zeta_1,\zeta_2), g(\zeta_2) \right) \geq \gamma$. In order to do so, let us parameterize $\zeta$ by the angle $\phi$. Specifically, we let $\zeta = (\zeta_1,\zeta_2) = (\cos(\phi), \sin(\phi))$, and then we will prove that for all $\phi \in [-\pi , \pi]$, $\max \left( f(\cos (\phi), \sin (\phi)), g(\sin (\phi)) \right) \geq \gamma$.

To aid in this proof, note that $g(\sin(\cdot))$ exhibits the following useful properties: 1) $g(\sin(\phi))$ is continuous over $\phi \in [-\pi, \pi]$; 2) $g(\sin(\phi))$ is strictly increasing over $\phi \in [-\pi, -\pi/2]$ and $[0,\pi/2]$, and strictly decreasing over $[-\pi/2,0]$ and $[\pi/2, \pi]$; 3) $g(\sin(\phi)) \geq 0$ for all $\phi \in [-\pi,\pi]$.

Consider the case where $\phi \in [-\pi, -\pi/2]$. Using Lemma \ref{lemma:properties-f} and the properties of $g(\sin(\cdot))$, $f(\cos (\phi), \sin (\phi))$ and $g(\sin (\phi))$ satisfy: 1) $g(\sin(\phi))$, $f(\cos (\phi), \sin (\phi))$ are continuous over $[-\pi, -\pi/2]$; 2) $ f(\cos (-\pi), \sin (-\pi)) = \Sigma_W \sqrt{\frac{2}{\pi}} > g(\sin(-\pi)) = 0$ and $g(\sin(-\pi/2)) = \frac{C}{2} > f(\cos (-\pi/2), \sin (-\pi/2)) = 0 $; 3) $g(\sin (\phi))$ is strictly increasing and $f(\cos (\phi), \sin (\phi))$ is strictly decreasing over $\phi \in [-\pi,-\pi/2]$; 4) $\max({g(\sin (-\pi)),f(\cos (-\pi/2), \sin (-\pi/2))}) \geq 0$. Therefore, there exists $\gamma_1 > 0$ such that for all $\phi \in [-\pi, -\pi/2]$, $\max(f(\cos (\phi), \sin (\phi)),g(\sin (\phi))) \geq \gamma_1$. Using a similar argument, the same holds true when $\phi \in [0,\pi/2]$, such that there exists $\gamma_3 > 0$ such that for all $\phi \in [0,\pi/2]$, $\max(f(\cos (\phi), \sin (\phi)),g(\sin (\phi))) \geq \gamma_3$.

Next, consider the case where $\phi \in [-\pi/2, 0]$. Using Lemma \ref{lemma:properties-f} and the properties of $g(\sin(\cdot))$, $f(\cos (\phi), \sin (\phi))$ and $g(\sin (\phi))$ satisfy: 1) $f(\cos (\phi), \sin (\phi))$, $g(\sin(\phi))$ are continuous over $[-\pi/2, 0]$; 2) $g(\sin(-\pi/2)) = \frac{C}{2} >f(\cos (-\pi/2), \sin (-\pi/2)) = 0$ and $f(\cos (0), \sin (0)) = \Sigma_w \sqrt{\frac{2}{\pi}}>g(\sin(0)) = 0$; 3) $f(\cos (\phi), \sin (\phi))$ is strictly increasing and $g(\sin (\phi))$ is strictly decreasing over $\phi \in [-\pi/2,0]$; 4) $\max({f(\cos (-\pi/2), \sin (-\pi/2)),g(\sin (0))}) \geq 0$. Therefore, there exists $\gamma_2 > 0$ such that for all $\phi \in [-\pi/2, 0]$, $\max(f(\cos (\phi), \sin (\phi)),g(\sin (\phi))) \geq \gamma_2$. Using a similar argument, the same holds true when $\phi \in [\pi / 2, \pi]$, such that there exists $\gamma_4 > 0$ such that for all $\phi \in [\pi / 2, \pi]$, $\max(f(\cos (\phi), \sin (\phi)),g(\sin (\phi))) \geq \gamma_4$.

Setting $\gamma = \min (\gamma_1,\gamma_2,\gamma_3,\gamma_4)$, the conclusion then follows.

\end{proof}

\begin{lemma} \label{lemma:lower-bound-A}
Suppose A1-A2 hold on the closed-loop system \eqref{eqn:closed-loop-system-v2} \ssnote{and $x_0 \in \mathbb{R}$}. Let $(\mathcal{F}_t)_{t \in \mathbb{N}_0}$ be the natural filtration of $(Z_t)_{t \in \mathbb{N}_0}$ from \eqref{eqn:state-input-data}. Then, $\mathbb{E}\left[ \abs{A_{t+1}(\zeta)} \mid \mathcal{F}_t \right] \geq f(\zeta_1,\zeta_2)$ for all $\zeta \in \mathcal{S}^1$ and $t \in \mathbb{N}_0$, where $A_{t+1}$ is from \eqref{eqn:paley-zygmund-satisfy-lower-bound-define-A} and $f$ is from \eqref{eqn:definition-f}.
\end{lemma}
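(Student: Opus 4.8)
The plan is to lower-bound $\abs{A_{t+1}(\zeta)}$ \emph{pointwise} by a quantity that does not involve the gain factor $G_{t+1}$, and only afterwards take the conditional expectation. This ordering is the crux of the argument: the gain factor $G_{t+1}$ (which for $t \ge 1$ equals $-\hat{a}_t/\hat{b}_t$) is built from the least-squares estimate $\hat{\theta}_t$ in \eqref{eqn:parameter-estimate}, which uses $X_2,\dots,X_{t+1}$ and therefore depends on $W_t$; thus $G_{t+1}$ is \emph{not} $\mathcal{F}_t$-measurable and cannot be pulled out of $\mathbb{E}[\,\cdot \mid \mathcal{F}_t]$. However, $\abs{\sigma_D(G_{t+1}X_{t+1})} \le D$ regardless of the value of $G_{t+1}$, so the reverse triangle inequality gives
\[
    \abs{A_{t+1}(\zeta)} = \abs{\zeta_1 X_{t+1} + \zeta_2\,\sigma_D(G_{t+1}X_{t+1})} \ge \abs{\zeta_1}\,\abs{X_{t+1}} - \abs{\zeta_2}\,D ,
\]
and, together with $\abs{A_{t+1}(\zeta)} \ge 0$, this yields $\abs{A_{t+1}(\zeta)} \ge \bigl(\abs{\zeta_1}\abs{X_{t+1}} - \abs{\zeta_2}D\bigr)^{+}$. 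Applying $\mathbb{E}[\,\cdot\mid\mathcal{F}_t]$ preserves the inequality, so it suffices to bound $\mathbb{E}\bigl[(\abs{\zeta_1}\abs{X_{t+1}} - \abs{\zeta_2}D)^{+}\mid\mathcal{F}_t\bigr]$ from below.

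Here A1 enters. Since $X_{t+1} = aX_t + bU_t + W_t$, with $aX_t + bU_t$ being $\mathcal{F}_t$-measurable and $W_t \sim \mathcal{N}(0,\Sigma_W^2)$ independent of $\mathcal{F}_t$, the conditional law of $X_{t+1}$ given $\mathcal{F}_t$ is $\mathcal{N}(\mu_t,\Sigma_W^2)$ with $\mu_t := aX_t + bU_t$. Hence $\mathbb{E}\bigl[(\abs{\zeta_1}\abs{X_{t+1}} - \abs{\zeta_2}D)^{+}\mid\mathcal{F}_t\bigr] = h(\mu_t)$, where $h(\mu) := \mathbb{E}\bigl[(\abs{\zeta_1}\abs{\mu + W} - \abs{\zeta_2}D)^{+}\bigr]$ for $W \sim \mathcal{N}(0,\Sigma_W^2)$. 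For each fixed $w$ the map $\mu \mapsto (\abs{\zeta_1}\abs{\mu+w} - \abs{\zeta_2}D)^{+}$ is convex, so $h$ is convex; it is also even, because $W$ is symmetric. A convex even function attains its minimum at the origin, so $h(\mu_t) \ge h(0) = \mathbb{E}\bigl[(\abs{\zeta_1}\abs{W} - \abs{\zeta_2}D)^{+}\bigr]$.

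It then remains to evaluate $h(0)$ in closed form. If $\zeta_1 = 0$, then $h(0) = (-\abs{\zeta_2}D)^{+} = 0 = f(0,\zeta_2)$. If $\zeta_1 \neq 0$, put $k := \abs{\zeta_2}D/\abs{\zeta_1} \ge 0$, so that $h(0) = \abs{\zeta_1}\,\mathbb{E}\bigl[(\abs{W} - k)^{+}\bigr]$; computing this one-dimensional Gaussian integral (split $\{\abs{W} > k\}$ into its two symmetric halves and use $\int_k^{\infty} w\, e^{-w^2/(2\Sigma_W^2)}\,dw = \Sigma_W^2\, e^{-k^2/(2\Sigma_W^2)}$ and $P(W > k) = \tfrac{1}{2}\,\erfcf{k/(\sqrt{2}\,\Sigma_W)}$) gives
\[
    h(0) = \abs{\zeta_1}\left( \Sigma_W\sqrt{\tfrac{2}{\pi}}\,\expf{-\tfrac{k^2}{2\Sigma_W^2}} - k\,\erfcf{\tfrac{k}{\sqrt{2}\,\Sigma_W}} \right).
\]
Substituting $k = \abs{\zeta_2}D/\abs{\zeta_1}$ back in and organising the expression by the sign of $\zeta_2$ (rewriting the $\zeta_2 < 0$ branch via $1 + \erff{x} = \erfcf{-x}$, and recovering the $\zeta_2 = 0$ case by $k = 0$) reproduces $f(\zeta_1,\zeta_2)$ from \eqref{eqn:definition-f}. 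Chaining the displays then delivers $\mathbb{E}[\abs{A_{t+1}(\zeta)} \mid \mathcal{F}_t] \ge f(\zeta_1,\zeta_2)$.

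I expect the single genuinely delicate step to be the first one --- noticing that $G_{t+1}$ depends on $W_t$ (so it cannot be treated as known given $\mathcal{F}_t$), and that the coarse, $G_{t+1}$-free lower bound $\abs{A_{t+1}(\zeta)} \ge (\abs{\zeta_1}\abs{X_{t+1}} - \abs{\zeta_2}D)^{+}$ is still tight enough for the downstream use in Lemma \ref{lemma:paley-zygmund-satisfy-lower-bound}. Everything after that is a routine convexity-and-symmetry reduction to a mean-zero Gaussian, followed by a standard error-function computation.
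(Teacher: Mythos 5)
Your proof is correct, and it shares the paper's overall skeleton while executing the key step in a genuinely cleaner way. The pointwise bound you get from the reverse triangle inequality, $\left(\abs{\zeta_1}\abs{X_{t+1}}-\abs{\zeta_2}D\right)^{+}$, is exactly the quantity the paper works with: its proof introduces $Q_{t+1}\in\argmin_{q\in[-D,D]}\abs{\zeta_1 X_{t+1}+\zeta_2 q}$, and that minimum equals $\left(\abs{\zeta_1}\abs{X_{t+1}}-\abs{\zeta_2}D\right)^{+}$, so both arguments eliminate the non-$\mathcal{F}_t$-measurable gain $G_{t+1}$ identically and both then use the conditional law $X_{t+1}\mid\mathcal{F}_t\sim\mathcal{N}(\theta_*^{\top}Z_t,\Sigma_W^2)$. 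Where you differ is in how the dependence on the conditional mean is removed: the paper splits into the cases $\zeta_2=0$, $\zeta_2<0$, $\zeta_2>0$, writes the conditional expectation in closed form via folded-normal and truncated-Gaussian formulas (the functions $h_1,h_2$), and shows via CAS-computed partial derivatives in $\hat{x}=\theta_*^{\top}Z_t$ that each closed form is minimized at $\hat{x}=0$; you replace all of that with the observation that $\mu\mapsto\mathbb{E}\bigl[(\abs{\zeta_1}\abs{\mu+W}-\abs{\zeta_2}D)^{+}\bigr]$ is convex and even, hence minimized at $\mu=0$, so the closed form is needed only once, for the zero-mean evaluation $h(0)$, which you compute correctly. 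This buys a shorter, case-free and CAS-free minimization argument at no loss of tightness, since the pointwise bound is the same. One small remark: your $h(0)$ reproduces the lower bounds the paper actually derives inside its proof (with $\Sigma_W$ appearing inside the $\erf$/$\erfc$ arguments); the displayed definition \eqref{eqn:definition-f} omits $\Sigma_W$ in those arguments, which appears to be a typo in the paper rather than an error in your computation.
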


\begin{proof}
Suppose $t \geq 0$, and $\zeta = (\zeta_1,\zeta_2) \in \mathbb{R}^2$. Define a new random variable $Q_{t+1}$ taking values in $\mathbb{R}$, satisfying
\begin{align}
    Q_{t+1} \in \argmin_{q \in [-D,D]} \abs{\zeta_1 X_{t+1} + \zeta_2 q}.
\end{align}
When $\zeta_1 \neq 0$ and $\zeta_2 \neq 0$, $Q_{t+1}$ satisfies
\begin{align}
    Q_{t+1} &= \begin{cases}
        -\frac{\zeta_1}{\zeta_2}X_{t+1}, \quad -D \leq -\frac{\zeta_1}{\zeta_2}X_{t+1} \leq D\\
        -D, \quad -\frac{\zeta_1}{\zeta_2}X_{t+1} < -D\\
        D, \quad -\frac{\zeta_1}{\zeta_2}X_{t+1} > D
    \end{cases}.
\end{align}
When $\zeta_1 = 0$ or $\zeta_2 = 0$, a satisfactory choice of $Q_{t+1}$ is $Q_{t+1} = 0$.

Since $\sigma_D(G_{t+1} X_{t+1})$ takes values in $[-D,D]$, it follows that
\begin{align}
    &\mathbb{E}\left[ \abs{  \zeta_1 X_{t+1} + \zeta_2 \left( \sigma_D (G_{t+1} X_{t+1}) \right) } \mid \mathcal{F}_t \right] \\
    &\geq \mathbb{E}\left[ \abs{  \zeta_1 X_{t+1} + \zeta_2 Q_{t+1} } \mid \mathcal{F}_t \right].
\end{align}

When $\zeta_1 = 0$ and $\zeta_2 \in \mathbb{R}$, $ \abs{  \zeta_1 X_{t+1} + \zeta_2 Q_{t+1} } = 0$. Thus, $\mathbb{E}\left[ \abs{  \zeta_1 X_{t+1} + \zeta_2 Q_{t+1} } \mid \mathcal{F}_t \right] = 0$.

When $\zeta_1 \neq 0$ and $\zeta_2 = 0$, $\abs{  \zeta_1 X_{t+1} + \zeta_2 Q_{t+1} } = \abs{\zeta_1 X_{t+1}}$. It follows that 
\begin{align}
    &\mathbb{E}[ \abs{  \zeta_1 X_{t+1} + \zeta_2 Q_{t+1} } \mid \mathcal{F}_t ] \\
    &= \mathbb{E}[ \abs{\zeta_1 X_{t+1}} \mid \mathcal{F}_t ] \\
    &= \abs{\zeta_1} \mathbb{E}[ \abs{X_{t+1}} \mid \mathcal{F}_t ] \\
    &= \abs{\zeta_1}\Bigg( \Sigma_W \sqrt{\frac{2}{\pi}} \expf{-\frac{(\theta_*^{\top}Z_t)^2}{2 \Sigma_W^2}} \label{eqn:w1-neq-0-w2-0-folded-normal} \\
    & \quad + ( \theta_*^{\top} Z_t ) \erff{\frac{\theta_*^{\top}Z_t}{\sqrt{2 \Sigma_W^2}}}  \Bigg) \\
    & \geq \abs{\zeta_1} \left( \Sigma_W \sqrt{\frac{2}{\pi}} \right). \label{eqn:w1-neq-0-w2-0-min}
\end{align}
where \eqref{eqn:w1-neq-0-w2-0-folded-normal} is due to the fact that $X_{t+1} = \theta_*^{\top}Z_t + W_t$, so $X_{t+1} \mid {Z_t=z_t} \sim \mathcal{N}\left( \theta_*^{\top}z_t, \Sigma_W^2 \right)$, and hence $\mathbb{E}\left[ \abs{X_{t+1}} \mid \mathcal{F}_t \right]=\mathbb{E}\left[ \abs{X_{t+1}} \mid Z_t \right]$ is the mean of the corresponding folded normal distribution. \eqref{eqn:w1-neq-0-w2-0-min} follows since \eqref{eqn:w1-neq-0-w2-0-folded-normal} is minimised at $\theta_*^{\top}Z_t = 0$.

When $\zeta_1 \neq 0$ and $\zeta_2 \neq 0$, we have
\begin{align}
    \abs{  \zeta_1 X_{t+1} + \zeta_2 Q_{t+1} } = \begin{cases}
        0, \quad \abs{-\frac{\zeta_1}{\zeta_2}X_{t+1}} \leq D, \\
        \abs{\zeta_1 X_{t+1} - \zeta_2 D}, \\
        \quad -\frac{\zeta_1}{\zeta_2}X_{t+1} < -D, \\
        \abs{\zeta_1 X_{t+1} + \zeta_2 D}, \\
        \quad -\frac{\zeta_1}{\zeta_2}X_{t+1} > D
    \end{cases}.
\end{align}
The conditional expectation is then given by
\begin{align}
    &\mathbb{E}\left[ \abs{  \zeta_1 X_{t+1} + \zeta_2 Q_{t+1} } \mid \mathcal{F}_t \right] \allowdisplaybreaks \\
    &= \mathbb{E}\Bigg[ \bm{1}_{ \left\{ \abs{-\frac{\zeta_1}{\zeta_2}X_{t+1}} \leq D \right\}} \abs{  \zeta_1 X_{t+1} + \zeta_2 Q_{t+1} } \allowdisplaybreaks \\
    &\qquad + \bm{1}_{ \left\{ -\frac{\zeta_1}{\zeta_2}X_{t+1} < -D \right\} } \abs{  \zeta_1 X_{t+1} + \zeta_2 Q_{t+1} } \allowdisplaybreaks \\
    &\qquad + \bm{1}_{ \left\{ -\frac{\zeta_1}{\zeta_2}X_{t+1} > D \right\} } \abs{  \zeta_1 X_{t+1} + \zeta_2 Q_{t+1} } \mid \mathcal{F}_t \Bigg] \allowdisplaybreaks \\
    &= \mathbb{E}\left[ \bm{1}_{ \left\{ -\frac{\zeta_1}{\zeta_2}X_{t+1} < -D \right\} } \abs{\zeta_1 X_{t+1} - \zeta_2 D} \mid \mathcal{F}_t \right] \label{eqn:expec-w1-w2-g-0-1} \allowdisplaybreaks \\
    &\qquad +  \mathbb{E}\left[ \bm{1}_{ \left\{ -\frac{\zeta_1}{\zeta_2}X_{t+1} > D \right\} } \abs{\zeta_1 X_{t+1} + \zeta_2 D} \mid \mathcal{F}_t \right] \label{eqn:expec-w1-w2-neq-0-2}
\end{align}
We further split $\zeta_2 \neq 0$ into two cases, where $\zeta_2 < 0$, and $\zeta_2 > 0$.
Let us start with $\zeta_1 \neq 0$ and $\zeta_2 < 0$.
Evaluating the conditional expectation in \eqref{eqn:expec-w1-w2-g-0-1} and \eqref{eqn:expec-w1-w2-neq-0-2}, we have
\begin{align}
    &\mathbb{E}\left[ \bm{1}_{ \left\{ -\frac{\zeta_1}{\zeta_2}X_{t+1} < -D \right\} } |\zeta_1 X_{t+1} - \zeta_2 D| \mid \mathcal{F}_t \right] \allowdisplaybreaks \\ 
    &\quad + \mathbb{E}\left[ \bm{1}_{ \left\{ -\frac{\zeta_1}{\zeta_2}X_{t+1} > D \right\} } |\zeta_1 X_{t+1} + \zeta_2 D| \mid \mathcal{F}_t \right] \allowdisplaybreaks \\
    &= -\mathbb{E}\left[ \bm{1}_{ \left\{ \zeta_1 X_{t+1} - \zeta_2 D < 0 \right\} } \left({\zeta_1 X_{t+1} - \zeta_2 D}\right) \mid \mathcal{F}_t \right] \\
    &\quad +\mathbb{E}[ \bm{1}_{ \{ \zeta_1 X_{t+1} + \zeta_2 D > 0 \} } ({\zeta_1 X_{t+1} + \zeta_2 D}) \mid \mathcal{F}_t ] \\ \allowdisplaybreaks
    &= \frac{\Sigma_W | \zeta_1 |}{\sqrt{2 \pi}}\bigg( \expf{-\frac{(D \zeta_2 - \zeta_1 \theta_*^{\top}Z_t)^2}{2 \Sigma_W^2 \zeta_1^2}}  \\ \allowdisplaybreaks 
    & \quad + \expf{-\frac{(D \zeta_2 + \zeta_1 \theta_*^{\top}Z_t)^2}{2 \Sigma_W^2 \zeta_1^2}} \bigg) \allowdisplaybreaks \\
    & \quad + \frac{1}{2} \bigg( ( D \zeta_2 + \zeta_1 \theta_*^{\top} Z_t ) \bigg( 1  \allowdisplaybreaks \\
    &\quad + \erff{\frac{D \zeta_2 + \zeta_1 \theta_*^{\top}Z_t}{\sqrt{2} \Sigma_W |\zeta_1|}}\bigg) \allowdisplaybreaks \\
    & \quad + ( D \zeta_2 - \zeta_1 \theta_*^{\top} Z_t ) \erfcf{\frac{-D \zeta_2 + \zeta_1 \theta_*^{\top}Z_t}{\sqrt{2} \Sigma_W |\zeta_1|}} \bigg) \label{eqn:lower-bound-A-eq-1} \allowdisplaybreaks \\
    & =: h_1(\zeta_1,\zeta_2,\theta_*^{\top}Z_t) \label{eq:lower-bound-A-def-h1}
\end{align}
where \eqref{eqn:lower-bound-A-eq-1} follows from the fact that $\zeta_1 X_{t+1} - \zeta_2 D \mid {Z_t=z_t} \sim \mathcal{N}\left( \zeta_1 \theta_*^{\top} z_t - \zeta_2 D, \zeta_1^2 \Sigma_W^2 \right)$ and $\zeta_1 X_{t+1} + \zeta_2 D \mid {Z_t=z_t} \sim \mathcal{N}\left( \zeta_1 \theta_*^{\top} z_t + \zeta_2 D, \zeta_1^2 \Sigma_W^2 \right)$, $\mathbb{E}[X \bm{1}_{X > a}] = \mathbb{E}[X \mid X > a] P(X > a)$ and $\mathbb{E}[X \bm{1}_{X < a}] = \mathbb{E}[X \mid X < a] P(X < a)$ for random variable $X$ taking values in $\mathbb{R}$ and $a \in \mathbb{R}$, and the conditional expectation of a truncated Gaussian distribution \cite[Theorem 22.2]{greene2002econometric}. Additionally, we denote the conditional expectation by the function $h_1$ for ease of notation.
Taking the partial derivative of $h_1$ with respect to $\hat{x}$ we have
\begin{align}
    &\frac{\partial h_1(\zeta_1,\zeta_2,\hat{x})}{\partial \hat{x}} \\
    &= \frac{1}{2} \abs{\zeta_1} \left(\erff{\frac{D \zeta_2 + \abs{\zeta_1} \hat{x}}{\sqrt{2} \Sigma_W \abs{\zeta_1}}} + \erff{\frac{-D \zeta_2 + \abs{\zeta_1} \hat{x}}{\sqrt{2} \Sigma_W \abs{\zeta_1}}} \right).
\end{align}
This partial derivative was symbolically computed using a CAS.
Next, suppose $\zeta_1 \neq 0$, and $\zeta_2 < 0$. When $\hat{x} = 0$, $\frac{\partial h_1(\zeta_1,\zeta_2,\hat{x})}{\partial \hat{x}} = 0$. When $\hat{x}>0$, $\frac{\partial h_1(\zeta_1,\zeta_2,\hat{x})}{\partial \hat{x}} > 0$. When $\hat{x}<0$, $\frac{\partial h_1(\zeta_1,\zeta_2,\hat{x})}{\partial \hat{x}} < 0$. Thus, $h_1(\zeta_2,\zeta_2,\hat{x})$ is minimised at $\hat{x}=0$. We use this to lower bound \eqref{eq:lower-bound-A-def-h1} for all $ \zeta_1 \neq 0$ and $\zeta_2 < 0$:
\begin{align}
    &h_1(\zeta_1,\zeta_2,\theta_*^{\top}Z_t) \allowdisplaybreaks \\
    & \geq \expf{-\frac{D^2 \zeta_2^2}{2 \Sigma_W^2 \zeta_1^2}}\sqrt{\frac{2}{\pi}}\Sigma_W \abs{\zeta_1} \label{eqn:lower-bound-A-ineq-1} \allowdisplaybreaks \\
    & \quad + D \zeta_2 \left( 1 + \erff{\frac{D \zeta_2}{\sqrt{2}\Sigma_W\abs{\zeta_1}}} \right)
\end{align}
Combining \eqref{eq:lower-bound-A-def-h1} and \eqref{eqn:lower-bound-A-ineq-1} we have that for all $\zeta_1 \neq 0$ and $\zeta_2 < 0$,
\begin{align}
    &\mathbb{E}\left[ \bm{1}_{ \left\{ -\frac{\zeta_1}{\zeta_2}X_{t+1} < -D \right\} } \abs{\zeta_1 X_{t+1} - \zeta_2 D} \mid \mathcal{F}_t \right] \allowdisplaybreaks \\
    & \quad +  \mathbb{E}\left[ \bm{1}_{ \left\{ -\frac{\zeta_1}{\zeta_2}X_{t+1} > D \right\} } \abs{\zeta_1 X_{t+1} + \zeta_2 D} \mid \mathcal{F}_t \right]  \allowdisplaybreaks \\
    &\geq \expf{-\frac{D^2 \zeta_2^2}{2 \Sigma_W^2 \zeta_1^2}}\sqrt{\frac{2}{\pi}}\Sigma_W \abs{\zeta_1} \label{eqn:lower-bound-A-ineq-2} \allowdisplaybreaks \\
    & \quad + D \zeta_2 \left( 1 + \erff{\frac{D \zeta_2}{\sqrt{2} \Sigma_W \abs{\zeta_1}}} \right).
\end{align}

Now, we focus on the case where $\zeta_1 \neq 0$ and $\zeta_2 > 0$. Evaluating the conditional expectation in \eqref{eqn:expec-w1-w2-g-0-1} and \eqref{eqn:expec-w1-w2-neq-0-2}, we have
\begin{align}
    &\mathbb{E}[ \bm{1}_{ \{ -\frac{\zeta_1}{\zeta_2}X_{t+1} < -D \} } \abs{\zeta_1 X_{t+1} - \zeta_2 D} \mid \mathcal{F}_t ] \allowdisplaybreaks \allowdisplaybreaks \\
    & \quad + \mathbb{E}[ \bm{1}_{ \{ -\frac{\zeta_1}{\zeta_2}X_{t+1} > D \} } \abs{\zeta_1 X_{t+1} + \zeta_2 D} \mid \mathcal{F}_t ] \allowdisplaybreaks \allowdisplaybreaks \\
    &= \mathbb{E}[ \bm{1}_{ \{ 0 < \zeta_1 X_{t+1} - \zeta_2 D \} } ({\zeta_1 X_{t+1} - \zeta_2 D}) \mid \mathcal{F}_t ] \allowdisplaybreaks \allowdisplaybreaks \\
    & \quad -\mathbb{E}[ \bm{1}_{ \{ 0 > \zeta_1 X_{t+1} + \zeta_2 D \} } ({\zeta_1 X_{t+1} + \zeta_2 D}) \mid \mathcal{F}_t ] \allowdisplaybreaks \allowdisplaybreaks \\
    &= \frac{\Sigma_W \abs{\zeta_1}}{\sqrt{2 \pi}} \bigg( \expf{-\frac{(D \zeta_2 - \zeta_1 \theta_*^{\top}Z_t)^2}{2 \Sigma_W^2 \zeta_1^2}} \allowdisplaybreaks \\
    & \quad + \expf{-\frac{(D \zeta_2 + \zeta_1 \theta_*^{\top}Z_t)^2}{2 \Sigma_W^2 \zeta_1^2}} \bigg) \allowdisplaybreaks \allowdisplaybreaks \\
    & \quad + \frac{1}{2} \bigg( ( D \zeta_2 - \zeta_1 \theta_*^{\top} Z_t ) \bigg( -2 \allowdisplaybreaks \allowdisplaybreaks \\
    & \quad + \erfcf{\frac{-D \zeta_2 + \zeta_1 \theta_*^{\top}Z_t}{\sqrt{2} \Sigma_W \abs{\zeta_1}}} \bigg) \allowdisplaybreaks \allowdisplaybreaks \\
    & \quad - ( D \zeta_2 + \zeta_1 \theta_*^{\top} Z_t ) \erfcf{\frac{D \zeta_2 + \zeta_1 \theta_*^{\top}Z_t}{\sqrt{2} \Sigma_W \abs{\zeta_1}}} \bigg) \allowdisplaybreaks \label{eqn:lower-bound-A-eq-2} \\
    & =: h_2(\zeta_1,\zeta_2,\theta_*^{\top}Z_t) \label{eq:lower-bound-A-def-h2}
\end{align}
where \eqref{eqn:lower-bound-A-eq-2} follows similarly to \eqref{eq:lower-bound-A-def-h1} using the fact that $\zeta_1 X_{t+1} - \zeta_2 D \mid {Z_t=z_t} \sim \mathcal{N}\left( \zeta_1 \theta_*^{\top} z_t - \zeta_2 D, \zeta_1^2 \Sigma_W^2 \right)$ and $\zeta_1 X_{t+1} + \zeta_2 D \mid {Z_t=z_t} \sim \mathcal{N}\left( \zeta_1 \theta_*^{\top} z_t + \zeta_2 D, \zeta_1^2 \Sigma_W^2 \right)$. Additionally, we denote the conditional expectation by the function $h_2$ for ease of notation.
Taking the partial derivative of $h_2$ with respect to $\hat{x}$, we arrive at
\begin{align}
    &\frac{\partial h_2(\zeta_1,\zeta_2,\hat{x})}{\partial \hat{x}} \\
    &= \frac{1}{2}\abs{\zeta_1}\left( \erff{\frac{-D\zeta_2 + \abs{\zeta_1} \hat{x}}{\sqrt{2}\Sigma_W \abs{\zeta_1}}} + \erff{\frac{D \zeta_2 + \abs{\zeta_1} \hat{x}}{\sqrt{2}\Sigma_W \abs{\zeta_1}}} \right)
\end{align}
Suppose $\zeta_1 \neq 0$, and $\zeta_2 > 0$. When $\hat{x} = 0$, $\frac{\partial h_2(\zeta_1,\zeta_2,\hat{x})}{\partial \hat{x}} = 0$. When $\hat{x}>0$, $\frac{\partial h_2(\zeta_1,\zeta_2,\hat{x})}{\partial \hat{x}} > 0$. When $\hat{x}<0$, $\frac{\partial h_2(\zeta_1,\zeta_2,\hat{x})}{\partial \hat{x}} < 0$. Thus, $h_2(\zeta_2,\zeta_2,\hat{x})$ is minimised at $\hat{x}=0$. We use this to lower bound \eqref{eq:lower-bound-A-def-h2} for all $\zeta_1 \neq 0$, $\zeta_2 > 0$ and $\hat{x} \in \mathbb{R}$:
\begin{align}
    &h_2(\zeta_1,\zeta_2,\theta_*^{\top}Z_t) \allowdisplaybreaks \\ 
    &\geq \expf{-\frac{D^2\zeta_2^2}{2 \Sigma_W^2 \zeta_1^2}} \sqrt{\frac{2}{\pi}} \Sigma_W \abs{\zeta_1} - D \zeta_2 \erfcf{\frac{D \zeta_2}{\sqrt{2} \Sigma_W \abs{\zeta_1}}}
\end{align}
Therefore, for all $\zeta_1 \neq 0$ and $\zeta_2 > 0$, the following holds.
\begin{align}
    &\mathbb{E}\left[ \bm{1}_{ \left\{ -\frac{\zeta_1}{\zeta_2}X_{t+1} < -D \right\} } \abs{\zeta_1 X_{t+1} - \zeta_2 D} \mid \mathcal{F}_t \right] \allowdisplaybreaks \\
    &\quad +  \mathbb{E}\left[ \bm{1}_{ \left\{ -\frac{\zeta_1}{\zeta_2}X_{t+1} > D \right\} } \abs{\zeta_1 X_{t+1} + \zeta_2 D} \mid \mathcal{F}_t \right]  \allowdisplaybreaks \\
    & \geq \expf{-\frac{D^2 \zeta_2^2}{2 \Sigma_W^2 \zeta_1^2}}\sqrt{\frac{2}{\pi}}\Sigma_W \abs{\zeta_1} \allowdisplaybreaks \\
    &\quad - D \zeta_2 \left( \erfcf{\frac{D \zeta_2}{\sqrt{2} \Sigma_W \abs{\zeta_1}}} \right)
\end{align}

The conclusion follows by observing that $\mathbb{E}\left[ \abs{A_{t+1}(\zeta)} \mid \mathcal{F}_t \right] \geq f(\zeta_1,\zeta_2)$ holds for all $(\zeta_1,\zeta_2) \in \mathbb{R}^2$.
\end{proof}

\begin{lemma} \label{lemma:lower-bound-B}
Let $(\mathcal{F}_t)_{t \in \mathbb{N}_0}$ be the natural filtration of $(Z_t)_{t \in \mathbb{N}_0}$ from \eqref{eqn:state-input-data}. Suppose A1-A2 hold on the closed-loop system \eqref{eqn:closed-loop-system-v2} and $x_0 \in \mathbb{R}$. Then, $\mathbb{E}\left[ \abs{B_{t+1}(\zeta)} \mid \mathcal{F}_t \right] \geq \frac{\abs{\zeta_2}C}{2}$, where $B_{t+1}$ is from \eqref{eqn:paley-zygmund-satisfy-lower-bound-define-B}.
\end{lemma}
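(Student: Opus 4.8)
The plan is to exploit that, conditionally on $\mathcal{F}_t$, the random variable $B_{t+1}(\zeta)$ is just a deterministic shift of the symmetric random variable $\zeta_2 V_{t+1}$, and that shifting a symmetric random variable cannot decrease its mean absolute value; the lower bound will then come entirely from the excitation term $V_{t+1}$, regardless of the value of the certainty-equivalence/state contribution.

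First I would write $B_{t+1}(\zeta) = \mu_{t+1} + \zeta_2 V_{t+1}$, where $\mu_{t+1} := \mathbb{E}\left[\zeta_1 X_{t+1} + \zeta_2 \sigma_D(G_{t+1} X_{t+1}) \mid \mathcal{F}_t\right]$ is, by its definition in \eqref{eqn:paley-zygmund-satisfy-lower-bound-define-B}, $\mathcal{F}_t$-measurable (and finite a.s., since $X_t, U_t$ are $\mathcal{F}_t$-measurable and $W_t$ has finite mean). If $\zeta_2 = 0$ the claim is trivial since the right-hand side is $0$, so assume $\zeta_2 \neq 0$. Because the excitation sequence $(V_t)$ is i.i.d. and independent of the disturbance sequence $(W_t)$, while $\mathcal{F}_t = \sigma(Z_0,\dots,Z_t)$ is generated by $V_0,\dots,V_t$ and $W_0,\dots,W_{t-1}$, the term $V_{t+1}$ is independent of $\mathcal{F}_t$. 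By the independence (``freezing'') lemma for conditional expectations, $\mathbb{E}\left[\abs{B_{t+1}(\zeta)} \mid \mathcal{F}_t\right] = g(\mu_{t+1})$, where $g(m) := \mathbb{E}\left[\abs{m + \zeta_2 V_{t+1}}\right]$ for $m \in \mathbb{R}$.

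Next I would prove the pointwise bound $g(m) \geq \abs{\zeta_2} C / 2$ for every $m \in \mathbb{R}$ by symmetrization. Since $V_{t+1} \sim \text{Uniform}([-C,C])$ is symmetric about $0$, we have $\mathbb{E}\left[\abs{m + \zeta_2 V_{t+1}}\right] = \mathbb{E}\left[\abs{m - \zeta_2 V_{t+1}}\right]$; averaging these two expressions and applying the triangle inequality $\abs{u} + \abs{v} \geq \abs{u - v}$ pointwise gives $g(m) = \frac{1}{2}\big(\mathbb{E}\left[\abs{m + \zeta_2 V_{t+1}}\right] + \mathbb{E}\left[\abs{m - \zeta_2 V_{t+1}}\right]\big) \geq \frac{1}{2}\, \mathbb{E}\left[\abs{2\zeta_2 V_{t+1}}\right] = \abs{\zeta_2}\, \mathbb{E}\left[\abs{V_{t+1}}\right]$. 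A direct integration yields $\mathbb{E}\left[\abs{V_{t+1}}\right] = \frac{1}{2C}\int_{-C}^{C}\abs{v}\,dv = C/2$. Hence $g(m) \geq \abs{\zeta_2} C/2$ for all $m$, and substituting $m = \mu_{t+1}$ gives $\mathbb{E}\left[\abs{B_{t+1}(\zeta)} \mid \mathcal{F}_t\right] \geq \abs{\zeta_2} C/2$, which is the claim.

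There is no substantial obstacle in this lemma; the only points requiring care are the rigorous invocation of the freezing lemma — justifying that conditioning on $\mathcal{F}_t$ amounts to fixing $\mu_{t+1}$ while leaving the law of $\zeta_2 V_{t+1}$ intact, which rests on $V_{t+1} \perp \mathcal{F}_t$ — and the observation that the symmetrization inequality $\mathbb{E}\abs{m + Y} \geq \mathbb{E}\abs{Y}$ (valid for any symmetric $Y$ and any constant $m$) is exactly what removes the dependence on the unknown value $\mu_{t+1}$, so that the excitation term $V_{t+1}$ alone delivers the uniform lower bound.
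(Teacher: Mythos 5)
Your proof is correct, but it takes a genuinely different route from the paper. The paper conditions on $Z_0,\hdots,Z_t$, observes that the conditional law of $B_{t+1}(\zeta)$ is $\text{Uniform}([g_t - \abs{\zeta_2}C,\, g_t + \abs{\zeta_2}C])$ with $g_t$ the $\mathcal{F}_t$-measurable centre, and then computes $\mathbb{E}[\abs{B_{t+1}(\zeta)} \mid \mathcal{F}_t]$ explicitly by splitting into the three cases $g_t > \abs{\zeta_2}C$, $g_t < -\abs{\zeta_2}C$, $\abs{g_t} \leq \abs{\zeta_2}C$ and integrating, obtaining $\abs{g_t}$, $\abs{g_t}$, and $\tfrac{1}{2\abs{\zeta_2}C}(g_t^2 + \zeta_2^2 C^2)$ respectively, each of which is at least $\tfrac{\abs{\zeta_2}C}{2}$. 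You instead bypass the exact computation entirely: the freezing argument reduces the claim to the deterministic-shift bound $\mathbb{E}\abs{m + \zeta_2 V_{t+1}} \geq \abs{\zeta_2}\,\mathbb{E}\abs{V_{t+1}} = \abs{\zeta_2}C/2$, proved by symmetry of $V_{t+1}$ plus the triangle inequality. What your route buys is brevity, no case analysis, and generality — it uses only that the excitation is symmetric (yielding the bound $\abs{\zeta_2}\,\mathbb{E}\abs{V}$ for any symmetric excitation law), which dovetails with the paper's remark about extending beyond the specific distributions assumed; what the paper's route buys is the exact conditional expectation, and in particular the sharper off-centre bound $\abs{\zeta_2}C$, though only the worst-case constant $\abs{\zeta_2}C/2$ is ever used. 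Both arguments rest on the same underlying facts, namely that the conditional-mean term is $\mathcal{F}_t$-measurable and that $V_{t+1}$ is independent of $\mathcal{F}_t$; one small imprecision on your side is the assertion that $\mathcal{F}_t$ \emph{is generated by} $V_0,\hdots,V_t,W_0,\hdots,W_{t-1}$ — in general one only has the inclusion $\mathcal{F}_t \subseteq \sigma(V_0,\hdots,V_t,W_0,\hdots,W_{t-1})$, but that inclusion is all the independence argument needs, so the proof stands.
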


\begin{proof}
Suppose $\zeta = (\zeta_1,\zeta_2) \in \mathbb{R}^2$ and $t \in \mathbb{N}_0$. When $\zeta_2=0$, using the monotonocity of conditional expectation we have
\begin{align}
    &\mathbb{E}\left[ \abs{B_{t+1}(\zeta)} \mid \mathcal{F}_t \right] \geq 0.
\end{align}
Next, let $g_t$ be the mapping satisfying $g_t(\zeta_1,\zeta_2,(Z_0,\hdots,Z_t))=\mathbb{E}\left[ \zeta_1 X_{t+1} + \zeta_2 \sigma_D (G_{t+1} X_{t+1}) \mid \mathcal{F}_t \right]$. Note that the distribution of $B_{t+1}(\zeta) \mid {Z_0=z_0},\hdots,{Z_t=z_t}$ is $\text{Uniform}( [ g_t(\zeta_1,\zeta_2,(z_0,\hdots,z_t)) - \abs{\zeta_2 C}, g_t(\zeta_1,\zeta_2,(z_0,\hdots,z_t)) + \abs{\zeta_2 C} ] )$ for all $z_0,\hdots,z_t \in \mathbb{R}^2$. Using this fact and the law of the unconscious statistician, when $\zeta_2 \neq 0$, we have
\begin{align}
    & \mathbb{E}\left[ \abs{B_{t+1}(\zeta)} \mid Z_0=z_0,\hdots,Z_t=z_t \right] \allowdisplaybreaks \\
    &= \begin{cases}
        \frac{1}{2 \abs{\zeta_2} C}\int_{g_t(\zeta_1,\zeta_2,(z_0,\hdots,z_t)) - \abs{\zeta_2}C}^{g_t(\zeta_1,\zeta_2,(z_0,\hdots,z_t)) + \abs{\zeta_2}C} b \ db , \allowdisplaybreaks \\
        \quad g_t(\zeta_1,\zeta_2,(z_0,\hdots,z_t)) > \abs{\zeta_2}C \allowdisplaybreaks \\
        -\frac{1}{2 \abs{\zeta_2} C}\int_{g_t(\zeta_1,\zeta_2,(z_0,\hdots,z_t)) - \abs{\zeta_2}C}^{g_t(\zeta_1,\zeta_2,(z_0,\hdots,z_t)) + \abs{\zeta_2}C} b \ db , \allowdisplaybreaks \\
        \quad g_t(\zeta_1,\zeta_2,(z_0,\hdots,z_t)) < -\abs{\zeta_2}C \allowdisplaybreaks \\
        \frac{1}{2 \abs{\zeta_2} C} \Big( -\int_{g_t(\zeta_1,\zeta_2,(z_0,\hdots,z_t)) - \abs{\zeta_2}C}^{0} b \ db \allowdisplaybreaks \\
        \quad + \int_{0}^{g_t(\zeta_1,\zeta_2,(z_0,\hdots,z_t)) + \abs{\zeta_2}C} b \ db \Big), \allowdisplaybreaks \\
        \quad  -\abs{\zeta_2}C \leq g_t(\zeta_1,\zeta_2,(z_0,\hdots,z_t)) \leq \abs{\zeta_2}C
    \end{cases} \allowdisplaybreaks \\
    & =\begin{cases}
        g_t(\zeta_1,\zeta_2,(z_0,\hdots,z_t)), \allowdisplaybreaks \\
        \quad g_t(\zeta_1,\zeta_2,(z_0,\hdots,z_t)) > \abs{\zeta_2}C \allowdisplaybreaks \\
        -g_t(\zeta_1,\zeta_2,(z_0,\hdots,z_t)), \allowdisplaybreaks \\
        \quad g_t(\zeta_1,\zeta_2,(z_0,\hdots,z_t)) < -\abs{\zeta_2}C \allowdisplaybreaks \\
        \frac{1}{2 \abs{\zeta_2} C} \left[ g_t(\zeta_1,\zeta_2,(z_0,\hdots,z_t))^2 + \zeta_2^2 C^2 \right], \allowdisplaybreaks \\ 
        \quad  -\abs{\zeta_2}C \leq g_t(\zeta_1,\zeta_2,(z_0,\hdots,z_t)) \leq \abs{\zeta_2}C
    \end{cases} \allowdisplaybreaks \\
    & \geq \begin{cases}
        \abs{\zeta_2}C, \quad g_t(\zeta_1,\zeta_2,(z_0,\hdots,z_t)) > \abs{\zeta_2}C \allowdisplaybreaks \\
        \abs{\zeta_2}C, \quad g_t(\zeta_1,\zeta_2,(z_0,\hdots,z_t)) < -\abs{\zeta_2}C \allowdisplaybreaks \\
        \frac{\abs{\zeta_2}C}{2}, \allowdisplaybreaks \\
        \quad  -\abs{\zeta_2}C \leq g_t(\zeta_1,\zeta_2,(z_0,\hdots,z_t)) \leq \abs{\zeta_2}C
    \end{cases} \allowdisplaybreaks \\
    & \geq \frac{\abs{\zeta_2}C}{2}
\end{align}
for all $z_0,\hdots,z_t \in \mathbb{R}^2$. Thus, for all $(\zeta_1,\zeta_2) \in \mathbb{R}$, we have $\mathbb{E}\left[ \abs{ B_{t+1}(\zeta) }  \mid \mathcal{F}_t \right] = \mathbb{E}\left[ \abs{ B_{t+1}(\zeta) }  \mid Z_0,\hdots,Z_t \right] \geq \frac{\abs{\zeta_2}C}{2}$.
\end{proof}

\begin{lemma} \label{lemma:properties-f}
Consider the function f from \eqref{eqn:definition-f}. The following properties hold:
\begin{enumerate}
    \item $f(\cos (\phi), \sin(\phi))$ is continuous over $\phi \in [-\pi, \pi]$;
    \item $f(\cos (\phi), \sin(\phi))$ is strictly decreasing over $\phi \in [-\pi, -\pi/2]$ and $[0,\pi/2]$, and strictly increasing over $[-\pi/2,0]$ and $[\pi/2, \pi]$;
    \item $f(\cos(\phi),\sin(\phi)) \geq 0$ for all $\phi \in [-\pi,\pi]$.
\end{enumerate}
\end{lemma}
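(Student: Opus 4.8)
The plan is to compress the three-case definition \eqref{eqn:definition-f} into a single probabilistic expression and then read all three properties off it. Let $W\sim\mathcal{N}(0,\Sigma_W^2)$. The first step is to observe that, for every $(\zeta_1,\zeta_2)$,
\[
f(\zeta_1,\zeta_2)=\mathbb{E}\!\left[\left(\abs{\zeta_1}\,\abs{W}-D\abs{\zeta_2}\right)^{+}\right].
\]
For $\zeta_1=0$ this gives $\mathbb{E}[(-D\abs{\zeta_2})^{+}]=0$, and for $\zeta_2=0$ it gives $\abs{\zeta_1}\,\mathbb{E}\abs{W}=\abs{\zeta_1}\Sigma_W\sqrt{2/\pi}$, matching the first two lines of \eqref{eqn:definition-f}; for $\zeta_1\neq 0$ one applies the elementary identity $\mathbb{E}[(\abs{W}-a)^{+}]=\Sigma_W\sqrt{2/\pi}\,e^{-a^{2}/(2\Sigma_W^{2})}-a\,\erfc\!\big(a/(\sqrt2\Sigma_W)\big)$ with $a=D\abs{\zeta_2}/\abs{\zeta_1}$, and uses $\erfc(-x)=1+\erf(x)$ to absorb the sign of $\zeta_2$, recovering the last two lines. (Equivalently, this representation is exactly what the proof of Lemma~\ref{lemma:lower-bound-A} produces, since $\abs{\zeta_1 x+\zeta_2\,\mathrm{clip}(-\zeta_1 x/\zeta_2,-D,D)}=(\abs{\zeta_1}\abs{x}-D\abs{\zeta_2})^{+}$ and the minimum over $\theta_*^{\top}Z_t$ is attained at $0$.) Writing $g(\phi):=f(\cos\phi,\sin\phi)=\mathbb{E}[(\abs{\cos\phi}\,\abs{W}-D\abs{\sin\phi})^{+}]$, the map $g$ is nonnegative, $\pi$-periodic, and even.

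Properties 1 and 3 are then immediate. Nonnegativity holds because $g$ is an expectation of a nonnegative random variable. For continuity, for each fixed $w\geq 0$ the map $\phi\mapsto(\abs{\cos\phi}\,w-D\abs{\sin\phi})^{+}$ is continuous on $[-\pi,\pi]$, and it is dominated by the integrable random variable $\abs{W}$; dominated convergence gives continuity of $g$ on all of $[-\pi,\pi]$, so the junction points $\phi\in\{0,\pm\pi/2,\pm\pi\}$ require no special treatment.

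For Property 2, the evenness and $\pi$-periodicity of $g$ reduce everything to the interval $[0,\pi/2]$: from $g(\phi)=g(-\phi)$ one gets strict increase on $[-\pi/2,0]$, from $g(\phi)=g(\pi-\phi)$ strict increase on $[\pi/2,\pi]$, and from $g(\phi)=g(\phi+\pi)$ strict decrease on $[-\pi,-\pi/2]$. On $[0,\pi/2]$ we have $\cos\phi,\sin\phi\geq 0$, so $g(\phi)=\mathbb{E}[(\cos\phi\,\abs{W}-D\sin\phi)^{+}]$; for each fixed $w\geq 0$ the function $\phi\mapsto\cos\phi\,w-D\sin\phi$ has derivative $-w\sin\phi-D\cos\phi\leq-D\cos\phi<0$ on $[0,\pi/2)$, hence is strictly decreasing there, so $\phi\mapsto(\cos\phi\,w-D\sin\phi)^{+}$ is non-increasing and is strictly decreasing at $\phi$ whenever $\cos\phi\,w-D\sin\phi>0$. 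Given $0\leq\phi_1<\phi_2\leq\pi/2$, the event $\{\abs{W}>D\tan\phi_1\}$ has strictly positive probability (a Gaussian absolute value has unbounded support), and on that event the integrand strictly decreases between $\phi_1$ and $\phi_2$; integrating yields $g(\phi_1)>g(\phi_2)$. Alternatively, differentiating \eqref{eqn:definition-f} directly on $(0,\pi/2)$, the $D^{2}/\cos^{2}\phi$ contributions cancel and one is left with $g'(\phi)=-\Sigma_W\sqrt{2/\pi}\,e^{-D^{2}\tan^{2}\phi/(2\Sigma_W^{2})}\sin\phi-D\cos\phi\,\erfc\!\big(D\tan\phi/(\sqrt2\Sigma_W)\big)<0$.

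No step is genuinely hard; the work is bookkeeping, and the main care point is Step 1 — verifying that the piecewise formula \eqref{eqn:definition-f} collapses to the single expectation, which needs the truncated-Gaussian moment identity, the $\erfc$ reflection, and separate treatment of the degenerate cases $\zeta_1=0$, $\zeta_2=0$. If instead one argues directly from \eqref{eqn:definition-f} case by case, the only delicate points are the one-sided limits across the boundaries $\zeta_2=0$ (i.e.\ $\phi\in\{0,\pm\pi\}$) and $\zeta_1=0$ (i.e.\ $\phi=\pm\pi/2$): as $\zeta_1\to 0$ one must observe that $e^{-D^{2}\zeta_2^{2}/(2\Sigma_W^{2}\zeta_1^{2})}$ vanishes faster than $\abs{\zeta_1}^{-1}$ diverges, and that $\erf,\erfc$ have finite limits at $\pm\infty$; both are routine.
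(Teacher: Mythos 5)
Your proof is correct, and it takes a genuinely different route from the paper's. The key move is the unified representation $f(\zeta_1,\zeta_2)=\mathbb{E}\big[(\abs{\zeta_1}\abs{W}-D\abs{\zeta_2})^{+}\big]$, which indeed reproduces all four branches of \eqref{eqn:definition-f} via $\mathbb{E}[(\abs{W}-a)^{+}]=\Sigma_W\sqrt{2/\pi}\,e^{-a^2/(2\Sigma_W^2)}-a\,\erfc\!\big(a/(\sqrt{2}\Sigma_W)\big)$ and $\erfc(-x)=1+\erf(x)$, and which also matches how $f$ arises in Lemma \ref{lemma:lower-bound-A} (the pointwise identity $\min_{q\in[-D,D]}\abs{\zeta_1x+\zeta_2q}=(\abs{\zeta_1}\abs{x}-D\abs{\zeta_2})^{+}$ and minimization of the conditional mean at $\theta_*^{\top}Z_t=0$). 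From there your three conclusions follow cleanly: nonnegativity is trivial, continuity comes from dominated convergence with dominating variable $\abs{W}$ (so no special treatment of the junction angles $0,\pm\pi/2,\pm\pi$), and monotonicity follows from evenness and $\pi$-periodicity of $g(\phi)=f(\cos\phi,\sin\phi)$ together with a pathwise argument on $[0,\pi/2]$, where strictness is supplied by the positive-probability event $\{\abs{W}>D\tan\phi_1\}$. The paper instead works directly with the piecewise formula: it checks continuity branch by branch plus one-sided limits at the five junction points, and establishes strict monotonicity by computing $\tfrac{d}{d\phi}f(\cos\phi,\sin\phi)$ symbolically with a CAS on the open quadrant intervals, determining its sign, and extending to the closed intervals by continuity; your closed-form $g'$ on $(0,\pi/2)$ agrees with the paper's CAS expression after simplifying $\sqrt{\cos(2\phi)+1}=\sqrt{2}\,\abs{\cos\phi}$. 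Your approach buys a CAS-free, self-contained argument in which strictness and the boundary behavior are transparent, at the cost of having to verify the truncated-Gaussian identity up front (the bookkeeping is relocated rather than removed); the paper's approach is more mechanical but leans on symbolic computation and separate limit checks. One reconciliation note: your identity yields $\erf\big(D\zeta_2/(\sqrt{2}\,\Sigma_W\abs{\zeta_1})\big)$ and $\erfc\big(D\zeta_2/(\sqrt{2}\,\Sigma_W\abs{\zeta_1})\big)$, whereas the displayed \eqref{eqn:definition-f} omits $\Sigma_W$ in those arguments; since both Lemma \ref{lemma:lower-bound-A} and the derivative used in the paper's own proof of this lemma carry the $\Sigma_W$, that omission is a typo in the display, and your representation matches the intended $f$.
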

\begin{proof}

The proof of 1) follows from the fact that $f$ is continuous over the domains $\{ (\zeta_1,\zeta_2) \in \mathbb{R}^2 \mid \zeta_1 \neq 0, \zeta_2 = 0 \}$, $\{ (\zeta_1,\zeta_2) \in \mathbb{R}^2 \mid \zeta_1 \in 0, \zeta_2 \in \mathbb{R} \}$, $\{ (\zeta_1,\zeta_2) \in \mathbb{R}^2 \mid \zeta_1 \neq 0, \zeta_2 < 0 \}$ and $\{ (\zeta_1,\zeta_2) \in \mathbb{R}^2 \mid \zeta_1 \neq 0, \zeta_2 > 0 \}$, alongside the fact that $\lim_{\phi \rightarrow c} f(\cos(\phi),\sin(\phi)) = f(\cos(c),\sin(c))$ for all $c \in \{ -\pi, -\pi/2, 0, \pi/2, \pi \}$.

We now prove 2). Over the interval $\phi \in (-\pi,-\pi/2) \cup (-\pi/2,0)$, we find that $\frac{d}{d \phi}f(\cos(\phi),\sin(\phi))=D \cos (\phi ) \left(\text{erf}\left(\frac{D \sin (\phi )}{\Sigma_W  \sqrt{\cos (2 \phi )+1}}\right)+1\right)-\frac{\Sigma_W  \sqrt{\cos (2 \phi )+1} \tan (\phi ) e^{-\frac{D^2 \tan ^2(\phi )}{2 \Sigma_W ^2}}}{\sqrt{\pi }}$ using a CAS. On the interval $\phi \in (-\pi,-\pi/2)$, $\frac{d}{d \phi}f(\cos(\phi),\sin(\phi)) < 0$ holds, and on the interval $\phi \in (-\pi/2,0)$, $\frac{d}{d \phi}f(\cos(\phi),\sin(\phi)) > 0$ holds. Thus, $f(\cos(\phi),\sin(\phi))$ is strictly decreasing and strictly increasing over the open intervals $\phi \in (-\pi,-\pi/2)$ and $\phi \in (-\pi/2,0)$ respectively, and due to the continuity of $f(\cos(\phi),\sin(\phi))$ these same properties hold over their respective associated closed intervals.
Similarly, over the interval $\phi \in (0,\pi/2) \cup (\pi/2,\pi)$, we find that $\frac{d}{d \phi}f(\cos(\phi),\sin(\phi))=-\frac{\Sigma_W  \sqrt{\cos (2 \phi )+1} \tan (\phi ) e^{-\frac{D^2 \tan ^2(\phi )}{2 \Sigma_W ^2}}}{\sqrt{\pi }}-D \cos (\phi ) \text{erfc}\left(\frac{D \sin (\phi )}{\Sigma_W  \sqrt{\cos (2 \phi )+1}}\right)$ using a CAS. On the interval $\phi \in (0,\pi/2)$, $\frac{d}{d \phi}f(\cos(\phi),\sin(\phi)) < 0$ holds, and on the interval $\phi \in (\pi/2,\pi)$, $\frac{d}{d \phi}f(\cos(\phi),\sin(\phi)) > 0$ holds. Thus, $f(\cos(\phi),\sin(\phi))$ is strictly decreasing and strictly increasing over the open intervals $\phi \in (0,\pi/2)$ and $\phi \in (\pi/2,\pi)$ respectively, and due to the continuity of $f(\cos(\phi),\sin(\phi))$ these same properties hold over their respective closed intervals. 
The proof of 2) is thus completed.

The proof of 3) follows from the fact that $f(\zeta_1,\zeta_2) \geq 0$ for all $\zeta_1,\zeta_2 \in \mathbb{R}$.
\end{proof}

\subsection{Analysis for Lemma \ref{lemma:parameter-estimate-bound}}
We provide the proof of Lemma \ref{lemma:parameter-estimate-bound}.
\begin{proof}[Proof of Lemma \ref{lemma:parameter-estimate-bound}]
    Suppose $d \in \Big(0,\frac{ 90 \Sigma_W }{\sqrt{10 \lambda_{\text{min}} \left( \Gamma_{\text{sb}} \right)}}\Big)$, and $i \geq M(d,p,q,\Gamma_{\text{sb}})$. Set
    \begin{equation}
        \delta = i^{\frac{4}{3}} e^{-i \frac{\lambda_{\text{min}} \left( \Gamma_{\text{sb}} \right) d^2 p^2}{3 ( 90 \Sigma_W )^2}} c_1(q,\Gamma_{\text{sb}})^{\frac{2}{3}} e^{\frac{1}{3}\left( c_2(p) + \log \left( \det \left( \Gamma_{\text{sb}}^{-1} \right) \right) \right)}, \label{eqn:parameter-estimation-error-tail-bound-ineq-7}
    \end{equation}
    and $\overline{\Gamma}=\frac{1}{\delta}i^2 c_1(q,\Gamma_{\text{sb}}) I$.
    
    Now, we will establish that the sequence $(Z_t,X_t)_{t=1}^i$ satisfies the premise of Proposition \ref{prop:estim-bound}. Let $\mathcal{F}_t$ be the sigma-algebra generated by $W_0,\hdots,W_t,Z_1,\hdots,Z_t$ for $t \in \mathbb{N}$. Note that (a) is satisfied since $X_{t+1} = \theta_*^{\top}Z_t + W_t$ holds for $t \in \mathbb{N}$, with $W_t \mid \mathcal{F}_{t-1} \sim \mathcal{N}(0,\Sigma_W^2)$ due to A1. Moreover, $Z_1,\hdots,Z_i$ satisfies the $(1,\Gamma_{\text{sb}},p)$-BMSB condition due to 1) in the premise, establishing (b).  We are left to prove that (c) $P\left( \sum_{t=1}^{i} Z_t Z_t^{\top} \not \preceq \overline{\Gamma}i \right) \leq \delta$, $\delta \in (0,1)$, $\Gamma_{\text{sb}} \preceq \overline{\Gamma}$, and \eqref{eqn:estim-bound-i-condition} holds with $k=1$.
    
    We start by proving that $\delta \in (0,1)$. Since $i \geq M(d,p,q,\Gamma_{\text{sb}})$, then $i \geq M'(d,p,q,\Gamma_{\text{sb}})$ holds, and so $i^{\frac{4}{3}} e^{-i \frac{\lambda_{\text{min}} \left( \Gamma_{\text{sb}} \right) d^2 p^2}{3 ( 90 \Sigma_W )^2}} c_1(q,\Gamma_{\text{sb}})^{\frac{2}{3}} e^{\frac{1}{3}\left( c_2(p) + \log \left( \det \left( \Gamma_{\text{sb}}^{-1} \right) \right) \right)} < 1$ holds by definition in \eqref{eqn:M-definition}. Thus, $\delta = i^{\frac{4}{3}} e^{-i \frac{\lambda_{\text{min}} \left( \Gamma_{\text{sb}} \right) d^2 p^2}{3 ( 90 \Sigma_W )^2}} c_1(q,\Gamma_{\text{sb}})^{\frac{2}{3}} e^{\frac{1}{3}\left( c_2(p) + \log \left( \det \left( \Gamma_{\text{sb}}^{-1} \right) \right) \right)} \in (0,1)$.

    Next, since $\delta \in (0,1)$ and $i \geq 1$, we find that $\overline{\Gamma}=\frac{1}{\delta} i^2 (q + \lambda_{\text{max}}(\Gamma_{\text{sb}}))I \succ (q + \lambda_{\text{max}}(\Gamma_{\text{sb}})) \succ \Gamma_{\text{sb}}$ holds.
    
    Next, we know from 2) in the premise that $P\left( \sum_{t=1}^{i} Z_t Z_t^{\top} \not \preceq \frac{1}{\delta}i^3 q I \right) \leq \delta$, which implies that $P\left( \sum_{t=1}^{i} Z_t Z_t^{\top} \not \preceq \overline{\Gamma}i \right) = P\left( \sum_{t=1}^{i} Z_t Z_t^{\top} \not \preceq \frac{1}{\delta}i^3 (q+\lambda_{\text{max}}(\Gamma_{\text{sb}})) I \right) \leq \delta$.
    
    We now prove that \eqref{eqn:estim-bound-i-condition} holds with $k=1$, i.e., $i \geq (10/p^2) \left( \log \left( \frac{1}{\delta} \right) + ( 4 \log \left( 10 / p \right) ) + \log \left( \det \left( \overline{\Gamma} \Gamma_{\text{sb}}^{-1} \right) \right) \right)$. Since $i \geq M(d,p,q,\Gamma_{\text{sb}})$, it follows that $i \geq \left( \frac{p^2}{10} - \frac{\lambda_{\text{min}} \left( \Gamma_{\text{sb}} \right) d^2 p^2}{ ( 90 \Sigma_W )^2} \right)^{-1}\left(  4 \log \left( 10 / p \right)  -c_2(p)  \right) $ by definition in \eqref{eqn:M-definition}. After manipulating this inequality, we find
    
    \begin{align}
        &i^{\frac{4}{3}}e^{-i\frac{\lambda_{\text{min}} \left( \Gamma_{\text{sb}} \right) d^2 p^2 }{3 ( 90 \Sigma_W )^2}} c_1(q,\Gamma_{\text{sb}})^{\frac{2}{3}} e^{\frac{1}{3}\big[ c_2(p) + \log \big( \det \big( \Gamma_{\text{sb}}^{-1} \big)\big) \big]} \nonumber \\
        & \quad \geq i^{\frac{4}{3}}e^{-i\frac{p^2}{30}} c_1(q,\Gamma_{\text{sb}})^{\frac{2}{3}} \\
        & \quad \quad  \times e^{\frac{1}{3}\big[  4 \log \left( 10 / p \right)  + \log \big( \det \big( \Gamma_{\text{sb}}^{-1} \big)\big) \big]} \label{eqn:parameter-estimation-error-tail-bound-ineq-1}
    \end{align}
    where we rely on $\left( \frac{p^2}{10} - \frac{\lambda_{\text{min}} \left( \Gamma_{\text{sb}} \right) d^2 p^2}{( 90 \Sigma_W )^2} \right) > 0$ for $d \in \Big(0,\frac{ 90 \Sigma_W }{\sqrt{10 \lambda_{\text{min}} \left( \Gamma_{\text{sb}} \right)}}\Big)$. It follows from \eqref{eqn:parameter-estimation-error-tail-bound-ineq-7} and \eqref{eqn:parameter-estimation-error-tail-bound-ineq-1} that $\delta \geq i^{\frac{4}{3}} e^{-i\frac{p^2}{30}} c_1(q,\Gamma_{\text{sb}})^{\frac{2}{3}} e^{\frac{1}{3}\left[  4 \log \left( 10 / p \right)  + \log \left( \det \left( \Gamma_{\text{sb}}^{-1} \right)\right) \right]}$. Rearranging the right hand side of this inequality, we find
    \begin{align}
        \delta &\geq e^{\frac{1}{3}\left( - \frac{i p^2}{10} + 2\log \left( i^2 c_1(q,\Gamma_{\text{sb}}) \right) +  4 \log \left( 10 / p \right)  + \log \left( \det \left( \Gamma_{\text{sb}}^{-1} \right)\right) \right)}.
    \end{align}
    Taking, the reciprocal, we have
    \begin{align}
        \frac{1}{\delta} &\leq e^{\frac{1}{3}\left( \frac{i p^2}{10} - 2\log \left( i^2 c_1(q,\Gamma_{\text{sb}}) \right) -  4 \log \left( 10 / p \right)  - \log \left( \det \left( \Gamma_{\text{sb}}^{-1} \right)\right) \right)}.
    \end{align}
    Taking the log of both sides then rearranging to isolate $i$, we arrive at
    \begin{align}
        i &\geq  \left(\frac{10}{p^2}\right) \left( \log \left( \frac{1}{\delta} \right) +  4 \log \left( \frac{10}{p} \right)  + \log\left( \det \left( \overline{\Gamma} \Gamma_{\text{sb}}^{-1} \right) \right) \right) .
    \end{align}
    
    Thus, we have satisfied the premise of Proposition \ref{prop:estim-bound}, and therefore we find that
    \begin{align}
        & P\Bigg( \norm{\hat{\theta}_i - \theta_*}_2 > \\
        & ( 90 \Sigma_W / p ) \sqrt{\frac{c_2(p) + \log \left( \det \left( \overline{\Gamma} \Gamma_{\text{sb}}^{-1} \right) \right) + \log \left( \frac{1}{\delta} \right) }{i\lambda_{\text{min}} \left( \Gamma_{\text{sb}} \right)}}  \Bigg) \leq 3 \delta. \label{eqn:parameter-estimation-error-tail-bound-ineq-6} \nonumber
    \end{align}
    
    Now, we prove that \\$\left( 90 \Sigma_W / p \right) \sqrt{\frac{c_2(p) + \log \left( \det \left( \overline{\Gamma} \Gamma_{\text{sb}}^{-1} \right) \right) + \log \left( \frac{1}{\delta} \right) }{i\lambda_{\text{min}} \left( \Gamma_{\text{sb}} \right)}} \leq d$. From \eqref{eqn:parameter-estimation-error-tail-bound-ineq-7}, we know $\delta \geq i^{\frac{4}{3}} e^{-i \frac{\lambda_{\text{min}} \left( \Gamma_{\text{sb}} \right) d^2 p^2}{3 ( 90 \Sigma_W )^2}} c_1(q,\Gamma_{\text{sb}})^{\frac{2}{3}} e^{\frac{1}{3}\left( c_2(p) + \log \left( \det \left( \Gamma_{\text{sb}}^{-1} \right) \right) \right)}$, and rearranging this inequality, we find
    \begin{align}
        \delta &\geq e^{-i\frac{\lambda_{\text{min}} \left( \Gamma_{\text{sb}} \right) d^2 p^2}{3 ( 90 \Sigma_W  )^2}+\frac{2\log \left( i^2 c_1(q,\Gamma_{\text{sb}}) \right)}{3}+\frac{1}{3}\left[ c_2(p) + \log \left( \det \left( \Gamma_{\text{sb}}^{-1} \right)\right) \right]}
    \end{align}
    Taking, the reciprocal, we have
    \begin{align}
        \frac{1}{\delta} &\leq e^{\frac{1}{3}\left( i\frac{\lambda_{\text{min}} \left( \Gamma_{\text{sb}} \right) p^2 d^2}{( 90 \Sigma_W  )^2} - 2\log \left( i^2 c_1(q,\Gamma_{\text{sb}}) \right) - c_2(p) - \log \left( \det \left( \Gamma_{\text{sb}}^{-1} \right)\right) \right)}
    \end{align}
    Taking the log of both sides then rearranging to isolate $d$, we have
    \begin{align}
        d \geq &( 90 \Sigma_W / p ) \\
        & \times \sqrt{ \frac{c_2(p) + \log \left( \det \left( \overline{\Gamma}\Gamma_{\text{sb}}^{-1} \right) \right) + \log \left( \frac{1}{\delta} \right)}{i\lambda_{\text{min}} \left( \Gamma_{\text{sb}} \right)}} \label{eqn:parameter-estimation-error-tail-bound-ineq-5}
    \end{align}
    
    Thus, we conclude that
    \begin{align}
        &P\left(  \norm{\hat{\theta}_i - \theta_*}_2 > d  \right) \\
        & \leq P\Bigg(  \norm{\hat{\theta}_i - \theta_*}_2 > \label{eqn:parameter-estimation-error-tail-bound-ineq-2} \\
        & \quad ( 90 \Sigma_W / p ) \sqrt{\frac{c_2(p) + \log \left( \det \left( \overline{\Gamma} \Gamma_{\text{sb}}^{-1} \right) \right) + \log \left( \frac{1}{\delta} \right) }{i\lambda_{\text{min}} \left( \Gamma_{\text{sb}} \right)}}  \Bigg) \\
        & \leq 3 \delta \label{eqn:parameter-estimation-error-tail-bound-ineq-3} \\
        &= 3i^{\frac{4}{3}} e^{-i \frac{\lambda_{\text{min}} \left( \Gamma_{\text{sb}} \right) d^2 p^2}{3 ( 90 \Sigma_W )^2}} \\
        & \quad \times c_1(q,\Gamma_{\text{sb}})^{\frac{2}{3}} e^{\frac{1}{3}\left( c_2(p) + \log \left( \det \left( \Gamma_{\text{sb}}^{-1} \right) \right) \right)} \label{eqn:parameter-estimation-error-tail-bound-ineq-4} \\
        &= i^{\frac{4}{3}} e^{- c_3(d,p,\Gamma_{\text{sb}}) i } c_4(p,q,\Gamma_{\text{sb}}).
    \end{align}
    where \eqref{eqn:parameter-estimation-error-tail-bound-ineq-2} follows from \eqref{eqn:parameter-estimation-error-tail-bound-ineq-5}, \eqref{eqn:parameter-estimation-error-tail-bound-ineq-3} follows from \eqref{eqn:parameter-estimation-error-tail-bound-ineq-6}, and \eqref{eqn:parameter-estimation-error-tail-bound-ineq-4} follows from \eqref{eqn:parameter-estimation-error-tail-bound-ineq-7}.
\end{proof}

\subsection{Analysis for Lemma \ref{lemma:msb-ce}}

We now provide the proof of Lemma \ref{lemma:msb-ce}. Following this, we provide the proofs for supporting results, namely Lemmas \ref{lemma:stability-a-not-1}, \ref{lemma:stability-reference-system} and \ref{lemma:bound-reference-error-after-enter-tube}. We then provide Lemma \ref{lemma:similar-controller-behaviour-after-time-T} and its proof.
\begin{proof}[Proof of Lemma \ref{lemma:msb-ce}]
    The proof proceeds by splitting the analysis into the case where 1) $a \in (-1,1)$, and 2) $a \in \{-1, 1 \}$.

    \textit{Case 1:} Consider the closed-loop process $(X_t)_{t\in \mathbb{N}_0}$ from \eqref{eqn:closed-loop-system-v2} with $a \in (-1,1)$. From Lemma \ref{lemma:stability-a-not-1}, we know that the process $(X_t)_{t \in \mathbb{N}_{0}}$ satisfies $\mathbb{E}\left[ X_t^2 \right] \leq x_0^2 + \frac{\beta(\lambda)}{1 - \lambda}$ for all $\lambda \in (a^2,1)$, with $\beta(\lambda)$ defined in Lemma \ref{lemma:stability-a-not-1}. The conclusion follows by choosing $\lambda \in (a^2,1)$, and setting $e = x_0^2 + \frac{\beta(\lambda)}{1-\lambda}$.
    
    \textit{Case 2}: Suppose $a \in \{-1,1\}$, and consider the process $(X_t)_{t \in \mathbb{N}_0}$. For all $t \in \mathbb{N}_0$, $\mathbb{E}[X_t^2]$ is upper bounded by 
    \begin{align}
        \mathbb{E}\left[ X_t^2 \right] &= \mathbb{E}\left[ \left( X_t - X_t^* + X_t^* \right)^2 \right] \\
        & \leq 2 \left( \mathbb{E}\left[ \left( X_t - X_t^* \right)^2  \right] + \mathbb{E}\left[ \left(X_t^*\right)^2 \right] \right), \label{eqn:decompose-mean-square}
    \end{align}
    where \eqref{eqn:decompose-mean-square} follows from $(a+b)^2 \leq 2(a^2 + b^2)$ for $a,b \in \mathbb{R}$, linearity of expectation, and the definition of $X_t^*$ \eqref{eqn:reference-system}. From Lemma \ref{lemma:stability-reference-system}, we know that there exists $e_1 > 0$ such that for all $t \in \mathbb{N}_0$, $\mathbb{E}\left[ (X_t^*)^2 \right] \leq e_1$. 
    
    Now, we aim to prove that there there exists $e_2 > 0$ such that for all $t \in \mathbb{N}_0$, $\mathbb{E}\left[ \left( X_t - X_t^* \right)^2 \right] \leq e_2$.
    
    Let $d^* := \min \left( m/2 , 1/2, \abs{b}/2 \right)$. Using the law of total expectation and the definition of $T_d$ \eqref{eqn:definition-Td}, for all $t \in \mathbb{N}_0$, $\mathbb{E}\left[ \left( X_t - X_t^* \right)^2 \right]$ can be upper bounded by
    \begin{align}
        &\mathbb{E}\left[ (X_t - X_t^*)^2 \right] 
        \\
        &\leq \sum_{k = 1}^{\infty} \mathbb{E} \left[ (X_t - X_t^*)^2 \mid  T_{d^*} = k \right] P \left(  T_{d^*} \geq k  \right). \label{eqn:total-expectation-deviation-bound}
    \end{align}
    
    From Lemma \ref{lemma:bound-reference-error-after-enter-tube}, we find that for all $k \in \mathbb{N}$ and $t \in \mathbb{N}_0$, on the event $\{ T_{d^*} = k \}$, $\abs{X_t - X_t^*} \leq \max \left( 2\abs{b}D (k+1), 2 \left( \frac{\abs{b}+d^*}{1-d^*} + 3\abs{b} \right) D \right)$. Using the monotonicity property of expectation, it follows that
    \begin{align}
        &\mathbb{E}\left[ \left( X_t - X_t^* \right)^2 \mid  T_{d^*} = k  \right] \\
        &\leq \mathbb{E}\left[ \max \left( 2\abs{b}D (k+1), 2 \left( \frac{\abs{b}+d^*}{1-d^*} + 3\abs{b} \right) D \right)^2 \mid  T_{d^*} = k  \right] \\
        &= \max \left( 4b^2D^2 (k+1)^2, 4\left( \frac{\abs{b}+d^*}{1-d^*} + 3 \abs{b} \right)^2D^2 \right) \label{eqn:conditional-expectation-deviation-bound}
    \end{align}
    Next, using the union bound, we find that for all $k \in \mathbb{N}$
    \begin{align}
        P(  T_{d^*} \geq k ) &= P \Big( \exists i \geq k-1 : \norm{\hat{\theta}_i - \theta_*}_2 > d^*  \Big) \\
        &\leq \sum_{i \geq k-1} P \Big( \norm{\hat{\theta}_i - \theta_*}_2 > d^* \Big). \label{eqn:td-tail-bound}
    \end{align}
    Combining \eqref{eqn:total-expectation-deviation-bound}, \eqref{eqn:conditional-expectation-deviation-bound} and \eqref{eqn:td-tail-bound} we find
    \begin{equation}
        \begin{aligned}
        &\mathbb{E}\left[ (X_t-X_t^*)^2 \right] \\
        & \leq \sum_{k = 1}^{\infty} \max \Big( 4b^2D^2 (k+1)^2, 4\Big( \frac{\abs{b}+d}{1-d} + 3 \abs{b} \Big)^2D^2 \Big) \\
        & \quad \times \sum_{i \geq k-1} P\Big(\norm{\hat{\theta}_i - \theta_*}_2 > d^*\Big) =: e_2
        \end{aligned} \label{eqn:e-dev-upper-bound} \\
    \end{equation}
    for $k \in \mathbb{N}$, where we introduce $e_2$ to denote the infinite sum which uniformly bounds $\mathbb{E}\left[ (X_t-X_t^*)^2 \right]$ for all $t \in \mathbb{N}_0$. Now, let $N = \left \lceil \frac{\frac{\abs{b}+d^*}{1-d^*} + 3 \abs{b}}{\abs{b}} \right \rceil - 1$. We find that
    \begin{align}
        &\sum_{k = N}^{\infty} \max \Big( 4b^2D^2 (k + 1)^2, 4\Big( \frac{\abs{b}+d^*}{1-d^*} + 3 \abs{b} \Big)^2D^2 \Big) \\
        & \quad \times \sum_{i \geq k-1} P\Big(\norm{\hat{\theta}_i - \theta_*}_2 > d^*\Big) \\
        & = 4 b^2 D^2 \sum_{k = N}^{\infty} (k+1)^2 \sum_{i \geq k-1} P\Big(\norm{\hat{\theta}_i - \theta_*}_2 > d^*\Big) \label{eqn:msb-ce-ineq-1} \\
        & \leq 4 b^2 D^2  \sum_{k = 1}^{\infty} (k+1)^2 \sum_{i \geq k-1} P\Big(\norm{\hat{\theta}_i - \theta_*}_2 > d^*\Big) \\
        & \leq 8 b^2 D^2  \sum_{k = 1}^{\infty} k^2 \sum_{i \geq k-1} P\Big(\norm{\hat{\theta}_i - \theta_*}_2 > d^*\Big) \\
        & < \infty \label{eqn:msb-ce-ineq-2}
    \end{align}
    where \eqref{eqn:msb-ce-ineq-1} follows since $\max \big( 4b^2D^2 (k+1)^2, 4\big( \frac{\abs{b}+d^*}{1-d^*} + 3 \abs{b} \big)^2D^2 \big) = 4 b^2 D^2 (k + 1)^2$ for $k \geq N$, and \eqref{eqn:msb-ce-ineq-2} follows from the the assumption that  $\sum_{k=1}^{\infty}k^2 \sum_{i \geq k-1} P\left( \norm{\hat{\theta}_i - \theta_*}_2 >d  \right) < \infty$ for all $d \in (0,m)$ in the premise. Moreover, we have that
    \begin{align}
        &\sum_{k = 1}^{N-1} \max \Big( 4b^2D^2 (k+1)^2, 4\Big( \frac{\abs{b}+d^*}{1-d^*} + 3 \abs{b} \Big)^2D^2 \Big) \\
        &\quad \times \sum_{i \geq k-1} P\Big(\norm{\hat{\theta}_i - \theta_*}_2 > d^*\Big) \\
        & = 4\Big( \frac{\abs{b}+d^*}{1-d^*} + 3 \abs{b} \Big)^2D^2 \\
        & \quad \times \sum_{k = 1}^{N-1} \sum_{i \geq k-1} P\Big(\norm{\hat{\theta}_i - \theta_*}_2 > d^*\Big) \label{eqn:msb-ce-ineq-3} \\
        & \leq 4\Big( \frac{\abs{b}+d^*}{1-d^*} + 3 \abs{b} \Big)^2D^2  \\
        & \quad \times \sum_{k = 1}^{\infty} k^2 \sum_{i \geq k-1} P\Big(\norm{\hat{\theta}_i - \theta_*}_2 > d^*\Big) \\
        & < \infty \label{eqn:msb-ce-ineq-4}
    \end{align}
    where \eqref{eqn:msb-ce-ineq-3} follows since $\max \big( 4b^2D^2 (k+1)^2, 4\big( \frac{\abs{b}+d^*}{1-d^*} + 3 \abs{b} \big)^2D^2 \big) = 4\big( \frac{\abs{b}+d^*}{1-d^*} + 3 \abs{b} \big)^2D^2$ for $k < N$, and \eqref{eqn:msb-ce-ineq-4} follows from the premise.
    From \eqref{eqn:msb-ce-ineq-4}, \eqref{eqn:msb-ce-ineq-2} and \eqref{eqn:e-dev-upper-bound}, it follows that $e_2 < \infty$. Our conclusion follows by setting $e = 2\left( e_1 + e_2 \right)$.
\end{proof}

\begin{proof}[Proof of Lemma \ref{lemma:stability-a-not-1}]
    Suppose $t \in \mathbb{N}_0$. Recall the closed-loop system from \eqref{eqn:closed-loop-system-v2}. Squaring this, we obtain,
    \begin{align}
        X_{t+1}^2 &= \left( \abs{a}\abs{X_t} + \abs{b}U_t + \abs{W_t} \right)^2 \\
        &\leq \left( \abs{a}\abs{X_t} + \abs{b}U_{\text{max}} + \abs{W_t} \right)^2 \label{eqn:stability-a-not-1-umax-bound-step} \\
        &= a^2 X_t^2 + 2 \abs{a}\abs{X_t} C_t + C_t^2 \label{eqn:stability-a-not-1-bounding-square-state}
    \end{align}
    where \eqref{eqn:stability-a-not-1-umax-bound-step} holds since G1 is satisfied by construction, and \eqref{eqn:stability-a-not-1-bounding-square-state} holds by the definition $C_{t} := \abs{b}U_{\text{max}} + \abs{W_{t}}$. Note that the first and second moments of $C_{t}$ satisfy $\mathbb{E}\left[ C_{t} \right] = D_1$ and $\mathbb{E}\left[ C_{t}^2 \right] = D_2$.
    
    Now define $K(\lambda) := \left\{ x \in \mathbb{R} : |x| \leq E(\lambda) \right\}$ for all $\lambda \in (a^2,1)$. Suppose $\lambda \in (a^2,1)$. On the event $\{ X_{t} \not \in K(\lambda) \}$, we have
    \begin{align}
        \mathbb{E}\left[ X_{t+1}^2 \mid X_{t} \right] &\leq  {a}^2 {X_{t}}^2 + 2 \abs{a} \abs{X_{t}} D_1 + D_2 \label{eqn:stability-a-not-1-bounding-conditional-outside-K-middle-step} \\
        &\leq \lambda X_{t}^2 \label{eqn:stability-a-not-1-bounding-conditional-outside-K-last-step}
    \end{align}
    where \eqref{eqn:stability-a-not-1-bounding-conditional-outside-K-middle-step} follows from \eqref{eqn:stability-a-not-1-bounding-square-state} since $C_t$ is independent of $X_t$, and \eqref{eqn:stability-a-not-1-bounding-conditional-outside-K-last-step} follows from the fact that on the event $\{ X_{t} \not \in K(\lambda) \}$, $\abs{X_{t}} > E(\lambda)$ holds (by definition), as well as the fact that $|x| > E(\lambda) \implies \lambda x^2 \geq \abs{a}^2 |x|^2 + 2 |a| |x| D_1 + D_2$ (seen by applying the quadratic formula to solve for the set of $\abs{x}$ such that $(\lambda-\abs{a}^2)\abs{x}^2 - 2\abs{a}\abs{x}D_1-D_2 \geq 0$). On the event $\{X_{t} \in K(\lambda) \}$, we have,
    \begin{align}
        \mathbb{E}\left[ X_{t+1}^2 | X_{t} \right] &\leq {a}^2 {X_{t}}^2 + 2 \abs{a} \abs{X_{t}} D_1 + D_2 \\
        & \leq {a}^2 E(\lambda)^2 + 2 \abs{a} E D_1 + D_2  = \beta(\lambda) \label{eqn:stability-a-not-1-ineq-1}
    \end{align}
    where \eqref{eqn:stability-a-not-1-ineq-1} follows since on the event $\{X_{t} \in K(\lambda) \}, \abs{X_{t}} \leq E(\lambda)$ holds (from the definition of $K(\lambda)$).
    Finally, we find
    \begin{align}
        \mathbb{E}\big[ X_t^2 \big] &= \mathbb{E}\big[ \mathbb{E}\big[X_t^2|X_{t-1}\big] \big] \allowdisplaybreaks \\
        &= \mathbb{E}\big[ \mathbb{E}\big[X_t^2|X_{t-1}\big] \bm{1}_{ \{X_{t-1} \not \in K(\lambda) \}} \big] \allowdisplaybreaks \\
        & \quad + \mathbb{E}\big[ \mathbb{E}\big[X_t^2|X_{t-1}\big] \bm{1}_{\{ X_{t-1} \in K(\lambda) \}} \big] \allowdisplaybreaks \\
        &\leq \mathbb{E}\big[ \lambda X_{t-1}^2 \bm{1}_{\{ X_{t-1} \not \in K(\lambda) \}}\big] \label{eqn:geometric-stability-conditions-step} \allowdisplaybreaks \\ 
        & \quad + \mathbb{E}\big[\beta \bm{1}_{ \{ X_{t-1} \in K(\lambda) \}}  \big] \allowdisplaybreaks \\
        &\leq \lambda \mathbb{E}\big[ X_{t-1}^2 \big] + \beta(\lambda) \label{eqn:geometric-stability-pre-iterative-step} \allowdisplaybreaks \\
        & \leq \lambda^t \mathbb{E}\big[ X_0^2 \big] + \beta(\lambda) \sum_{k=0}^{t-1}\lambda^{t-1-k} \label{eqn:geometric-stability-post-iterative-step} \allowdisplaybreaks \\
        & \leq x_0^2 + \frac{\beta(\lambda)}{1 - \lambda} \label{eqn:geometric-stability-infinite-sum-sequence-step}
    \end{align}
    for all $t \in \mathbb{N}_0$, where \eqref{eqn:geometric-stability-conditions-step} follows from conditions \eqref{eqn:stability-a-not-1-bounding-conditional-outside-K-last-step} and \eqref{eqn:stability-a-not-1-ineq-1}, \eqref{eqn:geometric-stability-post-iterative-step} follows by iteratively applying \eqref{eqn:geometric-stability-pre-iterative-step}, and \eqref{eqn:geometric-stability-infinite-sum-sequence-step} follows from $X_0^2=x_0^2$ and the infinite sum of a geometric sequence.
\end{proof}

\begin{proof}[Proof of Lemma \ref{lemma:stability-reference-system}] Define $Y_t^* := a^t X_t^*$, $t \in \mathbb{N}_0$. Then, $\mathbb{E}\left[ \left(X_t^*\right)^2 \right]$ can be equivalently rewritten as follows for all $t \in \mathbb{N}_0$:
\begin{align}
    \mathbb{E}\big[\big(X_t^*\big)^2\big] &= \mathbb{E}\big[\big(Y_t^*\big)^2\big] \\
    &= \mathbb{E}\big[ \big(\big(Y_t^*\big)^+\big)^2 \big]+ \mathbb{E}\big[\big( (-Y_t^*)^+ \big)^2 \big] \label{eqn:stability-reference-system-middle-disappear-step}
\end{align}
where \eqref{eqn:stability-reference-system-middle-disappear-step} follows from the properties of $(\cdot)^+$ and linearity of expectation.
We will prove that there exists $e_1 >0$ such that for all $t \in \mathbb{N}_0$, $\mathbb{E}\left[ \left(\left(Y_t^*\right)^+\right)^2 \right] \leq e_1$. This will be accomplished by showing that $(Y^*_t)_{t \in \mathbb{N}_0}$ satisfies all of the conditions in Proposition \ref{prop:constant-drift-conditions}. In particular, the conditions are satisfied with $\gamma = \abs{b}D$, $J = \abs{b}D + |x_0|$, and $M = 8 \left( b^4 U_{\text{max}}^4 + S_4 \right)$. Firstly, note that the condition $Y^*_0 \leq J$ is satisfied since $Y^*_0 = a^0 X_0 \leq \abs{b} D + |x_0|$.

Next, we verify condition \eqref{eqn:msb-condition-1}. Note that
\begin{align}
    Y^*_{t+1} &= a^{t+1}X_{t+1}^* \\
    &= a^2 Y^*_t + a^{t+1} b \sigma_D \left( - \frac{a}{b} X_t^* \right) \\
    & \quad + a^{t+1} \left( b V_t + W_t \right) \label{eqn:stability-reference-system-y-definition-step-2} \\
    &= Y^*_t - \sigma_{\abs{b}D}\left( Y^*_t \right) + a^{t+1}\left( bV_t + W_t \right), \label{eqn:stability-reference-system-y-evolution}
\end{align}
where \eqref{eqn:stability-reference-system-y-definition-step-2} holds from the closed-loop system \eqref{eqn:closed-loop-system-v2} and the definition of $Y_t^*$, and \eqref{eqn:stability-reference-system-y-evolution} is due to the following equality:
\begin{align}
    &a^{t+1} b \sigma_D \left( - \frac{a}{b} X_t \right)  \\
    &= \begin{cases}
        b  a^{t-1} \left( - \frac{1}{b} a^{-t + 1} Y^*_t \right), \abs{- \frac{1}{b} a^{-t + 1} Y^*_t} \leq D \\
        b  a^{t-1} \frac{\left( - \frac{1}{b} a^{-t + 1} Y^*_t \right)}{\abs{ - \frac{1}{b} a^{-t + 1} Y^*_t }} D, \abs{- \frac{1}{b} a^{-t + 1} Y^*_t} > D 
    \end{cases} \label{eqn:simplify-saturation_y-sigma-definition-step-1} \\
    &= \begin{cases}
            \left( - Y^*_t \right), \abs{Y^*_t} \leq D \abs{b} \\
            \frac{\left( - Y^*_t \right)}{\abs{- Y^*_t}} D \abs{b}, \abs{Y^*_t} > D \abs{b}
    \end{cases} = \sigma_{\abs{b}D}\left( -Y^*_t \right). \label{eqn:simplify-saturation_y-sigma-definition-step-2}
\end{align}
where both \eqref{eqn:simplify-saturation_y-sigma-definition-step-1} and \eqref{eqn:simplify-saturation_y-sigma-definition-step-2} follow from the definition of $\sigma_D(\cdot)$. 
Let $\mathcal{F}_t$ be the natural filtration of the process $(Y^*_t)_{t \in \mathbb{N}_0}$. For all $t \in \mathbb{N}_0$, on the event $\{ Y^*_t > \abs{b}D + |x_0| \}$, we have
\begin{align}
    &\mathbb{E}\left[ Y^*_{t+1} - Y^*_t \mid \mathcal{F}_t \right]  \\
    &= \mathbb{E}\left[ -\sigma_{\abs{b} D} \left( Y^*_t \right) + a^{t+1}\left( b V_t + W_t \right) \mid \mathcal{F}_t \right] \label{eqn:stability-reference-expected-difference-y-definition-step} \\
    &= -\abs{b}D \label{eqn:stability-reference-expected-difference-y-saturated-step} = -\gamma,
\end{align}
where \eqref{eqn:stability-reference-expected-difference-y-definition-step} holds due to \eqref{eqn:stability-reference-system-y-evolution}, and \eqref{eqn:stability-reference-expected-difference-y-saturated-step} holds since $\sigma_{\abs{b}D}(Y_t^*) = \abs{b}D$ when $Y_t^* > \abs{b}D$, and $\mathbb{E}\left[ bV_t + W_t \mid \mathcal{F}_t \right] = 0$.
Thus, condition \eqref{eqn:msb-condition-1} has been verified. 

We now verify condition \eqref{eqn:msb-condition-2} as follows:
\begin{align}
    &\mathbb{E}\left[ \abs{Y^*_{t+1} - Y^*_t}^4 \mid Y^*_0, \hdots, Y^*_t \right] \\
    &\leq \mathbb{E}\left[ \left(\abs{b}(D+C) + \abs{W_t}\right)^4 \mid Y^*_0, \hdots, Y^*_t \right] \label{eqn:stability-reference-absolute-difference-2} \\
    & \leq 8 \left( b^4 U_{\text{max}}^4 + \mathbb{E}\left[ \abs{W_t}^4 \right] \right) \label{eqn:stability-reference-absolute-difference-3} = M,
\end{align}
where \eqref{eqn:stability-reference-absolute-difference-2} follows from \eqref{eqn:stability-reference-system-y-evolution} and the definition of $\sigma_{\abs{b}D}(\cdot)$, and \eqref{eqn:stability-reference-absolute-difference-3} follows from $(a+b)^4 \leq 8(a^4 + b^4)$ for $a,b \in \mathbb{R}$ and linearity of conditional expectation.

Therefore, since $(Y^*_t)_{t \in \mathbb{N}_0}$ satisfies the conditions in Proposition \ref{prop:constant-drift-conditions}, we find that there exists $e_1 >0$ such that for all $t \in \mathbb{N}_0$, $\mathbb{E}\left[ \left(\left(Y^*_t\right)^+\right)^2 \right] \leq e_1$.

Following an analogous method, we are also able to establish that the process $(-Y^*_t)_{t \in \mathbb{N}_0}$ satisfies the conditions in Proposition \ref{prop:constant-drift-conditions}, so there exists $e_2 >0$ such that for all $t \in \mathbb{N}_0$, $\mathbb{E}\left[ \left(\left(-Y_t^*\right)^+\right)^2 \right] \leq e_2$. Setting $c = e_1 + e_2$, it follows that $\mathbb{E}\left[ \left( Y^*_t \right)^2 \right] \leq c$, and therefore $\mathbb{E}\left[ \left( X^*_t  \right)^2 \right] \leq c $.
\end{proof}

\begin{proof}[Proof of Lemma \ref{lemma:bound-reference-error-after-enter-tube}]
    For all $t \in \mathbb{N}_0$, the error $X_{t+1}-X_{t+1}^*$ evolves as
    \begin{align}
        &X_{t+1}-X_{t+1}^* \\
        &=  \Big(aX_t + b \Big(\sigma_D\Big(G_tX_t\Big) + V_{t} \Big) + W_t \Big) \label{eqn:bound-reference-error-after-enter-tube-eq-1} \\
        & \quad - \Big( aX_t^* + b \Big(\sigma_D\Big(-\frac{a}{b}X_t^*\Big) + V_{t} \Big) + W_t \Big) \\
        &= a ( X_t - X_t^* ) \\
        & \quad + b \Big(\sigma_D\Big(G_tX_t\Big) - \sigma_D\Big(-\frac{a}{b}X_t^*\Big) \Big) \label{eqn:bound-reference-error-after-enter-tube-eq-2}
    \end{align}
    where \eqref{eqn:bound-reference-error-after-enter-tube-eq-1} follows from \eqref{eqn:closed-loop-system-v2}.
    Taking the absolute value, we find $\abs{X_{t+1}-X_{t+1}^*}$ is upper bounded in terms of $\abs{X_{t}-X_{t}^*}$ as follows
    \begin{align}
        \abs{X_{t+1}-X_{t+1}^*} &= \Big|a \left( X_t - X_t^* \right) \\
        & \quad + b\Big( \sigma_D\Big(G_tX_t\Big) - \sigma_D\Big(-\frac{a}{b}X_t^*\Big) \Big) \Big| \\
        & \leq \abs{X_t - X_t^*} \\
        & \quad + \abs{b} \Big|\sigma_D\Big(G_tX_t\Big) - \sigma_D\Big(-\frac{a}{b}X_t^*\Big)\Big| \\
        & \leq \abs{X_t - X_t^*} + \abs{b}2D  \label{eqn:bound-reference-error-after-enter-tube-ineq-1}
    \end{align}
    By iteratively applying \eqref{eqn:bound-reference-error-after-enter-tube-ineq-1}, the following then holds for $k \in \mathbb{N}_0$, and $t \leq k+1$:
    \begin{align}
        \abs{X_t - X_t^*} \leq t \abs{b} 2D \leq (k+1) \abs{b} 2D
    \end{align}
    
    We now move onto the case where $t \geq k+1$. Firstly, define the processes $(Y_t)_{t\in \mathbb{N}_0}$ and $(Y_t^*)_{t \in \mathbb{N}_0}$ so $Y_t := a^t X_t$ and $Y_t^*:= a^t X_t^*$. Their difference $Y_{t+1}  - Y_{t+1}^*$ can be written as
    \begin{align}
        &Y_{t+1}  - Y_{t+1}^* \\
        &= a^{t+1}( X_{t+1} - X_{t+1}^*) \\
        &= a^{t+1} \Big( a ( X_t - X_t^* ) \\
        & \quad + b\Big( \sigma_D\Big(G_tX_t\Big) - \sigma_D\Big(-\frac{a}{b}X_t^*\Big) \Big) \Big) \label{eqn:bound-reference-error-after-enter-tube-eq-3} \\
        &= Y_t - Y_t^* \\
        & \quad +  a^{t+1} b \Big( \sigma_D\Big(G_tX_t\Big) - \sigma_D\Big(-\frac{a}{b}X_t^*\Big)  \Big) \label{eqn:bound-reference-error-after-enter-tube-eq-4}
    \end{align}
    where \eqref{eqn:bound-reference-error-after-enter-tube-eq-3} follows from \eqref{eqn:bound-reference-error-after-enter-tube-eq-2}.

    Next, let $\Omega$ denote the underlying sample space. Suppose $d \in (0,\min(1,\abs{b}))$. Let $H=(\abs{b}+d)/(1-d)$, and let $E_1 = \{ \abs{Y_t} > H D \}$, $E_2 = \{ \abs{Y_t^*} > H D \}$, $E_3 = \{ \abs{Y_t} > (H + 2\abs{b}) D \}$, $E_4 = \{ \abs{Y_t^*} > (H + 2 \abs{b} ) D \}$. Moreover, let $A_1 = \{ \abs{Y_t}>HD , \abs{Y_t^*} > H D \}$, $A_2 = \{ \abs{Y_t}>(H + 2 \abs{b})D, \abs{Y_t^*} \leq H D \}$, $A_3 = \{ \abs{Y_t} \leq HD , \abs{Y_t^*} > (H + 2\abs{b})D \}$, and $A_4 = \{ \abs{Y_t} \leq (H+2\abs{b})D , \abs{Y_t^*} \leq (H+2\abs{b}) D \}$. The sample space $\Omega$ can be equivalently written as
    \begin{align}
        \Omega &= (E_1 \cap E_2) \cup (E_1 \cap E_2^C) \cup (E_1^C \cap E_2^C) \cup (E_1^C \cap E_2) \\
        &= (E_1 \cap E_2) \cup (E_1 \cap E_2^C \cap E_3) \cup (E_1 \cap E_2^C \cap E_3^C) \cup \\
        & \quad  (E_1^C \cap E_2^C) \cup (E_1^C \cap E_2 \cap E_4) \cup (E_1^C \cap E_2 \cap E_4^C) \\
        &= (E_1 \cap E_2) \cup (E_3 \cap E_2^C) \cup (E_3^C \cap E_2^C) \cup \\
        & \quad (E_1^C \cap E_2^C) \cup (E_1^C \cap E_4) \cup (E_1^C \cap E_4^C) \label{eqn:bound-reference-error-set-ineq-1} \\
        &= A_1 \cup A_2 \cup A_3 \cup A_4, \label{eqn:bound-reference-error-set-ineq-2}
    \end{align}
    with \eqref{eqn:bound-reference-error-set-ineq-2} holding since $A_1 = E_1 \cap E_2$, $A_2 = E_3 \cap E_2^C$, $A_3 = E_1^C \cap E_4$, and $(E_3^C \cap E_2^C) \cup (E_1^C \cap E_2^C) \cup (E_1^C \cap E_4^C) \subseteq A_4 \subseteq \Omega$. Since $\abs{Y_{t+1}-Y_{t+1}^*}$ takes values in $\mathbb{R}_{\geq 0}$, making use of the properties of the indicator function, we find
    \begin{align}
        &\abs{Y_{t+1} - Y_{t+1}^*} = \max \big(\abs{Y_{t+1}  - Y_{t+1}^* } \bm{1}_{A_1}, \\
        & \quad \abs{Y_{t+1}  - Y_{t+1}^* }\bm{1}_{A_2}, \abs{Y_{t+1}  - Y_{t+1}^* }\bm{1}_{A_3}, \\
        & \quad \abs{Y_{t+1}  - Y_{t+1}^* }\bm{1}_{A_4} \big). \label{eqn:reference-error-case-bound}
    \end{align}
    We now prove upper bounds for $\abs{Y_{t+1}-Y_{t+1}^*}$ on the event $\{ T_d = k \} \cap A_i$ for $i \in \{ 1, \hdots, 4 \}$ and $t \geq k + 1$.
        
    \textit{Case 1:} Consider the event $\{T_d = k \} \cap A_1$, and suppose $t \geq k + 1$. Since $\abs{Y_t} > H D$, $\abs{Y_t^*} > H D$ on $A_1$, we have
    \begin{align}
        &Y_{t+1}  - Y_{t+1}^* \allowdisplaybreaks \\
        &= Y_t - Y_t^* +  a^{t+1} b \bigg( \sigma_D\bigg(-\frac{\hat{a}_{t-1}}{\hat{b}_{t-1}}X_t\bigg) \\
        & \quad  - \sigma_D\bigg(-\frac{a}{b}X_t^*\bigg)  \bigg) \allowdisplaybreaks \label{eqn:bound-reference-error-after-enter-tube-eq-5} \\
        &= Y_t - Y_t^*  +  a^{t+1} b \bigg( -\frac{\hat{a}_{t-1}/\hat{b}_{t-1}}{ |{\hat{a}_{t-1}/\hat{b}_{t-1}}|}\frac{X_t}{\abs{X_t}}D \\
        & \quad  + \frac{a/b}{\abs{a/b}}\frac{X_t^*}{\abs{X_t^*}}D \bigg) \label{eqn:bound-reference-error-after-enter-tube-eq-6}  \allowdisplaybreaks \\
        &= Y_t - Y_t^* \\
        & \quad +  a^{t+1} b \left( -\frac{a/b}{\abs{a/b}}\frac{X_t}{\abs{X_t}}D + \frac{a/b}{\abs{a/b}}\frac{X_t^*}{\abs{X_t^*}}D \right) \label{eqn:bound-reference-error-after-enter-tube-eq-12} \allowdisplaybreaks \\
        &= Y_t - Y_t^* -  \abs{b} \left(\frac{Y_t}{\abs{Y_t}} - \frac{Y_t^*}{\abs{Y_t^*}} \right) D
    \end{align}
    where \eqref{eqn:bound-reference-error-after-enter-tube-eq-5} follows from \eqref{eqn:bound-reference-error-after-enter-tube-eq-4} and \eqref{eqn:control-policy-v2}, \eqref{eqn:bound-reference-error-after-enter-tube-eq-6} follows from the definition of $\sigma_D(\cdot)$, and \eqref{eqn:bound-reference-error-after-enter-tube-eq-12} follows from Lemma \ref{lemma:similar-controller-behaviour-after-time-T}.
    
    If $Y_t > H D$ and $Y_t^* > H D$, or $Y_t < - H D$ and $Y_t^* < - H D$, then $\frac{Y_t}{\abs{Y_t}} - \frac{Y_t^*}{\abs{Y_t^*}} = 0$ and so $Y_{t+1}  - Y_{t+1}^* = Y_t - Y_t^*$.
    
    If $Y_t > H D$ and $Y_t^* < -H D$, then $\frac{Y_t}{\abs{Y_t}} - \frac{Y_t^*}{\abs{Y_t^*}} = 2$, and so $\abs{Y_{t+1}-Y_{t+1}^*} = Y_{t+1} - Y_{t+1}^* = Y_t - Y_t^* - \abs{b}2D \leq Y_t - Y_t^* = \abs{Y_t - Y_t^*} $.
    
    If $Y_t < -H D$ and $Y_t^* > H D$, then $\frac{Y_t}{\abs{Y_t}} - \frac{Y_t^*}{\abs{Y_t^*}} = -2$, and so $\abs{Y_{t+1}-Y_{t+1}^*} = \abs{Y_t - Y_t^* + \abs{b}2D} = (-1) \left( Y_t - Y_t^* + 2 \abs{b} D \right) = (-1) \left( Y_t - Y_t^* \right) - 2 \abs{b} D = \abs{Y_t - Y_t^*} - 2\abs{b} D \leq \abs{Y_t - Y_t^*}$.
    
    Thus, it follows that on the event $\{T_d = k \} \cap A_1$, $\abs{Y_{t+1} - Y_{t+1}^*} \leq \abs{Y_t - Y_t^*}$ holds.
    
    \textit{Case 2:} Consider the event $\{T_d = k \} \cap A_2$, and suppose $t \geq k + 1$. Since $\abs{Y_t} > \left( H + 2 \abs{b} \right) D$ and $\abs{Y_t^*} \leq H D$ on $A_2$, we have
    \begin{align}
        &Y_{t+1}-Y_{t+1}^* \\
        &= Y_t - Y_t^* +  a^{t+1} b \Big( \sigma_D\Big(-\frac{\hat{a}_{t-1}}{\hat{b}_{t-1}}X_t\Big) \\
        & \quad  - \sigma_D\Big(-\frac{a}{b}X_t^*\Big)  \Big) \label{eqn:bound-reference-error-after-enter-tube-eq-7} \\
        &= Y_t - Y_t^* + a^{t+1}b\Big( - \frac{\hat{a}_{t-1}/\hat{b}_{t-1}}{\big|{\hat{a}_{t-1}/\hat{b}_{t-1}}\big|} \frac{X_t}{\abs{X_t}}D \\
        & \quad  - \sigma_D\Big( \frac{-a}{b} X_t^* \Big) \Big) \label{eqn:bound-reference-error-after-enter-tube-eq-8} \\
        &= Y_t - Y_t^* \\
        & \quad + a^{t+1}b\Big( - \frac{a/b}{\abs{a/b}} \frac{X_t}{\abs{X_t}}D - \sigma_D\Big( \frac{-a}{b} X_t^* \Big) \Big) \label{eqn:bound-reference-error-after-enter-tube-eq-9} \\
        &= Y_t - Y_t^* - \frac{Y_t}{\abs{Y_t}} \abs{b} D + a^{t+1}b \Big( - \sigma_D\Big( \frac{-a}{b} X_t^* \Big) \Big)
    \end{align}
    where \eqref{eqn:bound-reference-error-after-enter-tube-eq-7} follows from \eqref{eqn:bound-reference-error-after-enter-tube-eq-4} and \eqref{eqn:control-policy-v2}, \eqref{eqn:bound-reference-error-after-enter-tube-eq-8} follows from the definition of $\sigma_D(\cdot)$ and \eqref{eqn:bound-reference-error-after-enter-tube-eq-9} follows from Lemma \ref{lemma:similar-controller-behaviour-after-time-T}.
    
    Consider when $Y_t > (H + 2 \abs{b}) D$. Then, $Y_{t+1} - Y_{t+1}^* = Y_t - Y_t^* - \abs{b}D + a^{t+1}b \left( - \sigma_D\left( \frac{-a}{b} X_t^* \right) \right)$. Since $H D \geq Y_t^*$, then $Y_t - Y_t^* \geq (H + 2 \abs{b})D - H D = 2\abs{b} D$. Moreover, since $\abs{a^{t+1}b \left( - \sigma_D\left( \frac{-a}{b} X_t^* \right) \right)} \leq \abs{b}D$, then $Y_t - Y_t^* - \abs{b}D + a^{t+1}b \left( - \sigma_D\left( \frac{-a}{b} X_t^* \right) \right) \geq 0 $. Therefore, $\abs{Y_{t+1} - Y_{t+1}^*}=\abs{Y_t - Y_t^* - \abs{b}D + a^{t+1}b \left( - \sigma_D\left( \frac{-a}{b} X_t^* \right) \right)} = Y_t - Y_t^* - \abs{b}D + a^{t+1}b \left( - \sigma_D\left( \frac{-a}{b} X_t^* \right) \right) \leq Y_t - Y_t^* = \abs{Y_t - Y_t^*}$.
    
    Now, consider when $Y_t < - (H + 2 \abs{b}) D$. Then, $Y_{t+1} - Y_{t+1}^* = Y_t - Y_t^* + \abs{b}D + a^{t+1}b \left( - \sigma_D\left( \frac{-a}{b} X_t^* \right) \right)$. Since $Y_t^* \geq -H D$, then $Y_t - Y_t^* \leq -(H + 2 \abs{b})D + H D = -2\abs{b} D$. Moreover, since $\abs{a^{t+1}b \left( - \sigma_D\left( \frac{-a}{b} X_t^* \right) \right)} \leq \abs{b}D$, then $Y_t - Y_t^* + \abs{b}D + a^{t+1}b \left( - \sigma_D\left( \frac{-a}{b} X_t^* \right) \right) \leq 0 $. Therefore, $\abs{Y_{t+1} - Y_{t+1}^*}=\abs{Y_t - Y_t^* + \abs{b}D + a^{t+1}b \left( - \sigma_D\left( \frac{-a}{b} X_t^* \right) \right)} = - \left(Y_t - Y_t^* + \abs{b}D + a^{t+1}b \left( - \sigma_D\left( \frac{-a}{b} X_t^* \right) \right) \right) = -(Y_t - Y_t^*) - \abs{b}D - a^{t+1}b \left( - \sigma_D\left( \frac{-a}{b} X_t^* \right) \right) \leq -(Y_t - Y_t^*) = \abs{Y_t - Y_t^*}$.

    Thus, it follows that on the event $\{T_d = k \} \cap A_2$, $\abs{Y_{t+1} - Y_{t+1}^*} \leq \abs{Y_t - Y_t^*}$ holds.
        
    \textit{Case 3:} Consider the event $\{T_d = k \} \cap A_3$, and suppose $t \geq k + 1$. Case 3 follows similarly to Case 2, but it does not require Lemma \ref{lemma:similar-controller-behaviour-after-time-T}. In particular, \eqref{eqn:bound-reference-error-after-enter-tube-eq-4}, \eqref{eqn:control-policy-v2}, and the definition of $\sigma_D(\cdot)$, are first used to establish that $Y_{t+1}-Y_{t+1}^*= Y_t - Y_t^* + a^{t+1}b \left( \sigma_D\left( -\frac{\hat{a}_{t-1}}{\hat{b}_{t-1}} X_t \right) \right) + \frac{Y_t^*}{\abs{Y_t^*}} \abs{b} D$. Following this, we then prove that $\abs{Y_t - Y_t^* + a^{t+1}b \left( \sigma_D\left( -\frac{\hat{a}_{t-1}}{\hat{b}_{t-1}} X_t \right) \right) + \frac{Y_t^*}{\abs{Y_t^*}} \abs{b} D} \leq \abs{Y_t - Y_t^*}$ for both $Y_t^* > (H+2\abs{b})D$ and $Y_t^* < -(H + 2\abs{b})D$ following analogous steps to Case 2, thereby establishing $\abs{Y_{t+1} - Y_{t+1}^*} \leq \abs{Y_t - Y_t^*}$.
    
    \textit{Case 4:} Consider the event $\{T_d = k \} \cap A_4$, and suppose $t \geq k + 1$. The following holds:    
    \begin{align}
        &\abs{Y_{t+1} - Y_{t+1}^*} \\
        &= \abs{Y_t - Y_t^* +  a^{t+1} b \left( \sigma_D\left(-\frac{\hat{a}_{t-1}}{\hat{b}_{t-1}}X_t\right) - \sigma_D\left(-\frac{a}{b}X_t^*\right)  \right)} \\
        & \leq \abs{Y_t - Y_t^*} + \abs{ b} \abs{ \left( \sigma_D\left(-\frac{\hat{a}_{t-1}}{\hat{b}_{t-1}}X_t\right) - \sigma_D\left(-\frac{a}{b}X_t^*\right)  \right)} \\
        & \leq 2 (H+2\abs{b}) D + \abs{ b} 2D \label{eqn:bound-reference-error-after-enter-tube-eq-13}\\
        & = 2 \left( H + 3\abs{b} \right) D
    \end{align}
    where \eqref{eqn:bound-reference-error-after-enter-tube-eq-13} follows from $\abs{Y_t} \leq (H+2\abs{b}) D$, $\abs{Y_t^*} \leq (H+2\abs{b}) D$ and the definition of $\sigma_D(\cdot)$.
    
    Combining the upper bounds from Cases 1-4 with \eqref{eqn:reference-error-case-bound}, it follows that on the event $\{T_d = k \}$, the following holds for all $t \geq k + 1$:
    \begin{equation}
        \abs{Y_{t+1} - Y_{t+1}^*} \leq \max ( \abs{Y_t - Y_t^*}, 2 \left( H + 3\abs{b} \right) D ). \label{eqn:bound-reference-error-after-enter-tube-evolution-bound}
    \end{equation}
    Iteratively applying \eqref{eqn:bound-reference-error-after-enter-tube-evolution-bound}, we find that $\abs{Y_t - Y_t^*} \leq \max \left( \abs{Y_{k+1} - Y_{k+1}^*}, 2 \left( H + 3\abs{b} \right) D  \right)$ for all $t \geq k + 1$.
    We can relate this back to $\abs{X_{t} - X_{t}^*}$ since $\abs{X_{t} - X_{t}^*} = \abs{a^{-t}Y_{t} - a^{-t}Y_{t}^*} = \abs{a^{-t}} \abs{Y_{t} - Y_{t}^*} = \abs{Y_{t} - Y_{t}^*}$ for all $t \in \mathbb{N}_0$. Thus, $\abs{X_{t} - X_{t}^*} \leq \max \left( \abs{X_{k+1} - X_{k+1}^*}, 2 \left( H + 3\abs{b} \right) D  \right)$.
    
    We have found that for all $d \in (0,\min(1,\abs{b}))$ and $k \in \mathbb{N}$, on the event $\{ T_d = k \}$,  $\abs{X_t - X_t^*} \leq (k+1) \abs{b}2D$ for all $t \leq k + 1$, and $\abs{X_{t} - X_{t}^*} \leq \max \left( \abs{X_{k+1} - X_{k+1}^*}, 2 \left( H + 3\abs{b} \right) D  \right)$ for all $t \geq k + 1$. Thus, we conclude that for all $d \in (0,\min(1,\abs{b}))$, $k \in \mathbb{N}$ and $t \in \mathbb{N}_0$, on the event $\{ T_d = k \}$, $\abs{X_{t} - X_{t}^*} \leq \max \left( (k + 1) \abs{b} 2D, 2 \left( H + 3\abs{b} \right) D  \right) = \max \left( (k + 1) \abs{b} 2D, 2 \left(\frac{\abs{b}+d}{1-d} + 3\abs{b} \right) D  \right)$.
\end{proof}

\begin{lemma} \label{lemma:similar-controller-behaviour-after-time-T}
Consider the parameter estimates $((\hat{a}_t,\hat{b}_t))_{t \in \mathbb{N}}$ from \eqref{eqn:parameter-estimate}. Suppose A1-A2 and $a \in \{-1,1\}$ hold on the closed-loop system \eqref{eqn:closed-loop-system-v2} and $x_0 \in \mathbb{R}$. Then, for all $d \in (0,\min(1, \abs{b}))$, $k \in \mathbb{N}$ and $t \geq k + 1$, $\hat{a}_{t-1}/\abs{\hat{a}_{t-1}} = a/\abs{a}$ and $\hat{b}_{t-1}/\abs{\hat{b}_{t-1}} = b/\abs{b}$ hold on the event $\{ T_d = k \}$ (with $T_d$ defined in \eqref{eqn:definition-Td}):
\end{lemma}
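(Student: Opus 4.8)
The plan is to read the result off directly from the definition of $T_d$ in \eqref{eqn:definition-Td} together with an elementary sign argument, using the hypothesis $d \in (0,\min(1,\abs{b}))$ in an essential way. Fix $d \in (0,\min(1,\abs{b}))$, $k \in \mathbb{N}$, $t \geq k+1$, and restrict attention to the event $\{T_d = k\}$. By \eqref{eqn:definition-Td}, on this event $\norm{\hat{\theta}_i - \theta_*}_2 \leq d$ for every $i \geq k$; since $t \geq k+1$ we have $t-1 \geq k \geq 1$, so $\hat{\theta}_{t-1}$ is well-defined and satisfies $\norm{\hat{\theta}_{t-1} - \theta_*}_2 \leq d$.

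Next I would extract the componentwise consequences. Writing $\hat{\theta}_{t-1} = (\hat{a}_{t-1},\hat{b}_{t-1})$ and $\theta_* = (a,b)$, the bound on the Euclidean norm gives $\abs{\hat{a}_{t-1} - a} \leq d$ and $\abs{\hat{b}_{t-1} - b} \leq d$. Since A2 and $a \in \{-1,1\}$ give $\abs{a} = 1 > d$, the number $\hat{a}_{t-1}$ lies in the open interval $(a-d,a+d)$, which is contained in $(0,\infty)$ when $a=1$ and in $(-\infty,0)$ when $a=-1$; in either case $\hat{a}_{t-1}$ is nonzero and shares the sign of $a$, so $\hat{a}_{t-1}/\abs{\hat{a}_{t-1}} = a/\abs{a}$. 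The argument for $\hat{b}_{t-1}$ is identical: since $d < \abs{b}$, the interval $(b-d,b+d)$ excludes $0$, hence $\hat{b}_{t-1}$ is nonzero and $\hat{b}_{t-1}/\abs{\hat{b}_{t-1}} = b/\abs{b}$.

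There is essentially no obstacle here; this lemma is a short but necessary bookkeeping step ensuring that, once $\hat{\theta}_i$ has entered the ball of radius $d < \min(1,\abs{b})$ around $\theta_*$ and stays there, the certainty-equivalence gain $G_t = -\hat{a}_{t-1}/\hat{b}_{t-1}$ has the same sign as the ideal gain $-a/b$ — which is precisely what Cases~1--3 in the proof of Lemma~\ref{lemma:bound-reference-error-after-enter-tube} invoke when comparing $\sigma_D(G_t X_t)$ with $\sigma_D(-(a/b)X_t^*)$. The only subtlety worth flagging explicitly is that the constraint $d \in (0,\min(1,\abs{b}))$ is exactly what makes both sign comparisons hold simultaneously, which is why that range appears in the hypotheses of this lemma and of Lemma~\ref{lemma:bound-reference-error-after-enter-tube}.
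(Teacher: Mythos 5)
Your argument is correct and is essentially the paper's own proof: both deduce from $\{T_d = k\}$ and $t-1 \geq k$ that $\norm{\hat{\theta}_{t-1}-\theta_*}_2 \leq d$, then use $d < 1 = \abs{a}$ and $d < \abs{b}$ to conclude the componentwise sign agreement (the paper phrases the component bound via $\norm{\cdot}_\infty \leq \norm{\cdot}_2$, you phrase it directly, which is immaterial). No gaps.
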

\begin{proof}
Suppose $k \in \mathbb{N}$, and $t \geq k + 1$. On the event $\{ T_d = k \}$, we have $\abs{\hat{a}_{t-1} - a} \leq \norm{\hat{\theta}_{t-1} - \theta_*}_{\infty} \leq \norm{\hat{\theta}_{t-1} - \theta_*}_2 < 1$, and so $\hat{a}_{t-1} \in (a - 1, a + 1)$, which implies $\hat{a}_{t-1}/\abs{\hat{a}_{t-1}} = a/\abs{a}$.

Additionally, on the event $\{ T_d = k \}$ we have $\abs{\hat{b}_{t-1} - b} \leq \norm{\hat{\theta}_{t-1} - \theta_*}_{\infty} \leq \norm{\hat{\theta}_{t-1} - \theta_*}_2 < \abs{b}$, and so $\hat{b}_{t-1} \in (b - \abs{b}, b + \abs{b})$, which implies $\hat{b}_{t-1}/\abs{\hat{b}_{t-1}} = b/\abs{b}$.
\end{proof}

\addtolength{\textheight}{-12cm}

\end{document}